\documentclass[12pt]{article}

\usepackage{amsfonts, amsmath, amssymb}
\usepackage{stmaryrd}
\usepackage{a4wide,color}
\usepackage{tikz}
\usetikzlibrary{positioning}
\usepackage{bm}
\usepackage{hyperref}
\usepackage{float}
\usepackage{comment}

\usepackage{multicol}

\newcommand{\eq}{\leftrightarrow}
\newcommand{\Eq}{\Leftrightarrow}
\newcommand{\imp}{\rightarrow}
\newcommand{\Imp}{\Rightarrow}

\newcommand{\vel}{\vee}

\newcommand{\satisfies}{\vDash}

\newcommand{\T}{\top}
\newcommand{\F}{\bot}

\newcommand{\dia}[1]{\langle #1 \rangle}
\renewcommand{\phi}{\varphi}
\newcommand{\union}{\cup}

\newcommand{\ModelM}{M=(W,\sim,V)}

\newcommand{\Ka}{\hat{K}_a}
\newcommand{\M}{\hat{K}}

\newcommand{\view}{\vartriangleright_a}
\newcommand{\Hist}{\mathcal{H}}
\newcommand{\Words}{\mathcal{W}}
\newcommand{\proj}{{\upharpoonright}}
\newcommand{\tri}{\triangleright}
\newcommand{\prefix}{\sqsubseteq}

\newcommand{\emp}{\mathsf{empty}}

\newcommand{\AAstar}{\textbf{AA}^\ast}

\newcommand{\Nat}{\mathbb N}

\newcommand{\p}{p}
\newcommand{\q}{q}
\newcommand{\np}{\overline{p}}

\newcommand{\nq}{\overline{q}}

\usepackage[thmmarks]{ntheorem}
\theoremsymbol{\ensuremath{\dashv}}
\usepackage{newproof}

\newtheorem{theorem}{Theorem}
\newtheorem{lemma}[theorem]{Lemma}
\newtheorem{proposition}[theorem]{Proposition}

\newtheorem{corollary}[theorem]{Corollary}

\newtheorem{example}[theorem]{Example}
\newtheorem{definition}[theorem]{Definition}

\newcommand{\lang}{\mathcal L}

\newcommand{\A}{\mathcal{A}}

\newcommand{\weg}[1]{}

\usepackage[backend=biber, style=trad-plain, sorting=nyt]{biblatex}
\addbibresource{biblioAA.bib}

\begin{document}

\title{Axiomatisation for an asynchronous epistemic logic with sending and receiving messages}
\author{Philippe Balbiani \and Hans van Ditmarsch \and Clara Lerouvillois\footnote{Affiliations of P. Balbiani and H. van Ditmarsch: IRIT, CNRS---INPT---UT, Toulouse, France. Of C. Lerouvillois: IRIT, CNRS---INPT---UT, Toulouse, France \& IHPST University Paris 1 Panthéon Sorbonne, Paris, France. Clara Lerouvillois is corresponding author, email: clara.lerouvillois@irit.fr}}
\date{}

\maketitle

\begin{abstract}
We investigate a logic for asynchronous announcements wherein the sending of the messages by the environment is separated from their reception by the individual agents. Both come with different modalities. In the logical semantics, formulas are interpreted in a world of a Kripke model but given a history of prior announcements and receptions that already happened. An axiomatisation \textbf{AA} for such a logic has been given in prior work, for the formulas that are valid when interpreted in the Kripke model before any such announcements have taken place.  This axiomatisation is a reduction system wherein one can show that every formula is equivalent to a purely epistemic formula without dynamic modalities for announcements and receptions. We propose a generalisation $\AAstar$ of this axiomatisation, for the formulas that are valid when interpreted in the Kripke model given any history of prior announcements and receptions of announcements. It does not extend the axiomatisation \textbf{AA}, for example it is no longer valid that nobody has received any message. Unlike \textbf{AA}, this axiomatisation $\AAstar$ is infinitary and it is not a reduction system. \\
%
%
%

    \noindent
    \textit{Keywords}: modal logic, dynamic epistemic logic, asynchrony, distributed systems.
\end{abstract}

\section{Introduction}

What does an agent know in a dynamic setting and how does her knowledge evolve through  communication in the absence of a global clock?
Dynamic epistemic logics (DEL) are modal logics of knowledge and change of knowledge. Some studies enforce \emph{synchrony} \cite{Benthem2007merging} for such logics, whereas others accommodate \emph{asynchrony} \cite{degremont2011synchronicity}. 

There are different ways to accommodate asynchrony in epistemic logics. 

Given epistemic actions that are not public, such as private announcements, one way to model asynchrony is that action sequences of different length are indistinguishable for an agent. If we identify action execution with a clock tick, this then represents uncertainty over the time. Such asynchrony is found, for example, in the gossip protocols of \cite{apt2016epistemic} (although not a DEL), wherein agents exchange sets of secrets in peer-to-peer communications (telephone calls) of which other agents may be unaware; and in \cite{Ditmarsch2010Prisoners} modelling the {\em One Hundred Prisoners} epistemic puzzle, wherein agents flip a light switch during an interrogation while other agents remain uncertain about the number of interrogations (if any) that have already taken place.

A different way to enforce asynchrony, more akin to assumptions in distributed computing \cite{lamport1978ordering,kshemkalyani2011distributed}, is to consider sending and receiving messages as separate actions. In DEL this is typically not the case: the epistemic action there, such as the public announcement in {\em public announcement logic} PAL \cite{plazaPAL}, should be seen as instantaneous reception by some or all agents of messages sent by the environment. Such DEL are logics of observation, not of messaging (nor of agency). However, recent work in DEL have proposed logics containing different modalities for sending and receiving messages \cite{knight2019reasoning,AA}. Our work builds on their efforts and results. In \cite{knight2019reasoning,AA}, messages are publicly broadcast and individually received by the agents. Other works also allow partial synchronization wherein a subset of the set of all agents simultaneously receive a sent message, thus bridging the gap between asynchrony and synchrony \cite{fromPALtoAA}.

As an example, let us say a new podcast series has premiered on a topic that interests three friends Alice, Bob and Charlie. Each episode of this podcast---the message, so to speak---is released at irregular intervals on a podcast hosting platform, and these are meant to be listened to in the order in which they were broadcast. If Alice has listened to the first podcast episode by herself, she is uncertain whether Bob and Charlie have also listened to it. In fact, she can imagine different \emph{histories}: one in which Bob and Charlie have listened to the episode before her, or only Bob, or after her, and so on.

We propose structures in which the agents not only may have epistemic uncertainty over different worlds, but also temporal uncertainty over such worlds, which is represented by a different, orthogonal, binary relation. The second kind of uncertainty is used to reason over different histories of epistemic actions (of possibly different length). 

We make a number of further assumptions in our knowledge representation wherein we follow the approach in \cite{AA}.
First, as said, agents receive messages in the order in which they were sent. In this we follow the classic FIFO scheme of communication in asynchronous systems \cite{brand1983communicating}.
Second, when envisaging alternative histories of past actions, agents only consider the messages they have already received. For example, Alice cannot assume that Bob has listened to the third episode of the podcast if she has not listened to it yet. We assume that she has no knowledge that any further episodes will ever be released.
This  assumption fits well our setting wherein sending and receiving messages that are announcements reduce agents' uncertainty about unchanging facts---contrary to communication in distributed systems wherein messages may also change the value of facts and with that agents' uncertainty over such factual change. See also Section \ref{sectionComparison} for a detailed discussion.
Third, we assume that announcements are \emph{truthful}, meaning they are true when broadcast, as in PAL. Related to that, we need a notion of \emph{executability} of histories. This notion ensures that agents only consider possible histories that consist of announcements that, given a state of the system, can be truthfully broadcast and received there.

Intuitively, in our approach, an agent knows a formula if and only if she can only imagine states and histories executable there that satisfy the formula, taking into account that the messages she received were true when sent, while ignoring that other agents may have received more announcements than herself.

In \cite{AA}, and the related \cite{fromPALtoAA,quantifyingAA}, two notions of validity are defined: a formula is {\em $\epsilon$-valid} (\emph{valid}) if it is true in every state of every epistemic model (Kripke model), given that no sending or receiving actions have yet been executed. In other words, these formulas are always true given the empty history $\epsilon$. This carries no intuition of temporal uncertainty, but assumes a commonly known origin to start the interpretation. Then, a formula is \emph{$\ast$-valid} (\emph{always-valid}) if it is true in every state of every epistemic model, and given any prior history of actions (sending and receiving). The $\epsilon$-validities have been axiomatised in \cite{AA}.

The present work extends \cite{AA} by proposing an axiomatisation {\bf AA}$^\ast$ for always-validities ($\ast$-validities). We also slightly modify their semantics for asynchronous announcements so that formulas can only be true in (state, history) pairs such that the history is executable in that state. This does not modify the set of validities and therefore does not affect the axiomatisation. Unlike the axiomatisation {\bf AA} of \cite{AA}, that is a rewrite system reducing every formula with dynamic modalities for sending and receiving to one without, our novel axiomatisation {\bf AA}$^\ast$ is an infinitary axiomatisation from which dynamic modalities cannot be eliminated. An extensive final section compares our results to other works in the area.


Section \ref{sectionPresentationAA} presents the logic of \emph{asynchronous announcements}. Section \ref{sectionAxiomatisation} proposes an axiomatisation $\AAstar$ for always-validities, which completeness is shown in Section \ref{sectionCompleteness}. Section \ref{sectionAxiomOneAgent} discusses why we do not have an axiomatisation for the single-agent case and Section \ref{sectionComparison} compares our work to other approaches on asynchronous communication and three-valued logics.


\section{The logic of asynchronous announcements} \label{sectionPresentationAA}

In this section we present the language and the semantics for asynchronous announcements. The presentation is based on \cite{AA}, except for the definition of the satisfiability and executability relations (see Definition \ref{defSemantics}).

\subsection{Syntax} \label{sectionSyntax}

We first define the language of asynchronous announcements and then the notions of \emph{word} and \emph{history} that we use to represent sequences of sending and receiving events.

\begin{definition}[Language $\lang$]
Let $P$ be a countable set of atoms (denoted p, q, etc.) and $\A$ be a finite set of agents (denoted a, b, etc.). The language $\lang$ of asynchronous announcement logic is defined as follows.
\begin{displaymath}
\phi ::= p\ |\ \T \ |\ \lnot\phi\ |\ (\phi \lor \phi) \ |\ \Ka \phi\ |\ \langle \phi \rangle \phi\ |\ \langle a \rangle \phi\ 
\end{displaymath}
\end{definition}

We follow the standard rules for omission of the parentheses. Intuitively, $\langle \phi \rangle \psi$ means that $\phi$ is announced and, after that, $\psi$ holds. Similarly, $\langle a \rangle \phi$ means that after agent $a$ effectively receives a new message (the `next one in the queue'), $\phi$ holds. Without modalities $\langle a \rangle$ we get the language $\lang_{PAL}$ of public announcement logic; and without $\langle \phi \rangle$ we get the language $\lang_{ml}$ of multi-agent modal logic.

As usual, we define $\F := \lnot \T$, $\phi \land \psi := \lnot (\lnot \phi \lor \lnot \psi)$, $\phi \imp \psi := \lnot \phi \lor \psi$ and $\phi \eq \psi := (\phi \imp \psi) \land (\psi \imp \phi)$. The dual of $\Ka$ is defined by abbreviation as $K_a \phi := \lnot \Ka \lnot \phi$.
We also define by abbreviation the dual of dynamic modalities as $[\phi]\psi := \lnot \langle \phi \rangle \lnot \psi$ and $[a]\phi := \lnot \langle a \rangle \lnot \phi$.

Since we want to model \emph{asynchronous} communicative situations, we need to define further notions to distinguish different possible orders of message reception by the agents: the notions of \emph{word} and \emph{history}.\\

Consider $\A \cup \lang$ as an alphabet with agents and formulas as letters. Words $\alpha, \beta,...$ over $\A \cup \lang$ are finite sequences of symbols over $\A \cup \lang$. The empty word is denoted $\epsilon$.
Let $\Words := (\A \cup \lang)^\ast$ be the set of all words.

For clarity, we add dots to separate letters within a word, \emph{e.g.} $p.\lnot K_a p.a.q.a$. When there is no ambiguity, however, we omit the point, particularly when abbreviations are used, as in $\alpha a$ or $\alpha \phi$.

Given a word $\alpha$, we use the following notations for intuitive notions that can be easily defined by induction: $|\alpha|$ is its length; $|\alpha|_a$ is the number of occurrences of $a$ in $\alpha$; $|\alpha|_!$ is the number of its formula occurrences and $|\alpha|_{!a}$ the number of formula occurrences received by agent $a$. In the single-agent case, \emph{i.e.} when $\A = \lbrace a \rbrace$, it is clear that $|\alpha| = |\alpha|_! + |\alpha|_a$. Otherwise, in the multi-agent case, \emph{i.e.} when $\A = \lbrace a_1, \cdots, a_n \rbrace$, $|\alpha| = |\alpha|_{a_1} + \cdots + |\alpha|_{a_n} + |\alpha|_!$. However, in general $|\alpha|_{!a}\neq |\alpha|_a$: for example, if $\alpha=a.p.q.a.r.a$, then $|\alpha|_a=3$ but $|\alpha|_{!a}=2$.
We further define $\alpha \proj_!$ as the projection of $\alpha$ to $\lang$---hence $\alpha\proj_!$ is the word obtained from $\alpha$ by retaining occurrences of formulas only--- and $\alpha \proj_{!a}$ as the restriction of $\alpha \proj_!$ to the first $|\alpha|_{!a}$ occurrences of formulas so $\alpha \proj_{!a}$ is the restriction of $\alpha \proj_{!}$ to the formulas that agent $a$ has read. Consider for instance the word $\alpha=p.q.a.\lnot K_b q.b.a$. Then $\alpha\proj_! = p.q.\lnot K_b q$ and $\alpha\proj_{!a}=p.q$ whereas $\alpha\proj_{!b}=p$.
Finally, given a word $\alpha$ and $n \in \mathbb{N}$, $\alpha^n$ denotes a concatenation of $n$ copies of $\alpha$. For example, if $\alpha = p.q.a$ then $\alpha^2 = p.q.a.p.q.a$.

Note that a non-empty word $\alpha$ can be decomposed into $\alpha'  \mu$ or $\mu \alpha'$ for $\mu$ a symbol in $\A \cup \lang$. In future proofs, if proceeding by induction on a word $\alpha$, we will use one or the other decomposition.

For all words $\alpha$ over $\A \union \lang$, the modality $\langle \alpha \rangle$ is inductively defined by $\langle \epsilon \rangle \phi  :=  \phi$,  $\langle \alpha  a \rangle \phi  :=  \langle \alpha \rangle \langle a \rangle \phi$ and $\langle \alpha \psi \rangle  :=  \langle \alpha \rangle \langle \psi \rangle \phi$. Its dual is defined by abbreviation as $ [\alpha]\phi := \lnot \langle \alpha \rangle \lnot \phi$.

\begin{definition}[Prefix]
A word $\beta$ is a \emph{prefix} of a word $\alpha$, denoted $\alpha \prefix \beta$, if $\beta$ is an initial sequence of $\alpha$. Obviously, $\alpha \prefix \alpha$ and, if $\beta \prefix \alpha$, then for all $a \in \A$ and $\phi \in \lang$, $\beta \prefix \alpha a$ and $\beta \prefix \alpha\phi$.
\end{definition}

We assume that agents read announcements in the order in which they were sent. Words wherein that is the case are called \emph{histories}. Therefore, a history is a word such that, for each agent, any prefix contains more sent messages than reception modalities. Formally:

\begin{definition}[History]\label{defHistory}
A word $\alpha$ over $\A \cup \lang$ is a history if and only if $|\beta|_a \leq |\beta|_!$ for all agents $a \in \A$ and for all prefixes $\beta \prefix \alpha$.
We call $\Hist$ the set of histories over $\A \cup \lang$.
\end{definition}
Therefore, $p. q. a. a$ and $p. a. q$ are histories, but $p. a. a. q$ and $p. q. a. a. a$ are \emph{not}.
Obviously, if $\alpha$ is a history, then $|\alpha|_{!a}= |\alpha|_a$ so $\alpha \proj_{!a}$ is simply the restriction of $\alpha \proj_!$ to the first $|\alpha|_a$ formula occurrences.

\begin{lemma} \label{lemmaPrefixHistory}
Let $\alpha$ be a word over $\A \cup \lang$ and $\beta \prefix \alpha$ a prefix of $\alpha$. If $\alpha$ is a history, then $\beta$ is also a history.
\end{lemma}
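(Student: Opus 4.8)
The plan is to reduce the statement to the single fact that ``a prefix of a prefix is a prefix'', so that the defining inequalities of Definition \ref{defHistory} for $\alpha$ are simply inherited by $\beta$. First I would record the transitivity of the prefix relation: if $\gamma$ is an initial sequence of $\beta$ and $\beta$ is an initial sequence of $\alpha$, then $\gamma$ is an initial sequence of $\alpha$. This is immediate from the definition of prefix, and it says precisely that every prefix of $\beta$ is also a prefix of $\alpha$.

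With that in hand, the argument is one line. Assume $\alpha$ is a history, so by Definition \ref{defHistory} we have $|\gamma|_a \le |\gamma|_!$ for every agent $a \in \A$ and every prefix $\gamma$ of $\alpha$. Now take an arbitrary prefix $\gamma$ of $\beta$ and an arbitrary agent $a \in \A$. Since $\beta$ is a prefix of $\alpha$, the previous paragraph gives that $\gamma$ is a prefix of $\alpha$, whence $|\gamma|_a \le |\gamma|_!$. As $\gamma$ and $a$ were arbitrary, $\beta$ meets the defining condition of Definition \ref{defHistory}, i.e. $\beta$ is a history.

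If a syntax-directed proof is preferred (to match the inductive style announced for later proofs), the same conclusion follows by induction on $|\alpha| - |\beta|$: the base case $\alpha = \beta$ is trivial, and for the inductive step it suffices to handle the one-symbol case, writing $\alpha = \alpha'\mu$ with $\mu \in \A \cup \lang_{aa}$ and $\beta \prefix \alpha'$; since every prefix of $\alpha'$ is a prefix of $\alpha$, $\alpha'$ is a history, and the induction hypothesis applied to $\beta \prefix \alpha'$ finishes the step. There is no genuine obstacle here; the only point requiring a little care is the direction of the prefix notation and the bookkeeping of prefixes-of-prefixes, after which the lemma is an immediate consequence of the definition of history.
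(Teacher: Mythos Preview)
Your proposal is correct and follows essentially the same approach as the paper's proof: both use transitivity of the prefix relation to conclude that every prefix $\gamma$ of $\beta$ is also a prefix of $\alpha$, so the inequality $|\gamma|_a \leq |\gamma|_!$ required by Definition~\ref{defHistory} is inherited directly. Your additional inductive variant is superfluous here but harmless.
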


\begin{proof}
Let $\alpha, \beta$ be two words over $\A \cup \lang$ such that $\alpha \in \Hist$ and $\beta \prefix \alpha$. Let $\gamma \prefix \beta$ be a prefix of $\beta$. Then $\gamma$ is also a prefix of $\alpha$. Hence, by Definition \ref{defHistory}, $|\gamma|_a \leq |\gamma|_!$ for all agents $a \in \A$. Since this holds for any arbitrary prefix of $\beta$, $\beta$ is a history.
\end{proof}

To define an appropriate semantics for knowledge through asynchronous announcements, we further need the following \emph{view relation} between words and histories.

\begin{definition}[View relation]\label{defViewRelation}
    For every agent $a\in \A$, the view relation $\view$ is defined on $\Words \times \Hist$ as follows: $\alpha \view \beta$ if and only if $\beta\proj_! = \beta\proj_{!a} = \alpha\proj_{!a}$.
    For any word $\alpha$ and agent $a$ we define the set of histories \textbf{view}$_a(\alpha):=\lbrace \beta\in \Hist \ |\ \alpha \view \beta \rbrace$.
\end{definition}


Roughly, for a given actual history $\alpha$, \textbf{view$_a$}$(\alpha)$ is the set of histories that agent $a$ considers possible, in which all the messages that have been broadcast are precisely those agent $a$ has currently received. It will then be natural to define knowledge depending on such uncertainty over histories (see Section \ref{sectionSemantics}).

\begin{example}
Let us consider only two agents: $\A = \lbrace a, b \rbrace$. If the actual history is $\alpha = p.a$ then agent $a$ may imagine that agent $b$ has also received the announcement (either before or after $a$). Hence, \textbf{view}$_a(\alpha) = \lbrace p.a, p.a.b, p.b.a \rbrace$. However, agent $b$ has no idea that a message has been sent, so \textbf{view}$_b(\alpha) = \lbrace \epsilon \rbrace$.
\end{example}

We end this section by stating some interesting properties of the view relation. We recall that a binary relation $R$ on a given set $X$ is Euclidean if and only if for all $x,y,z \in X$, whenever $xRy$ and $xRz$, $yRz$ also holds; and $R$ is post-reflexive if and only if for all $x,y \in X$, if $xRy$ then $yRy$.

\begin{proposition}
The view relation is serial, transitive, Euclidean and post-reflexive.
Moreover, for all words $\alpha$, the set $\textbf{view}(\alpha)$ is finite.
\end{proposition}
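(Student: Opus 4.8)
The plan is to unfold Definition~\ref{defViewRelation} and notice that $\alpha\view\beta$ simply says, on the one hand, that in $\beta$ agent $a$ has read every announcement, i.e.\ $\beta\proj_! = \beta\proj_{!a}$ (with $\beta\in\Hist$), and on the other hand that this common word equals what $a$ has read in $\alpha$, i.e.\ $\beta\proj_{!a} = \alpha\proj_{!a}$. Since the defining clause forces everything to equal the single ``target'' $\alpha\proj_{!a}$, transitivity, the Euclidean property and post-reflexivity should follow by pure equational chasing; seriality will need an explicit witness history, and finiteness will need a length bound extracted from Definition~\ref{defHistory}.

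For transitivity I would take $\alpha\view\beta$ and $\beta\view\gamma$, read off $\beta\proj_!=\beta\proj_{!a}=\alpha\proj_{!a}$ and $\gamma\proj_!=\gamma\proj_{!a}=\beta\proj_{!a}$, and conclude $\gamma\proj_!=\gamma\proj_{!a}=\alpha\proj_{!a}$ together with $\gamma\in\Hist$, which is $\alpha\view\gamma$. For the Euclidean property, $\alpha\view\beta$ and $\alpha\view\gamma$ give $\gamma\proj_!=\gamma\proj_{!a}=\alpha\proj_{!a}=\beta\proj_{!a}$, hence $\beta\view\gamma$. For post-reflexivity, the hypothesis $\alpha\view\beta$ already delivers $\beta\in\Hist$ and $\beta\proj_!=\beta\proj_{!a}$, which is exactly $\beta\view\beta$. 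For seriality, given an arbitrary word $\alpha$ I would write $\alpha\proj_{!a}=\phi_1.\cdots.\phi_k$ and set $\beta:=\phi_1.\cdots.\phi_k.a^k$ (so $\beta=\epsilon$ when $k=0$); placing all the reception letters after all the announcements makes it immediate from Definition~\ref{defHistory} that $\beta$ is a history, one has $\beta\proj_!=\phi_1.\cdots.\phi_k=\alpha\proj_{!a}$, and $|\beta|_a=k=|\beta|_!$ gives $\beta\proj_{!a}=\beta\proj_!$; hence $\alpha\view\beta$ and $\textbf{view}(\alpha)\neq\emptyset$.

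For finiteness I would fix $\alpha$, put $k:=|\alpha|_{!a}$, and note that any $\beta\in\textbf{view}(\alpha)$ is a history with $\beta\proj_!=\alpha\proj_{!a}$, so $|\beta|_!=k$; applying the history condition of Definition~\ref{defHistory} to the prefix $\beta\prefix\beta$ gives $|\beta|_b\leq|\beta|_!=k$ for each of the finitely many agents $b\in\A$, whence $|\beta|=|\beta|_!+\sum_{b\in\A}|\beta|_b\leq(|\A|+1)\,k$. So the members of $\textbf{view}(\alpha)$ have bounded length, and since in each such $\beta$ the $k$ formula positions must be filled, in order, by the fixed word $\alpha\proj_{!a}$ while the remaining positions carry letters from the finite set $\A$, only finitely many words qualify. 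I expect no real obstacle in any of this; the one spot to handle with care is the seriality witness, where the reception letters have to be arranged so that no prefix ever has more receptions than announcements — which is why they are placed after all the announcements.
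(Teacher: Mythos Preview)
Your argument is correct in every part: the equational reasoning for transitivity, the Euclidean property and post-reflexivity is exactly right, your seriality witness $\beta=\phi_1.\cdots.\phi_k.a^k$ works (and your caution about prefixes is well placed --- note also that for agents $b\neq a$ the condition $|\gamma|_b\le|\gamma|_!$ is trivial since $b$ does not occur), and the length bound $(|\A|+1)k$ together with the observation that the formula positions are forced gives finiteness. The paper itself states this proposition without proof, presumably regarding it as a routine unfolding of Definition~\ref{defViewRelation}; your write-up is precisely the natural verification one would supply.
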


\noindent Note, however, that the view relation is neither reflexive nor symmetric. Indeed, $p. a. q \not\view p. a. q$ and $p. a. q \view p. a$ but $p. a \not\view p. a. q$.

\subsection{Structures}

In this section, we present the structures on which we interpret the formulas.

\begin{definition}[Epistemic model]
An epistemic model is a triple $\ModelM$ where
\begin{itemize}
\item $W \neq \emptyset$ is a set of states
\item $\sim: \A \longrightarrow \mathcal{P}(W^2)$ assigns to each agent $a \in \A$ an accessibility relation $\sim_a$ on W
\item $V: P \longrightarrow \mathcal{P}(W)$ is a valuation, assigning to each atom $p \in P$ a set $V(p) \subseteq W$
\end{itemize}
\end{definition}

Although it is possible to work with arbitrary accessibility relation, here we only consider equivalence relations, \emph{i.e.} for all agents $a$, $\sim_a$ is reflexive, transitive and symmetric, or, equivalently, reflexive and Euclidean.

\begin{example}\label{example1} \hfill
\begin{multicols}{2}
\noindent This figure models a system where two agents, Alice and Bob, are only aware of their local variable, respectively $p$ and $q$. Instead of naming the states we show their valuation of $p$ and $q$. We write $\np$ for $\lnot p$ and omit reflexive arrows for clarity.
\begin{center}
\begin{tikzpicture}
\node (00) at (0,0) {$\np\nq$};
\node (01) at (0,2) {$\np\q$};
\node (10) at (1.5,.7) {$\p\nq$};
\node (11) at (1.5,2.7) {$\p\q$};
\draw (00) -- node[fill=white,inner sep=1pt] {$b$} (10);
\draw (01) -- node[fill=white,inner sep=1pt] {$b$} (11);
\draw (00) -- node[fill=white,inner sep=1pt] {$a$} (01);
\draw (10) -- node[fill=white,inner sep=1pt] {$a$} (11);
\end{tikzpicture}
\end{center}
\end{multicols}
\end{example}

\subsection{Semantics}\label{sectionSemantics}

We now present the semantics for asynchronous announcements, define two notions of validities and state some important properties of the semantics.

To define the semantics, we need a well-founded order $\ll$ between pairs $(\alpha,\phi)$ of word and formula. This order uses two auxiliary functions: $\Vert \cdot \Vert$ represents the \emph{size} of formulas or words, and $deg(\cdot)$ displays the \emph{modal depth} of formulas. Both are defined in the following.

For all $\phi \in \lang$, $\Vert \phi \Vert$ is inductively defined by
\begin{align*}
\Vert p \Vert &:= 2 & \Vert \phi \lor \psi \Vert &:= \Vert \phi \Vert + \Vert \psi \Vert 
	& \Vert \langle a \rangle \phi \Vert &:= \Vert \phi \Vert +2 \\
\Vert \T \Vert &:= 1 & \Vert \Ka \phi \Vert &:= \Vert \phi \Vert + 1
	& \Vert \langle \psi \rangle \phi\Vert &:= 2\Vert \psi \Vert + \Vert \phi \Vert \\
\Vert \lnot \phi \Vert &:= \Vert \phi \Vert + 1 & &
\end{align*}

\noindent and for all words $\alpha$ over $\A \union \lang$, $\Vert \alpha \Vert$ is defined by
\begin{align*}
\Vert \epsilon \Vert &:= 0 & \Vert \alpha \Vert &:= \sum_{a\in \A} |\alpha|_a + \sum_{\phi \in \alpha\proj_!} \Vert\phi\Vert
\end{align*}

\noindent Then, for all formulas $\phi \in \lang$, $deg(\phi)$ is inductively defined as follows:
\begin{align*}
deg(p) &:= 0
	& deg(\Ka \phi) &:= deg(\phi) + 1\\
deg(\T) &:= 0
	& deg(\langle a \rangle \phi) &:= deg(\phi)  \\
deg(\lnot \phi) &:= deg(\phi) 
	& deg(\langle \psi \rangle \phi) &:= deg(\psi) + deg(\phi) \\
deg(\phi \lor \psi) &:= max( deg(\phi),deg(\psi))
&
\end{align*}

\noindent Also, given a pair $(\alpha, \phi) \in \Words\times \lang$, we define:
\begin{displaymath}
deg(\alpha, \phi) := deg(\langle \alpha \rangle \phi)
\end{displaymath}

\noindent Finally, the well-founded order $\ll$ is defined between pairs $(\alpha,\phi)\in \Words\times \lang$:
\begin{align*}
(\alpha, \phi) \ll (\beta,\psi) \text{ iff }  &deg(\alpha, \phi) < deg(\beta, \psi)\\
\text{ or } &deg(\alpha, \phi) = deg(\beta, \psi) \text{ and } \Vert \alpha \Vert + \Vert \phi \Vert < \Vert \beta \Vert + \Vert \psi \Vert\\
\end{align*}

Useful results about this order can be found in \cite{AA}.
Here, we only show the following property:

\begin{lemma}\label{lemmaOrder}
Let $\alpha$ be a word and $\phi \in \lang$ be a formula. Then:
\begin{displaymath}
deg(\langle \alpha \rangle \phi) = deg(\phi) + \sum_{\psi \in \alpha} deg(\psi)
\end{displaymath}
\end{lemma}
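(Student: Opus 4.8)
The plan is to proceed by induction on the length of the word $\alpha$, using the right-decomposition $\alpha = \alpha'\mu$ (with $\mu \in \A \cup \lang_{aa}$ a single letter), since this is exactly the shape of the inductive clauses defining $\langle\alpha\rangle$. Crucially, the induction hypothesis must be stated uniformly for all formulas $\phi \in \lang_{aa}$, because in the inductive step it will be applied to $\alpha'$ together with a modified formula ($\langle\mu\rangle\phi$) rather than with $\phi$ itself. I would also note at the outset that the sum $\sum_{\psi\in\alpha}deg(\psi)$ is to be read as ranging over the formula occurrences of $\alpha$ (equivalently, over $\alpha\proj_!$), so that letters that happen to be agents simply contribute nothing.

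For the base case $\alpha = \epsilon$, the definition gives $\langle\epsilon\rangle\phi = \phi$, hence $deg(\langle\epsilon\rangle\phi) = deg(\phi)$, and the empty sum is $0$; so the identity holds for every $\phi$.

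For the inductive step, write $\alpha = \alpha'\mu$ and assume the identity holds for $\alpha'$ and every formula. If $\mu = a$ is an agent, then $\langle\alpha'a\rangle\phi = \langle\alpha'\rangle\langle a\rangle\phi$, so applying the induction hypothesis to $\alpha'$ and the formula $\langle a\rangle\phi$ and then using $deg(\langle a\rangle\phi) = deg(\phi)$ gives $deg(\langle\alpha\rangle\phi) = deg(\phi) + \sum_{\psi\in\alpha'}deg(\psi)$; since appending the agent $a$ adds no formula occurrence, this equals $deg(\phi) + \sum_{\psi\in\alpha}deg(\psi)$. If instead $\mu = \chi$ is a formula, then $\langle\alpha'\chi\rangle\phi = \langle\alpha'\rangle\langle\chi\rangle\phi$, and the induction hypothesis applied to $\alpha'$ and $\langle\chi\rangle\phi$, together with $deg(\langle\chi\rangle\phi) = deg(\chi) + deg(\phi)$, yields $deg(\langle\alpha\rangle\phi) = deg(\chi) + deg(\phi) + \sum_{\psi\in\alpha'}deg(\psi) = deg(\phi) + \sum_{\psi\in\alpha}deg(\psi)$. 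This closes the induction.

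There is no real obstacle here; the only points requiring a little care are fixing the reading of the indexed sum and keeping the induction hypothesis universally quantified over the formula argument, so that it can legitimately be instantiated at $\langle\mu\rangle\phi$ in the step. Had one preferred the left-decomposition $\alpha = \mu\alpha'$ instead, one would first need the auxiliary identity $\langle\mu\alpha'\rangle\phi = \langle\mu\rangle\langle\alpha'\rangle\phi$ (itself a trivial induction), after which the argument is symmetric; the right-decomposition avoids even that small detour.
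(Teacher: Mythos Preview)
Your proof is correct and follows essentially the same induction on the word as the paper. The only variation is the direction of decomposition: the paper peels off the \emph{leftmost} letter (cases $a\alpha$ and $\chi\alpha$), which lets it keep $\phi$ fixed and apply the hypothesis directly to $deg(\langle\alpha\rangle\phi)$, at the cost of tacitly using $\langle\mu\alpha\rangle\phi = \langle\mu\rangle\langle\alpha\rangle\phi$; you peel off the \emph{rightmost} letter, which matches the inductive definition of $\langle\alpha\rangle$ exactly but, as you correctly flag, forces the hypothesis to be quantified over all formulas so it can be instantiated at $\langle\mu\rangle\phi$. Neither choice is problematic, and the computations are otherwise identical.
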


\begin{proof} We show it by induction on $\Vert \alpha \Vert$. Let $\alpha \in \Words$ be a word such that for all $\beta \in \Words$, if $\Vert \beta \Vert < \Vert \alpha \Vert$, then $deg(\langle \beta \rangle \phi) = deg(\phi) + \sum_{\psi \in \beta} deg(\psi)$.
We now show that this property also holds for $\alpha$. We distinguish three cases:
\begin{itemize}
\item Case $\alpha = \epsilon$. Obviously, for all formulas $\phi \in \lang$, $deg(\langle \epsilon \rangle \phi) = deg(\phi)$.
\item Case $a \alpha$. We have the following:
\begin{align*}
deg(\langle a \alpha \rangle \phi)  &= deg(\langle a \rangle \langle \alpha \rangle \phi) \\
									&= deg(\langle \alpha \rangle \phi) \hspace{6.4em} \text{by definition} \\
                               		&= deg(\phi) +  \sum_{\psi \in \alpha} deg(\psi) \qquad \text{by induction hypothesis, because } \Vert\alpha\Vert < \Vert a\alpha \Vert \\
                               	  	&= deg(\phi) + \sum_{\psi \in a \alpha} deg(\psi) \qquad \text{because } (\alpha a)\proj_! = \alpha\proj_!\\
\end{align*}
\item Case $\chi \alpha$. As above, we have:
\begin{align*}
deg(\langle \chi \alpha \rangle \phi) &= deg(\langle \chi \rangle \langle \alpha \rangle \phi)\\
								 &= deg(\chi) + deg(\langle \alpha \rangle \phi) \hspace{4.7em} \text{by definition} \\
                                 &= deg(\chi) + deg(\phi) + \sum_{\psi \in \alpha} deg(\psi) \ \text{by induction hypothesis, because } \Vert\alpha \Vert < \Vert \chi\alpha \Vert\\
                                 &= deg(\phi) + deg(\chi) + \sum_{\psi \in \alpha} deg(\psi) \\
                                 &= deg(\phi) + \sum_{\psi \in \chi \alpha} deg(\psi)
\end{align*} 
\end{itemize}
\end{proof}

We can now define the semantics for asynchronous announcements.
To do so, we also introduce an \emph{executability relation} $\bowtie$ that is used to express the fact that a given history can indeed be executed in a given state.

\begin{definition}[Semantics]\label{defSemantics}
Let $\ModelM$ be an epistemic model. We simultaneously define the relation $\bowtie$ between states $s \in W $ and words $\alpha \in \Words$ and the relation $\satisfies$ between pairs $(s,\alpha)$ of states and words, and formulas $\phi \in \lang$ by $\ll$-induction:

\begin{align*}
&s \bowtie \epsilon      & &\text{always} \\
&s \bowtie \alpha a      & &\text{iff} & &s \bowtie \alpha \text{ and  } |\alpha|_a < |\alpha|_!\\
&s \bowtie \alpha\phi    & &\text{iff} & &s\bowtie \alpha \text{ and } s,\alpha\satisfies\phi  \\ \\
&s,\alpha \satisfies p                            & &\text{iff} & &s \bowtie \alpha \text{ and } s \in V(p) \\
&s,\alpha \satisfies \T                           & &\text{iff} & &s \bowtie \alpha \\
&s,\alpha \satisfies \lnot \phi                   & &\text{iff} & &s\bowtie \alpha \text{ and } s,\alpha \nvDash \phi \\
&s,\alpha \satisfies \phi \lor \psi               & &\text{iff} & &s,\alpha \satisfies \phi \text{ or } s,\alpha \satisfies \psi \\
&s, \alpha \satisfies \Ka \phi                    & &\text{iff}  & &s \bowtie \alpha \text{ and } t,\beta \satisfies \phi \text{ for some } (t, \beta) \in W \times\Hist \\
& & & & &\text{ such that } s\sim_at, \alpha \view \beta \text{ and } t\bowtie \beta \\
&s,\alpha\satisfies\langle a\rangle\phi           & &\text{iff} & &|\alpha|_a < |\alpha|_! \text{ and } s,\alpha a\satisfies\phi\\
&s,\alpha\satisfies\langle\phi\rangle\psi         & &\text{iff} & &s,\alpha\satisfies\phi \text{ and } s,\alpha \phi\satisfies\psi \\
\end{align*}
\end{definition}

The semantics for the dual modalities is obtained as usual.

The relation $\satisfies$ is a \emph{satisfaction relation} and $\bowtie$ is an \emph{executability relation}: if $s\bowtie\alpha$, we say that $\alpha$ is executable in $s$.

For the sake of precision, we show that the semantics is well defined before proceeding.

\begin{proposition}
The relations $\bowtie$ and $\satisfies$ are well-defined.
\end{proposition}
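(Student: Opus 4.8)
The statement to prove is that the relations $\bowtie$ and $\satisfies$ are well-defined. The plan is to verify that the simultaneous recursion in Definition \ref{defSemantics} is a legitimate definition by $\ll$-induction: every clause defining $s \bowtie \alpha$ or $s,\alpha \satisfies \phi$ refers only to instances $s' \bowtie \alpha'$ or $s',\alpha' \satisfies \phi'$ with $(\alpha',\phi') \ll (\alpha,\phi)$ (using a fixed convention, say $\phi = \T$, to read the $\bowtie$-clauses as living in the same ordered set). Since $\ll$ is a well-founded order on $\Words \times \lang_{aa}$, this suffices; the only real work is the case-by-case check that each recursive call strictly decreases $\ll$.

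First I would fix the bookkeeping: associate to the predicate $s \bowtie \alpha$ the pair $(\alpha, \T)$, and to $s,\alpha \satisfies \phi$ the pair $(\alpha,\phi)$; then go through each of the eleven clauses. For the $\bowtie$-clauses, $s \bowtie \alpha a$ calls $s \bowtie \alpha$, i.e. $(\alpha,\T) \ll (\alpha a, \T)$, and $s \bowtie \alpha\phi$ calls $s \bowtie \alpha$ and $s,\alpha \satisfies \phi$, i.e. $(\alpha,\T)$ and $(\alpha,\phi)$ against $(\alpha\phi,\T)$; these follow from $deg$ and $\Vert\cdot\Vert$ computations (dropping a letter strictly decreases $\Vert\alpha\Vert$ and cannot increase $deg$, by Lemma \ref{lemmaOrder}, and for $(\alpha,\phi)$ against $(\alpha\phi,\T)$ one checks $deg(\langle\alpha\rangle\phi) \le deg(\langle\alpha\phi\rangle\T)$ and, in case of equality, $\Vert\alpha\Vert + \Vert\phi\Vert < \Vert\alpha\phi\Vert + \Vert\T\Vert = \Vert\alpha\Vert + 2\Vert\phi\Vert + 1$, which holds since $\Vert\phi\Vert \ge 1$). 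For the propositional and epistemic $\satisfies$-clauses, $(\alpha,\T),(\alpha,\phi) \ll (\alpha,\lnot\phi)$, $(\alpha,\phi),(\alpha,\psi) \ll (\alpha,\phi\lor\psi)$, etc., are immediate from the definitions of $deg$ and $\Vert\cdot\Vert$ on the propositional connectives. The epistemic clause $s,\alpha \satisfies \Ka\phi$ invokes $s \bowtie \alpha$ (pair $(\alpha,\T)$) and $t,\beta \satisfies \phi$, $t \bowtie \beta$ for various $\beta$ with $\alpha \view \beta$; here $\Vert\beta\Vert$ need not be smaller than $\Vert\alpha\Vert$, but $deg(\langle\beta\rangle\phi) = deg(\phi) + \sum_{\psi\in\beta} deg(\psi) < deg(\phi) + 1 + \sum_{\psi\in\alpha} deg(\psi)$ once one observes that $\beta\proj_! = \beta\proj_{!a} = \alpha\proj_{!a}$ is a sublist of $\alpha\proj_!$, so $\sum_{\psi\in\beta} deg(\psi) \le \sum_{\psi\in\alpha} deg(\psi)$, and hence $(\beta,\phi),(\alpha,\T) \ll (\alpha,\Ka\phi)$ because $deg(\langle\alpha\rangle\Ka\phi) = deg(\Ka\phi) + \sum_{\psi\in\alpha} deg(\psi) = deg(\phi) + 1 + \sum_{\psi\in\alpha}deg(\psi)$. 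Finally the dynamic clauses: $s,\alpha \satisfies \langle a\rangle\phi$ calls $s,\alpha a \satisfies \phi$, and one checks $(\alpha a, \phi) \ll (\alpha, \langle a\rangle\phi)$ from $deg(\langle\alpha a\rangle\phi) = deg(\langle\alpha\rangle\langle a\rangle\phi) = deg(\langle a\rangle\phi) + \sum_{\psi\in\alpha}deg(\psi)$ and $\Vert\alpha a\Vert + \Vert\phi\Vert = \Vert\alpha\Vert + 1 + \Vert\phi\Vert < \Vert\alpha\Vert + \Vert\phi\Vert + 2 = \Vert\alpha\Vert + \Vert\langle a\rangle\phi\Vert$ (degrees are equal, sizes strictly decrease); similarly $s,\alpha \satisfies \langle\phi\rangle\psi$ calls $s,\alpha \satisfies \phi$ and $s,\alpha\phi \satisfies \psi$, and $(\alpha,\phi), (\alpha\phi,\psi) \ll (\alpha,\langle\phi\rangle\psi)$ follows since $deg(\langle\phi\rangle\psi) = deg(\phi) + deg(\psi)$ dominates both $deg(\langle\alpha\rangle\phi)$ and $deg(\langle\alpha\phi\rangle\psi) = deg(\langle\alpha\rangle\langle\phi\rangle\psi)$, with strict size decrease in the equal-degree case because $\Vert\langle\phi\rangle\psi\Vert = 2\Vert\phi\Vert + \Vert\psi\Vert > \Vert\phi\Vert$ and $\Vert\alpha\phi\Vert + \Vert\psi\Vert = \Vert\alpha\Vert + \Vert\phi\Vert + \Vert\psi\Vert < \Vert\alpha\Vert + 2\Vert\phi\Vert + \Vert\psi\Vert$.

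Having shown that every recursive reference in Definition \ref{defSemantics} strictly decreases $\ll$, I would conclude that, since $\ll$ is well-founded (recalled from \cite{AA}), the clauses uniquely determine $s \bowtie \alpha$ and $s,\alpha \satisfies \phi$ for all $s$, $\alpha$, $\phi$ by transfinite (in fact $\omega$-) recursion on $\ll$, so the two relations are well-defined. The main obstacle is the epistemic clause $\Ka\phi$, which is the one place where the word can grow rather than shrink ($\beta$ may be longer than $\alpha$): here the decrease must come from the modal-depth component via Lemma \ref{lemmaOrder}, using that the formulas occurring in $\beta$ are exactly those in the prefix $\alpha\proj_{!a}$ of $\alpha\proj_!$ and that the outer $\Ka$ contributes a strict $+1$ to the degree. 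Everything else is a routine unfolding of the definitions of $deg$ and $\Vert\cdot\Vert$.
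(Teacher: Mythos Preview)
Your proof is correct and follows essentially the same approach as the paper: a case-by-case verification that every recursive reference in Definition~\ref{defSemantics} strictly $\ll$-decreases, with the epistemic clause handled via Lemma~\ref{lemmaOrder} exactly as the paper does. The only differences are cosmetic (you fix the convention $(\alpha,\T)$ for the $\bowtie$-clauses where the paper says ``for any formula $\phi$'') and a harmless arithmetic slip: $\Vert\alpha\phi\Vert = \Vert\alpha\Vert + \Vert\phi\Vert$, not $\Vert\alpha\Vert + 2\Vert\phi\Vert$, but this only makes the required inequality easier.
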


\begin{proof}
We distinguish the following cases:
\begin{itemize}
\item $s \bowtie \alpha a$: for any formula $\phi \in \lang_{a}$ we have $(\alpha, \phi) \ll (\alpha a, \phi)$.
Indeed, obviously, $deg(\alpha, \phi) = deg(\alpha a,\phi)$  and $\Vert \alpha \Vert + \Vert \phi \Vert < \Vert \alpha a \Vert + \Vert \phi \Vert = \Vert \alpha \Vert + 1 + \Vert \phi \Vert$.

\item $s \bowtie \alpha \phi$: it is enough to show that, for any formula $\psi \in \lang$, $(\alpha, \psi) \ll (\alpha \phi, \psi)$ and $(\alpha, \phi) \ll (\alpha \phi, \psi)$. Let $\psi$ be a formula in $\lang$.
Obviously, $deg(\alpha, \psi) \leq deg(\alpha \phi,\psi)$ and $\Vert \alpha \Vert + \Vert \psi \Vert < \Vert \alpha \phi \Vert + \Vert \psi \Vert = \Vert \alpha \Vert + \Vert \phi \Vert + \Vert \psi \Vert $ (with $\Vert \phi \Vert > 1)$. Similarly, $deg(\alpha, \phi) \leq deg(\alpha \phi,\psi)$ and $\Vert \alpha \Vert + \Vert \phi \Vert < \Vert \alpha \phi \Vert + \Vert \psi \Vert$.

\item $s,\alpha \satisfies \lnot \phi$: $(\alpha,\phi) \ll (\alpha, \lnot\phi)$.
Indeed, $deg(\alpha, \phi) = deg(\alpha, \lnot \phi)$, and $\Vert \alpha \Vert + \Vert \phi \Vert < \Vert \alpha \Vert + \Vert \lnot \phi \Vert = \Vert \alpha \Vert + \Vert \phi \Vert + 1$.

\item $s,\alpha \satisfies \phi \lor \psi$: $(\alpha,\phi) \ll (\alpha, \phi\lor\psi)$ and $(\alpha,\psi) \ll (\alpha, \phi\lor\psi)$.
\newline Indeed, $deg(\alpha, \phi) \leq deg(\alpha, \phi \lor \psi)$ and $\Vert \alpha \Vert + \Vert \phi \Vert < \Vert \alpha \Vert + \Vert \phi \lor \psi \Vert = \Vert \alpha \Vert + \Vert \phi \Vert + \Vert \psi \Vert$ (and similarly, for $(\alpha, \psi) \ll (\alpha, \phi \lor \psi)$).

\item $s,\alpha \satisfies \Ka \phi$: it is enough to show that $(\beta,\phi) \ll (\alpha, \Ka \phi)$ for all histories $\beta$ such that $\alpha \view \beta$. Let $\beta \in \Words$ be such that $\alpha \view \beta$.
Since $\alpha \view \beta$, $\beta\proj_!$ is a prefix of $\alpha\proj_!$, which implies $ \sum_{\psi \in \beta} deg(\psi) \leq \sum_{\psi \in \alpha} deg(\psi)$. Moreover, by Lemma \ref{lemmaOrder}, $  deg(\langle \beta \rangle \phi)  = \sum_{\psi \in \beta} deg(\psi) + deg(\phi)$ and $ deg(\langle \alpha \rangle \Ka \phi) = \sum_{\psi \in \alpha} deg(\psi) + deg(\Ka \phi) = \sum_{\psi \in \alpha} deg(\psi) + deg(\phi) + 1$. Hence $deg(\beta,\phi) < deg(\alpha, \Ka \phi)$. Therefore $(\beta,\phi) \ll (\alpha,\Ka \phi)$.

\item $s,\alpha \satisfies \langle a \rangle \phi$: $(\alpha a,\phi) \ll (\alpha, \langle a \rangle \phi)$ since $deg(\alpha , \phi) = deg(\alpha, \langle a \rangle \phi)$ and $\Vert \alpha a\Vert + \Vert \phi \Vert = \Vert \alpha \Vert + 1 + \Vert \phi \Vert < \Vert \alpha \Vert + \Vert \phi \Vert + 2 =\Vert \alpha \Vert + \Vert \langle a \rangle \phi \Vert$.

\item $s,\alpha \satisfies \langle \phi \rangle \psi$: we need to show $(\alpha,\phi) \ll (\alpha, \langle \phi \rangle \psi)$ and $(\alpha \phi,\psi) \ll (\alpha, \langle \phi \rangle \psi) $.
On the one hand, $deg(\alpha, \phi) \leq deg(\alpha, \langle \phi \rangle \psi)$ and $\Vert \alpha \Vert + \Vert \phi \Vert < \Vert \alpha \Vert + \Vert \langle \phi \rangle \psi \Vert = \Vert \alpha \Vert + 2\Vert \phi \Vert + \Vert \psi\Vert$ (note that $\Vert \phi \Vert > 1$ for all $\phi \in  \lang$) so $(a,\phi)\ll(a,\langle\phi\rangle\psi)$.
On the other hand, $deg(\alpha \phi, \psi) = deg(\alpha, \langle \phi \rangle \psi)$. Moreover $\Vert \alpha \phi \Vert + \Vert \psi \Vert = \Vert \alpha \Vert + \Vert \phi \Vert + \Vert \psi \Vert < \Vert \alpha \Vert + 2\Vert \phi \Vert + \Vert \psi\Vert = \Vert \alpha \Vert + \Vert \langle \phi \rangle \psi \Vert$. Therefore $(\alpha\phi,\psi) \ll (\alpha, \langle \phi \rangle \psi)$.
\end{itemize}
\end{proof}


Remember that we focus on what agents know based on the announcements they have received and on the histories they consider possible. More precisely, here, agents do not consider histories that cannot be executed in states they consider possible. This is why, in the semantics for knowledge, we consider only pairs ($t,\beta$) where history $\beta$ is indeed executable in state $t$. For convenience, we call \emph{asynchronous epistemic state} (or simply \emph{epistemic state}) a pair of state and word $(s,\alpha)$ such that $s\bowtie\alpha$. Agents, then, only consider possible \emph{asynchronous epistemic states}.

Our satisfaction relation $\satisfies$ is slightly different from that of \cite{AA}: in our case, if $s,\alpha\satisfies \phi$ then $s \bowtie \alpha$, but not so in \cite{AA}. In the Appendix we show that the logics (the sets of validities) are the same. One could say that our semantics is therefore closer to the world deleting semantics of PAL whereas that of \cite{AA} is more akin to the link cutting semantics of \cite{vanBenthem2007preferenceLogic,nomuraPAL}.

Note that equivalence between $s,\alpha \satisfies \lnot \phi$ and $s,\alpha \nvDash \phi$ only holds if $s\bowtie \alpha$. This makes our semantics somewhat three-valued (see Section~\ref{sectionComparison}).

We emphasize that the semantics of announcement $[\phi]\psi$ are different from that in PAL. In PAL, the modality $[\phi]$ combines the effect of broadcasting the formula $\phi$ and synchronized reception by all agents. In our semantics, it only means the broadcasting of the message $\phi$. 


\begin{example}
As in Example \ref{example1}, we consider two agents, Alice and Bob, who are only aware of the truth value of their local variable (respectively $p$ and $q$) and the following sequence of events: message $p \lor q$ is sent, then Alice first receives it, and after that, Bob also receives it. The corresponding states and updates are represented in Figure \ref{fig}.

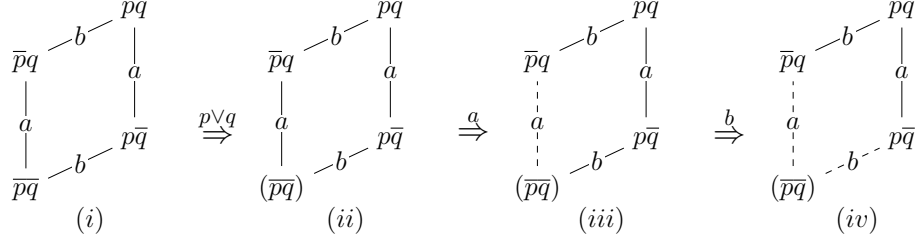
\begin{figure}[h]
\begin{center}
\scalebox{0.85}{
\begin{tikzpicture}
%
\node (00) at (0,0) {$\np\nq$};
\node (01) at (0,2) {$\np\q$};
\node (10) at (1.7,0.8) {$\p\nq$};
\node (11) at (1.7,2.8) {$\p\q$};
\draw (00) -- node[fill=white,inner sep=1pt] {$b$} (10);
\draw (01) -- node[fill=white,inner sep=1pt] {$b$} (11);
\draw (00) -- node[fill=white,inner sep=1pt] {$a$} (01);
\draw (10) -- node[fill=white,inner sep=1pt] {$a$} (11);
\node (t) at (3,1) {\large $\stackrel {p \vel q} \Imp$};
%
\node (00t) at (4,0) {($\np\nq$)};
\node (01t) at (4,2) {$\np\q$};
\node (10t) at (5.7,.8) {$\p\nq$};
\node (11t) at (5.7,2.8) {$\p\q$};
\draw (00t) -- node[fill=white,inner sep=1pt] {$b$} (10t);
\draw (01t) -- node[fill=white,inner sep=1pt] {$b$} (11t);
\draw (00t) -- node[fill=white,inner sep=1pt] {$a$} (01t);
\draw (10t) -- node[fill=white,inner sep=1pt] {$a$} (11t);
\node (t) at (7,1) {\large $\stackrel {a} \Imp$};
\node (00tt) at (8,0) {($\np\nq$)};
\node (01tt) at (8,2) {$\np\q$};
\node (10tt) at (9.7,.8) {$\p\nq$};
\node (11tt) at (9.7,2.8) {$\p\q$};
\draw (00tt) -- node[fill=white,inner sep=1pt] {$b$} (10tt);
\draw (01tt) -- node[fill=white,inner sep=1pt] {$b$} (11tt);
\draw[dashed] (00tt) -- node[fill=white,inner sep=1pt] {$a$} (01tt);
\draw (10tt) -- node[fill=white,inner sep=1pt] {$a$} (11tt);
\node (tt) at (11,1) {\large $\stackrel {b} \Imp$};
\node (00uu) at (12,0) {($\np\nq$)};
\node (01uu) at (12,2) {$\np\q$};
\node (10uu) at (13.7,.8) {$\p\nq$};
\node (11uu) at (13.7,2.8) {$\p\q$};
\draw[dashed] (00uu) -- node[fill=white,inner sep=1pt] {$b$} (10uu);
\draw (01uu) -- node[fill=white,inner sep=1pt] {$b$} (11uu);
\draw[dashed] (00uu) -- node[fill=white,inner sep=1pt] {$a$} (01uu);
\draw (10uu) -- node[fill=white,inner sep=1pt] {$a$} (11uu);
%
\node (i) at (1,-.5) {$(i)$};
\node (ii) at (5,-.5) {$(ii)$};
\node (iii) at (9,-.5) {$(iii)$};
\node (iv) at (13,-.5) {$(iv)$};
\end{tikzpicture}
}
\end{center}
\caption{The announcement $p\vel q$ is sent, after which first Alice and then Bill receives it. States are labelled with the valuations of $p$ and $q$. States that are indistinguishable for an agent are linked with a label for that agent. We omit reflexive arrows.}
\label{fig}
\end{figure}

In $(ii)$ and then $(iii)$ and $(iv)$, we write $(\np \nq)$ and draw dashed lines between that state and $\np q$ (resp. $p\nq$) to represent the update induced by the sending of $p\lor q$ and successive receptions by Alice and Bob. To make that clear, for model $\ModelM$ and states $s,t \in W$, let $(s,\alpha)\sim_a (t,\beta)$ stand for ($s\bowtie\alpha, s\sim_a t, \alpha \view \beta$ and $t \bowtie \beta$).
Furthermore, for any history $\alpha$, we can think of $M^\alpha = (W^\alpha,\sim,V)$ as an \emph{updated model}, where $W^\alpha = \lbrace s \in W \, | \, s\bowtie \alpha \rbrace$.

After $p\lor q$ is sent, $\np\nq$ does not belong to the updated model anymore because it does not satisfy $p \lor q$: $\np\nq \not\bowtie p \lor q$. However, since neither Alice nor Bob has received the message yet, $\np\nq$ is still accessible to them and therefore we should keep it in the model representation. Indeed, in $(ii)$, $(p\nq,p \lor q)\sim_a (\np\nq,\epsilon)$ so Alice still considers $\np\nq$ as a possible state (and so does Bob).
Similarly, in $(iv)$, although both Alice and Bob have received message $p \lor q$, they do not know that the other also has, hence they both consider possible that the other still thinks that $\np\nq$ is a possible state.
Hence, even then, Bob, for instance, considers possible that Alice still thinks the current model is that depicted in $(i)$, because $(t,p \lor q.a.b)\sim_b (t,p \lor q.b)\sim_a(s,\epsilon)$.
In that sense, what can be seen as an update of the model is not commonly known amongst the agents.

The executability relation, however, assumes the role of a witness for the update because $\np\nq \not\bowtie p \lor q$ means that state $\np\nq$ does not survive the sending of $p \lor q$.

\end{example}

\bigskip

From the definition of the semantics, two different notions of validity arise.

\begin{definition}[Validity]\label{defValidity}
We define two distinct notions of validity as follows:
\begin{itemize}
\item Validity with \emph{empty word}: $\phi$ is \emph{$\epsilon$-valid} (or \emph{valid}) if and only if for all models $\ModelM$ and all states $s\in W$, $s,\epsilon \satisfies \phi$.
\item Validity with \emph{arbitrary word}: $\phi$ is \emph{$\ast$-valid} (or \emph{always valid}) if and only if for all words $\alpha$, $[\alpha]\phi$ is valid.
\end{itemize}
The set of all $\epsilon$-validities is called AA$^\epsilon$ (or simply AA) and the set of all $\ast$-validities is called AA$^\ast$.
\end{definition}

From those definitions, the following proposition is obvious.
\begin{proposition}[$\ast$-validity implies $\epsilon$-validity]\label{propAlwaysValidImpliesValid}
Let $\phi \in \lang$. If $\satisfies^\ast \phi$ then $\satisfies \phi$.
\end{proposition}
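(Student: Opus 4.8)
The plan is to instantiate the definition of $\ast$-validity at the empty word $\epsilon$ and then unwind the double negation using the semantic clause for $\lnot$. Concretely, I would assume $\satisfies^\ast\phi$. By Definition~\ref{defValidity}, this means that $[\alpha]\phi$ is $\epsilon$-valid for every word $\alpha$; in particular $[\epsilon]\phi$ is $\epsilon$-valid. By the definition of the modality $\langle\epsilon\rangle$ we have $\langle\epsilon\rangle\lnot\phi = \lnot\phi$, hence $[\epsilon]\phi = \lnot\langle\epsilon\rangle\lnot\phi = \lnot\lnot\phi$. So for every epistemic model $\ModelM$ and every state $s\in W$ we have $s,\epsilon\satisfies\lnot\lnot\phi$.

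The second step is to pass from $s,\epsilon\satisfies\lnot\lnot\phi$ to $s,\epsilon\satisfies\phi$. This is the only point that requires a little care, because in our semantics the satisfaction relation is effectively three-valued and $\lnot\lnot$ is not a silent identity. Applying the clause ``$s,\alpha\satisfies\lnot\psi$ iff $s\bowtie\alpha$ and $s,\alpha\nvDash\psi$'' twice, and using that $s\bowtie\epsilon$ holds unconditionally, we get that $s,\epsilon\satisfies\lnot\lnot\phi$ iff $s,\epsilon\nvDash\lnot\phi$, iff $s,\epsilon\satisfies\phi$ (the last equivalence again because $s\bowtie\epsilon$). Hence $s,\epsilon\satisfies\phi$ for every $\ModelM$ and every $s\in W$, which is exactly $\satisfies\phi$ by Definition~\ref{defValidity}.

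There is no real obstacle here: the statement is immediate once one instantiates $\alpha:=\epsilon$. The only thing to watch is that the executability side conditions appearing in the negation clauses are discharged, and they are, since $s\bowtie\epsilon$ is always true. (Note that the converse implication fails, as discussed in the introduction, so the proposition is genuinely one-directional.)
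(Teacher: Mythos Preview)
Your proof is correct and follows the same idea the paper has in mind: the paper simply states that the proposition is obvious from Definition~\ref{defValidity}, i.e., instantiate the universal quantification over words at $\alpha=\epsilon$. Your extra care in unwinding $[\epsilon]\phi=\lnot\lnot\phi$ and discharging the executability side condition via $s\bowtie\epsilon$ is a welcome clarification but does not constitute a different approach.
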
 

\noindent Note that the converse does not hold: there is a formula $\phi$ such that $\phi$ is $\epsilon$-valid but not $\ast$-valid, \emph{i.e.} there is a word $\alpha$ such that $\nvDash [\alpha]\phi$. For instance, $\phi = [a]\F$ is $\epsilon$-valid but not $\ast$-valid since $\nvDash[\T][a]\F$.

\subsection{Some properties of the semantics}

We continue with some interesting properties of the satisfaction and the executability relations. In the following, we consider an arbitrary model $M=(W,\sim,V)$.


\begin{lemma}\label{lemmaBowtieHistory}
Let $s$ be a state. For any word $\alpha$, if $s\bowtie \alpha$ then $\alpha$ is a history.
\end{lemma}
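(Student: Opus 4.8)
The plan is to proceed by induction on the length of $\alpha$ (equivalently, by the structure of the word using the decomposition $\alpha = \alpha' \mu$ for a final symbol $\mu \in \A \cup \lang_{aa}$). The base case $\alpha = \epsilon$ is immediate since the empty word is vacuously a history. For the inductive step, suppose $s \bowtie \alpha' \mu$. By the definition of the executability relation $\bowtie$ (Definition \ref{defSemantics}), both clauses for $s \bowtie \alpha' a$ and $s \bowtie \alpha' \phi$ have $s \bowtie \alpha'$ as a conjunct, so in either case $s \bowtie \alpha'$, and the induction hypothesis gives that $\alpha'$ is a history.

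It then remains to check that appending $\mu$ preserves the history condition of Definition \ref{defHistory}, namely that $|\beta|_a \leq |\beta|_!$ for every agent $a$ and every prefix $\beta \sqsubseteq \alpha' \mu$. The prefixes of $\alpha'\mu$ are exactly the prefixes of $\alpha'$ together with $\alpha'\mu$ itself. For the prefixes of $\alpha'$ the inequality already holds since $\alpha'$ is a history. So the only thing to verify is $|\alpha'\mu|_a \leq |\alpha'\mu|_!$ for all $a$. If $\mu = \phi$ is a formula, then $|\alpha'\phi|_a = |\alpha'|_a \leq |\alpha'|_! \leq |\alpha'|_! + 1 = |\alpha'\phi|_!$, using that $\alpha'$ is a history. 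If $\mu = b$ is an agent, then for $a \neq b$ we have $|\alpha' b|_a = |\alpha'|_a \leq |\alpha'|_! = |\alpha' b|_!$, and for $a = b$ we have $|\alpha' b|_b = |\alpha'|_b + 1$ while $|\alpha' b|_! = |\alpha'|_!$, so we need $|\alpha'|_b + 1 \leq |\alpha'|_!$, i.e.\ $|\alpha'|_b < |\alpha'|_!$ --- and this is precisely the second conjunct of the clause $s \bowtie \alpha' b$ in Definition \ref{defSemantics}.

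The main obstacle, such as it is, is recognising that the clause $s \bowtie \alpha a \iff s \bowtie \alpha \text{ and } |\alpha|_a < |\alpha|_!$ in the semantics is exactly engineered to deliver the history condition: the extra numerical side-condition on the agent-reception clause is what makes executability strictly stronger than (or at least as strong as) being a history. Once this is noticed, the argument is a short case split with elementary counting. One minor care point is whether to run the induction with the leftmost or rightmost symbol stripped off; using the rightmost symbol ($\alpha = \alpha'\mu$) aligns cleanly with the way $\bowtie$ is defined in Definition \ref{defSemantics}, so that is the decomposition I would use, and it also makes the prefix bookkeeping trivial since $\sqsubseteq$ only adds new prefixes at the right end.
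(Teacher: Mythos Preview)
Your proof is correct and follows essentially the same approach as the paper, which simply states that the result is by straightforward induction on $\Vert \alpha \Vert$ without spelling out the details; your explicit case analysis is exactly the routine verification the paper elides.
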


\begin{proof}
The proof proceeds by straightforward induction on $\Vert \alpha \Vert$.
\end{proof}

%
%


\begin{lemma}\label{lemmaExecutabilityPrefix}
Let $\alpha,\beta$ be words and $s$ be a state.
If $s\bowtie \alpha$ and $\beta \prefix \alpha $ then $s\bowtie \beta$.
\end{lemma}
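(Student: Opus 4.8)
The statement to prove is Lemma~\ref{lemmaExecutabilityPrefix}: if $s \bowtie \alpha$ and $\beta \prefix \alpha$, then $s \bowtie \beta$. The plan is to argue by induction on the number of symbols one must strip from the end of $\alpha$ to reach $\beta$ --- equivalently, an induction on $|\alpha| - |\beta|$, or on $\Vert \alpha \Vert$ with $\beta$ varying. The base case $|\alpha| = |\beta|$ forces $\alpha = \beta$ (since $\beta$ is an initial segment of $\alpha$ of the same length), and then $s \bowtie \beta$ is immediate from the hypothesis $s \bowtie \alpha$.

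For the inductive step, suppose $\beta \prefix \alpha$ with $\beta \neq \alpha$, so $\alpha$ is non-empty and can be written as $\alpha = \alpha' \mu$ for some symbol $\mu \in \A \cup \lang_{aa}$, where still $\beta \prefix \alpha'$ (because $\beta$, being shorter than $\alpha$ and an initial segment of it, is an initial segment of $\alpha'$). The key observation is that the executability clauses for $s \bowtie \alpha' a$ and $s \bowtie \alpha' \phi$ in Definition~\ref{defSemantics} both have $s \bowtie \alpha'$ as a conjunct: from $s \bowtie \alpha' \mu$ we directly read off $s \bowtie \alpha'$. Now apply the induction hypothesis to $\alpha'$ (which is shorter than $\alpha$) and its prefix $\beta$ to conclude $s \bowtie \beta$.

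I would phrase the induction cleanly as: induction on $|\alpha|$; if $\beta = \alpha$ we are done, otherwise peel the last symbol off $\alpha$, use the relevant $\bowtie$-clause to descend to $\alpha'$, note $\beta \prefix \alpha'$, and invoke the hypothesis. There is essentially no obstacle here: the only thing to be slightly careful about is the trivial fact that if $\beta \prefix \alpha$ and $|\beta| < |\alpha|$ and $\alpha = \alpha'\mu$, then $\beta \prefix \alpha'$ --- this follows immediately from the definition of prefix as an initial sequence. One could alternatively do a single-step argument ($\beta$ obtained from $\alpha$ by removing one final symbol) and then iterate, but folding it into one induction on length is cleanest. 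The well-founded order $\ll$ is not needed for this lemma; plain induction on word length suffices.
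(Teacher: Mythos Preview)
Your proof is correct and follows essentially the same approach as the paper, which simply says ``by straightforward induction on $\Vert \alpha \Vert$.'' You have spelled out the details the paper omits; the only cosmetic difference is that you induct on $|\alpha|$ rather than $\Vert \alpha \Vert$, but since removing the last symbol strictly decreases both quantities, either choice works.
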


\begin{proof}
	By straightforward induction on $\Vert \alpha \Vert$.
\end{proof}


\noindent Note that the executability relation $\bowtie$ does not only check whether $\alpha$ is a history but it also verifies that $\alpha$ is \emph{consistent} and \emph{executable} in a given state. For instance, obviously $\alpha := p. b. \lnot K_b\p$ is a history. However, it cannot be executed in any state because such a history is somehow inconsistent: if agent $b$ has received information $p$ then $b$ knows that $p$, so $\lnot K_b p$ is false and thus cannot be announced. Hence, for any model $\ModelM$ and any state $s\in W$, $s\not\bowtie\alpha$ because $s,p. b \nvDash \lnot K_b p$. This relates to the notion of consistent cut in distributed computing, to which we will go back in Section \ref{sectionComparison}. \\

The following proposition states that whenever a pair $(s,\alpha) \in W\times\Words$ satisfies a formula $\phi$, word $\alpha$ is indeed an executable  history in state $s$, \emph{i.e.} $s\bowtie \alpha$.

\begin{proposition}\label{propExecutabilityIfSatisfaction}
Let $s \in W$ be a state, $\alpha \in \Words $ a word and $\phi \in \lang$ a formula. If $s,\alpha \satisfies \phi$ then $s \bowtie \alpha$.
\end{proposition}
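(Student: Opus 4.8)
The plan is to prove the statement by $\ll$-induction on the pair $(\alpha,\phi)$, mirroring the well-founded recursion used to define $\satisfies$ in Definition \ref{defSemantics}. So I would fix $(\alpha,\phi)$, assume that for every pair $(\beta,\psi)\ll(\alpha,\phi)$ the implication ``$t,\beta\satisfies\psi \Imp t\bowtie\beta$'' holds, and then show it for $(\alpha,\phi)$ itself by a case analysis on the outermost connective of $\phi$.

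The cases for $\phi = p$ and $\phi = \T$ are immediate, since the semantic clauses for these formulas literally include the conjunct $s\bowtie\alpha$. For $\phi = \lnot\psi$ the clause for $s,\alpha\satisfies\lnot\psi$ again explicitly contains $s\bowtie\alpha$, so there is nothing to do. For $\phi = \psi_1\lor\psi_2$, from $s,\alpha\satisfies\psi_1\lor\psi_2$ we get $s,\alpha\satisfies\psi_i$ for some $i\in\{1,2\}$, and since $(\alpha,\psi_i)\ll(\alpha,\psi_1\lor\psi_2)$ (established in the proof that $\satisfies$ is well-defined) the induction hypothesis yields $s\bowtie\alpha$. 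For $\phi = \Ka\psi$ the clause once more begins with the conjunct $s\bowtie\alpha$, so this case is also immediate — the interesting content of that clause (the existence of a suitable $(t,\beta)$) is not even needed here.

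The genuinely non-trivial cases are the two dynamic modalities. For $\phi = \langle a\rangle\psi$: from $s,\alpha\satisfies\langle a\rangle\psi$ we get $|\alpha|_a < |\alpha|_!$ and $s,\alpha a\satisfies\psi$. Since $(\alpha a,\psi)\ll(\alpha,\langle a\rangle\psi)$, the induction hypothesis gives $s\bowtie\alpha a$, and then Lemma \ref{lemmaExecutabilityPrefix} applied to the prefix $\alpha\prefix\alpha a$ gives $s\bowtie\alpha$. (Alternatively, one can unfold $s\bowtie\alpha a$ directly via its defining clause, which says $s\bowtie\alpha$ and $|\alpha|_a<|\alpha|_!$.) For $\phi = \langle\psi_1\rangle\psi_2$: from $s,\alpha\satisfies\langle\psi_1\rangle\psi_2$ we get $s,\alpha\satisfies\psi_1$ and $s,\alpha\psi_1\satisfies\psi_2$. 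Using $(\alpha,\psi_1)\ll(\alpha,\langle\psi_1\rangle\psi_2)$ and the induction hypothesis on this smaller pair already delivers $s\bowtie\alpha$ directly — so in fact only the first conjunct is needed. (One could instead use $(\alpha\psi_1,\psi_2)\ll(\alpha,\langle\psi_1\rangle\psi_2)$ to get $s\bowtie\alpha\psi_1$ and then peel off the prefix via Lemma \ref{lemmaExecutabilityPrefix}, but invoking the induction hypothesis on $\psi_1$ is cleaner.)

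The main thing to be careful about is not a deep obstacle but a bookkeeping point: at each inductive step I must cite the relevant instance of $(\beta,\psi)\ll(\alpha,\phi)$, and these are exactly the instances already verified in the proof that $\bowtie$ and $\satisfies$ are well-defined — so I can quote them rather than re-deriving the $deg$/$\Vert\cdot\Vert$ inequalities. Thus the ``hard part'' is really just ensuring the right well-foundedness certificate is invoked in the $\langle a\rangle$ and $\langle\cdot\rangle$ cases; every other case is settled by inspection of the semantic clause, which was designed precisely so that the conjunct $s\bowtie\alpha$ is present. No single step requires genuine ingenuity; the proof is a routine structural induction once the ordering facts from the preceding proposition are in hand.
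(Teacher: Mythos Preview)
Your proposal is correct and follows exactly the approach the paper takes: the paper's own proof consists of the single sentence ``The proof proceeds by $\ll$-induction on $(\alpha,\phi)$. This is straightforward.'' Your write-up simply spells out that straightforward induction case by case, and each case is handled correctly.
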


\begin{proof}
The proof proceeds by $\ll$-induction on $(\alpha, \phi)$. This is straightforward.
\end{proof}

\noindent Therefore, if $s,\alpha \satisfies \phi$ for some formula $\phi \in \lang$, the pair $(s,\alpha)$ is an epistemic asynchronous state and $\alpha$ is a history.


We further give some results that will be helpful for future proofs.

\begin{lemma}\label{LemmaSemanticsDiamondHistory}
For any state $s$, any words $\alpha, \beta$ and any formula $\phi$:
\begin{align*}
(1) \qquad &s,\alpha \satisfies \langle \beta \rangle \phi & &\text{if and only if} & &s\bowtie \alpha \beta \text{ and } s,\alpha \beta \satisfies \phi\\
(2) \qquad &s,\alpha \satisfies [\beta]\phi & &\text{if and only if} & &s,\alpha \satisfies \lnot \langle \beta \rangle \lnot\phi
\end{align*}
\end{lemma}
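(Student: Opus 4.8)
The plan is to prove both parts by induction, relying on the basic clauses of Definition~\ref{defSemantics} and on the characterisation of $\langle\alpha\rangle$ as iterated single-symbol modalities. For part~(1), I would argue by induction on $\Vert\beta\Vert$ (equivalently, on the length of $\beta$), using the decomposition of a non-empty word as $\mu\beta'$ with $\mu\in\A\cup\lang_{aa}$ a leading symbol. The base case $\beta=\epsilon$ is immediate: $\langle\epsilon\rangle\phi=\phi$ and $s\bowtie\alpha\epsilon$ is just $s\bowtie\alpha$, which by Proposition~\ref{propExecutabilityIfSatisfaction} follows from $s,\alpha\satisfies\phi$, so both sides coincide.

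For the inductive step, say $\beta=a\beta'$ for an agent $a$. Then $\langle a\beta'\rangle\phi=\langle a\rangle\langle\beta'\rangle\phi$, so by the semantic clause for $\langle a\rangle$ we have $s,\alpha\satisfies\langle a\rangle\langle\beta'\rangle\phi$ iff $|\alpha|_a<|\alpha|_!$ and $s,\alpha a\satisfies\langle\beta'\rangle\phi$. By the induction hypothesis applied to the shorter word $\beta'$ and the word $\alpha a$, the latter is equivalent to $s\bowtie\alpha a\beta'$ and $s,\alpha a\beta'\satisfies\phi$. It remains to note that $s\bowtie\alpha a\beta'$ already entails $s\bowtie\alpha a$ (Lemma~\ref{lemmaExecutabilityPrefix}), which unfolds to $s\bowtie\alpha$ and $|\alpha|_a<|\alpha|_!$; conversely, from the right-hand side $s\bowtie\alpha a\beta'$ one recovers the condition $|\alpha|_a<|\alpha|_!$ needed for the $\langle a\rangle$ clause. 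So the two sides match. The case $\beta=\psi\beta'$ for a formula $\psi$ is parallel: $\langle\psi\beta'\rangle\phi=\langle\psi\rangle\langle\beta'\rangle\phi$, the clause for $\langle\psi\rangle$ gives $s,\alpha\satisfies\psi$ and $s,\alpha\psi\satisfies\langle\beta'\rangle\phi$, the induction hypothesis rewrites this as $s\bowtie\alpha\psi\beta'$ and $s,\alpha\psi\beta'\satisfies\phi$, and the clause $s\bowtie\alpha\psi$ iff ($s\bowtie\alpha$ and $s,\alpha\satisfies\psi$) together with Lemma~\ref{lemmaExecutabilityPrefix} reconciles the executability bookkeeping on both sides. (One should double-check which decomposition of $\beta$ — leading versus trailing symbol — makes the definition of $\langle\beta\rangle$ unfold cleanly; the inductive definition $\langle\alpha a\rangle\phi:=\langle\alpha\rangle\langle a\rangle\phi$ in the excerpt uses the trailing symbol, so the induction may instead be run on a trailing decomposition $\beta=\beta'\mu$, peeling from the right and applying the IH to $\alpha$ unchanged but to the inner formula $\langle\mu\rangle\phi$; either route works.)

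For part~(2), the statement $s,\alpha\satisfies[\beta]\phi$ iff $s,\alpha\satisfies\lnot\langle\beta\rangle\lnot\phi$ is essentially the definition $[\beta]\phi:=\lnot\langle\beta\rangle\lnot\phi$ given in the excerpt, so at first sight there is nothing to prove; the genuine content is that this abbreviation is \emph{consistent} with the clause-by-clause semantics, i.e.\ that the derived modality $[\beta]$ behaves as expected relative to executability. Concretely one checks, again by induction on $\beta$, that $s,\alpha\satisfies[\beta]\phi$ holds iff either $s\not\bowtie\alpha\beta$, or $s\bowtie\alpha\beta$ and $s,\alpha\beta\satisfies\phi$ — the box-dual of part~(1) — using the clause $s,\alpha\satisfies\lnot\chi$ iff $s\bowtie\alpha$ and $s,\alpha\nvDash\chi$, together with part~(1) applied to $\lnot\phi$ and the observation that $s,\alpha\satisfies\lnot\langle\beta\rangle\lnot\phi$ requires $s\bowtie\alpha$.

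The main obstacle I anticipate is the careful handling of the executability side conditions: because in this semantics $s,\alpha\satisfies\lnot\chi$ is \emph{not} the plain negation of $s,\alpha\satisfies\chi$ unless $s\bowtie\alpha$ (the ``somewhat three-valued'' feature noted after Definition~\ref{defSemantics}), one must track $\bowtie$-assumptions at every step rather than treating $\lnot$, $\langle a\rangle$, $\langle\psi\rangle$ purely Boolean-classically. The key technical lever throughout is Lemma~\ref{lemmaExecutabilityPrefix} (prefix-closure of $\bowtie$) together with Proposition~\ref{propExecutabilityIfSatisfaction} ($s,\alpha\satisfies\chi\Rightarrow s\bowtie\alpha$), which let one move the executability of the long word $\alpha\beta$ up and down the induction and recover the $|\gamma|_a<|\gamma|_!$ conditions embedded in the $\langle a\rangle$-clauses.
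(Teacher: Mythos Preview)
Your proposal is correct and matches the paper's approach: the paper proves both items by induction on $\Vert\beta\Vert$ using precisely the trailing decomposition $\beta=\beta'\mu$ that you flag as the cleaner choice, applying the induction hypothesis at the same $\alpha$ to the inner formula $\langle\mu\rangle\phi$ and then unfolding the single-symbol clause. Your identification of Proposition~\ref{propExecutabilityIfSatisfaction} for the base case and of the executability bookkeeping (via the $\bowtie$ clauses) for the inductive steps is exactly what the paper uses.
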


\begin{proof}
Let $\alpha$ be a word and $s$ a state. The proof proceeds by induction on $\Vert \beta \Vert$. Let $\beta$ be a word such that for all words $\gamma$, if $\Vert\gamma\Vert < \Vert \beta \Vert$ then for all formulas $\phi$, $(1) \ s,\alpha \satisfies \langle \gamma \rangle \phi$ if and only if $s\bowtie \alpha\gamma$ and $s,\alpha\gamma \satisfies \phi$ and $(2) \ s,\alpha \satisfies [\beta]\phi$ if and only if $s,\alpha\satisfies \lnot\langle\beta\rangle\lnot\phi$. We show that those properties also holds for $\beta$. We distinguish three cases:
\begin{itemize}
\item Case $\epsilon$: (1) by Proposition \ref{propExecutabilityIfSatisfaction}, $s,\alpha \satisfies \phi$ implies $s\bowtie \alpha$ so, obvisouly, $s,\alpha \satisfies \phi$ if and only if $s\bowtie \alpha$ and $s,\alpha \satisfies \phi$. (2) Obviously $s,\alpha \satisfies \phi \Eq s,\alpha \satisfies \lnot \lnot \phi$.
\item Case $\beta a$: the inductive hypothesis (IH) applies because $\Vert \beta \Vert < \Vert \beta a\Vert$. Concerning (1) we have the following equivalences
\begin{align*}
s,\alpha \satisfies \langle \beta a \rangle \phi
&\Eq s,\alpha \satisfies \langle \beta \rangle \langle a \rangle \phi &\\
&\Eq s \bowtie \alpha \beta \text{ and } s,\alpha \beta \satisfies \langle a \rangle \phi & &\text{ by (IH)}\\
&\Eq s \bowtie \alpha \beta \text{ and } |\alpha\beta|_a < |\alpha\beta|_! \text{ and } s,\alpha \beta a \satisfies \phi & \\
&\Eq s\bowtie \alpha \beta a \text{ and } s,\alpha \beta a \satisfies \phi &
\end{align*}
and concerning (2):
\begin{align*}
s,\alpha \satisfies [\beta a]\phi &\Eq s,\alpha \satisfies [\beta][a]\phi &\\
&\Eq s,\alpha \satisfies \lnot \langle \beta \rangle \lnot [a]\phi  & &\text{by (IH)}\\
&\Eq s,\alpha \satisfies \lnot \langle \beta \rangle \langle a\rangle \lnot \phi & &\text{by definition}\\
&\Eq s,\alpha \satisfies \lnot \langle \beta a \rangle \lnot \phi &
\end{align*}

\item Case $\beta \psi$: here the induction hypothesis applies because $\Vert \beta \Vert < \Vert \beta \psi \Vert$. For (1), we have
\begin{align*}
s,\alpha \satisfies \langle \beta \psi \rangle \phi
&\Eq s,\alpha \satisfies \langle \beta \rangle \langle \psi \rangle \phi & \\
&\Eq s \bowtie \alpha \beta \text{ and } s,\alpha \beta \satisfies \langle \psi \rangle \phi & &\text{ by (IH)} \\
&\Eq s \bowtie \alpha \beta \text{ and } s,\alpha \beta \satisfies \psi \text{ and } s,\alpha \beta \psi \satisfies \phi & \\
&\Eq s\bowtie \alpha \beta \psi \text{ and } s,\alpha \beta \psi \satisfies \phi& 
\end{align*}
and concerning (2):
\begin{align*}
s,\alpha \satisfies [\beta \psi]\phi &\Eq s,\alpha \satisfies [\beta][\psi]\phi &\\
&\Eq s,\alpha \satisfies \lnot \langle \beta \rangle \lnot [\psi]\phi  & &\text{by (IH)} \\
&\Eq s,\alpha \satisfies \lnot \langle \beta \rangle \langle \psi\rangle \lnot \phi & &\text{by definition}\\
&\Eq s,\alpha \satisfies \lnot \langle \beta \psi \rangle \lnot \phi
\end{align*}
\end{itemize}
\end{proof}

\begin{corollary}\label{coSemanticsDiamondHistory}
For any state $s$ and any words $\alpha,\beta$:
\begin{align*}
(1) \qquad &s,\alpha \satisfies \langle \beta \rangle \T & &\text{if and only if} & &s\bowtie \alpha \beta\\
(2) \qquad &s,\alpha \satisfies [\beta] \F  & &\text{if and only if} & &s,\alpha \satisfies \lnot \langle \beta \rangle \T
\end{align*}
Moreover, if $s\bowtie \alpha$, then $s,\alpha \satisfies \langle \beta \rangle \T$ if and only if $s,\alpha \nvDash [\beta]\F$.
\end{corollary}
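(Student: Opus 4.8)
The plan is to prove the three claims of Corollary~\ref{coSemanticsDiamondHistory} by direct specialisation of Lemma~\ref{LemmaSemanticsDiamondHistory}, using in addition Corollary~\ref{coSemanticsDiamondHistory}'s own part~(1) for the final ``moreover''. For part~(1), I would instantiate Lemma~\ref{LemmaSemanticsDiamondHistory}(1) with $\phi := \T$: this gives $s,\alpha \satisfies \langle\beta\rangle\T$ iff $s\bowtie\alpha\beta$ and $s,\alpha\beta\satisfies\T$. Now by the semantic clause for $\T$, $s,\alpha\beta\satisfies\T$ iff $s\bowtie\alpha\beta$, so the right-hand side collapses to just $s\bowtie\alpha\beta$, which is exactly the claim.

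For part~(2), I would instantiate Lemma~\ref{LemmaSemanticsDiamondHistory}(2) with $\phi := \F$, giving $s,\alpha\satisfies[\beta]\F$ iff $s,\alpha\satisfies\lnot\langle\beta\rangle\lnot\F$. Then I would observe that $\lnot\F = \lnot\lnot\T$, and that $s,\alpha\satisfies\langle\beta\rangle\lnot\lnot\T$ iff $s,\alpha\satisfies\langle\beta\rangle\T$ — this last equivalence following because, by Lemma~\ref{LemmaSemanticsDiamondHistory}(1) applied on both sides, each is equivalent to ``$s\bowtie\alpha\beta$ and $s,\alpha\beta\satisfies\lnot\lnot\T$'' resp.\ ``$s\bowtie\alpha\beta$ and $s,\alpha\beta\satisfies\T$'', and $\lnot\lnot\T$ and $\T$ are satisfied at exactly the same epistemic states (by the $\lnot$-clause iterated, given executability). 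Hence $s,\alpha\satisfies[\beta]\F$ iff $s,\alpha\satisfies\lnot\langle\beta\rangle\T$, as required.

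For the final ``moreover'', assume $s\bowtie\alpha$. I would use part~(2): $s,\alpha\satisfies[\beta]\F$ iff $s,\alpha\satisfies\lnot\langle\beta\rangle\T$. Since $s\bowtie\alpha$ holds, the $\lnot$-clause gives $s,\alpha\satisfies\lnot\langle\beta\rangle\T$ iff $s,\alpha\nvDash\langle\beta\rangle\T$. Putting these together yields $s,\alpha\satisfies[\beta]\F$ iff $s,\alpha\nvDash\langle\beta\rangle\T$, which contraposes to the stated equivalence $s,\alpha\satisfies\langle\beta\rangle\T$ iff $s,\alpha\nvDash[\beta]\F$.

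The only mild subtlety — and the one place to be careful rather than an actual obstacle — is that in our three-valued-flavoured semantics $s,\alpha\satisfies\lnot\phi$ is \emph{not} simply the negation of $s,\alpha\satisfies\phi$ unless $s\bowtie\alpha$; so in part~(2) I must not silently treat $\lnot\langle\beta\rangle\lnot\F$ as $\langle\beta\rangle\T$ by propositional reasoning, but route the argument through Lemma~\ref{LemmaSemanticsDiamondHistory}(1)/(2) and the executability side-conditions, exactly as above. Once that is respected, all three parts are short.
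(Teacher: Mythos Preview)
Your proposal is correct and is exactly the intended argument: the paper states this result as an immediate corollary of Lemma~\ref{LemmaSemanticsDiamondHistory} without giving a proof, and your specialisations (taking $\phi:=\T$ for~(1), $\phi:=\F$ for~(2), and unwinding the $\lnot$-clause under the assumption $s\bowtie\alpha$ for the ``moreover'') are precisely how one fills in the details. Your care in routing the $\lnot\lnot\T$-to-$\T$ step through executability rather than na\"ive propositional reasoning is appropriate given the three-valued flavour of the semantics.
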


We can now prove:

\begin{proposition}\label{PropExecutabilityHistoryNotFalse}
For any state $s$ and any words $\alpha, \beta$, if $s\bowtie \alpha$, then $s\bowtie \alpha\beta$ if and only if $s,\alpha \nvDash [\beta]\F$.
\end{proposition}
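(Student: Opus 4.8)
The plan is to derive this directly from the results already established, in particular Corollary~\ref{coSemanticsDiamondHistory} and Corollary~\ref{coSemanticsDiamondHistory}(1). Assume $s \bowtie \alpha$. By the ``moreover'' part of Corollary~\ref{coSemanticsDiamondHistory}, since $s \bowtie \alpha$, we have $s,\alpha \satisfies \langle \beta \rangle \T$ if, and only if, $s,\alpha \nvDash [\beta]\F$. So it suffices to show that $s \bowtie \alpha\beta$ if, and only if, $s,\alpha \satisfies \langle \beta \rangle \T$. But this is exactly Corollary~\ref{coSemanticsDiamondHistory}(1), which holds for arbitrary $s$, $\alpha$, $\beta$ with no executability hypothesis. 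Chaining the two biconditionals gives $s \bowtie \alpha\beta$ if, and only if, $s,\alpha \nvDash [\beta]\F$, as required.

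Concretely, I would write: Suppose $s \bowtie \alpha$. Then
\begin{align*}
s \bowtie \alpha\beta
&\Eq s,\alpha \satisfies \langle \beta \rangle \T & &\text{by Corollary \ref{coSemanticsDiamondHistory}(1)}\\
&\Eq s,\alpha \nvDash [\beta]\F & &\text{by Corollary \ref{coSemanticsDiamondHistory}, using } s\bowtie\alpha.
\end{align*}
This completes the proof.

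There is essentially no obstacle here: the statement is a near-immediate corollary of the preceding corollary, packaged in a slightly different form (replacing $\langle\beta\rangle\T$ by its negation $[\beta]\F$ and the executability of $\alpha\beta$ by the non-satisfaction of $[\beta]\F$ at $(s,\alpha)$). The only point worth being careful about is that the equivalence between $s,\alpha\satisfies\langle\beta\rangle\T$ and $s,\alpha\nvDash[\beta]\F$ genuinely requires the hypothesis $s\bowtie\alpha$ --- this is the three-valued subtlety noted after Definition~\ref{defSemantics}, namely that $s,\alpha\satisfies\lnot\phi$ is equivalent to $s,\alpha\nvDash\phi$ only when $s\bowtie\alpha$. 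Since $[\beta]\F$ unfolds to $\lnot\langle\beta\rangle\lnot\T$, i.e. morally $\lnot\langle\beta\rangle\T$, and the hypothesis of the proposition supplies $s\bowtie\alpha$, this is exactly covered by the ``moreover'' clause of Corollary~\ref{coSemanticsDiamondHistory}, so no extra work is needed. If one preferred a self-contained argument not invoking that clause, one could instead unfold $[\beta]\F = \lnot\langle\beta\rangle\lnot\T$ using Lemma~\ref{LemmaSemanticsDiamondHistory}(1)--(2) and the semantic clause for $\lnot$ at the epistemic state $(s,\alpha)$ (legitimate since $s\bowtie\alpha$), but going through Corollary~\ref{coSemanticsDiamondHistory} is cleaner.
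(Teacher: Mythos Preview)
Your proof is correct and takes essentially the same approach as the paper: both pass through the intermediate equivalence $s\bowtie\alpha\beta \Eq s,\alpha\satisfies\langle\beta\rangle\T$ and then use the hypothesis $s\bowtie\alpha$ to flip to $s,\alpha\nvDash[\beta]\F$. The only cosmetic difference is that the paper unpacks the forward direction via Lemma~\ref{LemmaSemanticsDiamondHistory} rather than citing Corollary~\ref{coSemanticsDiamondHistory}(1) directly, but your version is if anything cleaner.
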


\begin{proof}
Suppose $s\bowtie\alpha$.
If $s \bowtie \alpha \beta$ then, by definition, $s,\alpha \beta \satisfies \T$. Now, from Lemma \ref{LemmaSemanticsDiamondHistory} we get $s,\alpha \satisfies \langle \beta \rangle \T$ and, since $s\bowtie\alpha$, we conclude from Corollary \ref{coSemanticsDiamondHistory} that $s,\alpha \nvDash [\beta]\F$.
Conversely, if $s,\alpha \nvDash [\beta]\F$, then, by Corollary \ref{coSemanticsDiamondHistory}, $s,\alpha \satisfies \langle \beta \rangle \T$ so $s \bowtie \alpha \beta$. 
\end{proof}

Now we can obtain semantic definitions for dynamic modalities with words:

\begin{lemma}
Let $\alpha,\beta$ be two words over $\A \union \lang$ and $\phi \in \lang$ a formula. For all states $s$,
\begin{align*}
(1) \quad s,\alpha \satisfies \langle \beta \rangle \phi \quad &\text{iff} \quad s\bowtie\alpha\beta \text{ and } s,\alpha\beta \satisfies \phi \\
(2) \hspace{1em} s,\alpha \satisfies \hspace{0.2em} [\beta] \phi \quad &\text{iff} \quad s\bowtie\alpha \text{ and, if } s\bowtie \alpha\beta, \text{ then } s,\alpha\beta \satisfies \phi
\end{align*}
\end{lemma}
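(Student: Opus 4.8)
The plan is to prove both equivalences directly from Lemma~\ref{LemmaSemanticsDiamondHistory}, Corollary~\ref{coSemanticsDiamondHistory}, Proposition~\ref{propExecutabilityIfSatisfaction} and Proposition~\ref{PropExecutabilityHistoryNotFalse}; no new induction should be needed, since all the inductive work has already been done.

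For part (1), this is exactly Lemma~\ref{LemmaSemanticsDiamondHistory}(1), so I would simply cite it. (If one prefers a self-contained argument: the left-to-right direction is the forward implication of that lemma, and right-to-left follows because $s \bowtie \alpha\beta$ together with $s,\alpha\beta \satisfies \phi$ yields $s,\alpha \satisfies \langle\beta\rangle\phi$ by the same lemma.)

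For part (2), I would argue as follows. First assume $s,\alpha \satisfies [\beta]\phi$, i.e.\ $s,\alpha \satisfies \lnot\langle\beta\rangle\lnot\phi$. By Proposition~\ref{propExecutabilityIfSatisfaction} this gives $s \bowtie \alpha$, the first conjunct. Now suppose additionally that $s \bowtie \alpha\beta$; I must derive $s,\alpha\beta \satisfies \phi$. Suppose not, i.e.\ $s,\alpha\beta \nvDash \phi$; since $s\bowtie\alpha\beta$, the clause for negation gives $s,\alpha\beta \satisfies \lnot\phi$, and then Lemma~\ref{LemmaSemanticsDiamondHistory}(1) (applied to $\lnot\phi$) gives $s,\alpha \satisfies \langle\beta\rangle\lnot\phi$, contradicting $s,\alpha \satisfies \lnot\langle\beta\rangle\lnot\phi$ (using $s\bowtie\alpha$ so that the negation clause is an equivalence). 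Conversely, assume $s\bowtie\alpha$ and that $s\bowtie\alpha\beta$ implies $s,\alpha\beta\satisfies\phi$. I want $s,\alpha \satisfies \lnot\langle\beta\rangle\lnot\phi$; since $s\bowtie\alpha$ it suffices to show $s,\alpha \nvDash \langle\beta\rangle\lnot\phi$. If instead $s,\alpha \satisfies \langle\beta\rangle\lnot\phi$, then by Lemma~\ref{LemmaSemanticsDiamondHistory}(1) we get $s\bowtie\alpha\beta$ and $s,\alpha\beta\satisfies\lnot\phi$; the hypothesis then yields $s,\alpha\beta\satisfies\phi$, but also $s,\alpha\beta\satisfies\lnot\phi$ forces $s,\alpha\beta\nvDash\phi$, a contradiction.

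The only subtlety — and the single point that needs care — is the repeated use of the fact that $s,\alpha \satisfies \lnot\psi$ is equivalent to $s,\alpha \nvDash \psi$ \emph{only} when $s \bowtie \alpha$; this is flagged in the text as the source of the three-valued flavour of the semantics. So each appeal to the negation clause must first secure executability of the relevant word, which is why establishing $s\bowtie\alpha$ at the outset of both directions of (2) is essential. Everything else is bookkeeping with the already-proved lemmas. As an alternative packaging of (2), one can note $s,\alpha\satisfies[\beta]\phi$ iff $s\bowtie\alpha$ and $s,\alpha\nvDash\langle\beta\rangle\lnot\phi$ (negation clause, using $s\bowtie\alpha$), then rewrite $\langle\beta\rangle\lnot\phi$ via part (1) and distribute, landing on the stated form; I expect the contradiction-style argument above to be the cleanest to write out.
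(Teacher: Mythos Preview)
Your proof is correct and follows the same approach as the paper: both derive (1) directly from Lemma~\ref{LemmaSemanticsDiamondHistory}(1) and obtain (2) by unfolding $[\beta]\phi$ as $\lnot\langle\beta\rangle\lnot\phi$ via Lemma~\ref{LemmaSemanticsDiamondHistory}(2). The paper's proof is a two-line sketch citing Lemma~\ref{LemmaSemanticsDiamondHistory}; you have spelled out the details of the negation clause and the role of $s\bowtie\alpha$ that the paper leaves implicit, which is a perfectly reasonable expansion (and your remark about needing executability before applying the negation equivalence is exactly the point).
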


\begin{proof}
(1) is obtained directly from Lemma \ref{LemmaSemanticsDiamondHistory}. For (2), we also use the fact that $s,\alpha \satisfies [\beta] \phi$ if and only if $s,\alpha \satisfies \lnot \langle \beta\rangle \lnot \phi$, by Lemma \ref{LemmaSemanticsDiamondHistory}.
\end{proof}

\begin{corollary}
Let $\beta$ be a word over $\A \union \lang$ and $\phi \in \lang$ be a formula. For all states $s$,
\begin{align*}
(1) \quad s,\epsilon \satisfies \langle \beta \rangle \phi \quad &\text{iff} \quad s\bowtie\beta \text{ and } s,\beta \satisfies \phi \\
(2) \hspace{1em} s,\epsilon \satisfies \hspace{0.2em} [\beta] \phi \quad &\text{iff} \quad \text{if } s\bowtie \beta, \text{ then } s,\beta \satisfies \phi
\end{align*}
\end{corollary}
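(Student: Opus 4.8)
Both items are the special case $\alpha = \epsilon$ of the immediately preceding lemma, so the plan is simply to instantiate that lemma and simplify. For (1), the lemma gives $s,\epsilon \satisfies \langle \beta \rangle \phi$ iff $s \bowtie \epsilon\beta$ and $s,\epsilon\beta \satisfies \phi$; since $\epsilon\beta = \beta$ by definition of concatenation, this is exactly $s \bowtie \beta$ and $s,\beta \satisfies \phi$. For (2), the lemma gives $s,\epsilon \satisfies [\beta]\phi$ iff $s \bowtie \epsilon$ and, whenever $s \bowtie \epsilon\beta$, also $s,\epsilon\beta \satisfies \phi$. By the first clause of the definition of $\bowtie$ (Definition~\ref{defSemantics}), $s \bowtie \epsilon$ holds always, so the first conjunct is vacuous; together with $\epsilon\beta = \beta$ this reduces to: if $s \bowtie \beta$ then $s,\beta \satisfies \phi$.

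\textbf{Obstacle.} There is essentially no obstacle: the only things to observe are the two trivial facts $\epsilon\beta = \beta$ and $s \bowtie \epsilon$ (the latter from Definition~\ref{defSemantics}). If one prefers not to invoke the preceding lemma as a black box, one could instead re-derive (1) directly from Lemma~\ref{LemmaSemanticsDiamondHistory}(1) with $\alpha = \epsilon$, and (2) from Lemma~\ref{LemmaSemanticsDiamondHistory}(2) together with Proposition~\ref{propExecutabilityIfSatisfaction} (used to discharge the case $s \not\bowtie \beta$), but this merely re-runs the proof of the preceding lemma and adds nothing. I would therefore present the one-line argument: "Immediate from the preceding lemma by taking $\alpha = \epsilon$, since $\epsilon\beta = \beta$ and $s \bowtie \epsilon$ for every state $s$."
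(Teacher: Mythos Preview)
Your proposal is correct and matches the paper's approach exactly: the paper states this as a corollary with no proof, treating it as immediate from the preceding lemma by setting $\alpha = \epsilon$. Your observation that the only nontrivial points are $\epsilon\beta = \beta$ and the always-true $s \bowtie \epsilon$ is precisely the content of that instantiation.
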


This way we obtain an alternative definition for $\ast$-validities:

\begin{corollary}
For all $\phi \in \lang$, $\satisfies^\ast \phi$ if and only if for all models $\ModelM$, for all states $s \in W$ and for all words $\alpha \in \Words$ such that $s\bowtie \alpha$, $s,\alpha \satisfies \phi$.
\end{corollary}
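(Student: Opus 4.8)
The plan is to unfold the definition of $\ast$-validity and then invoke part (2) of the corollary immediately preceding the statement. By Definition \ref{defValidity}, $\satisfies^\ast\phi$ holds if, and only if, for every word $\alpha$ the formula $[\alpha]\phi$ is ($\epsilon$-)valid, that is, if, and only if, $s,\epsilon\satisfies[\alpha]\phi$ for every model $\ModelM$ and every state $s\in W$.

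Next I would apply part (2) of the preceding corollary with $\beta := \alpha$, which yields: $s,\epsilon\satisfies[\alpha]\phi$ if, and only if, $s\bowtie\alpha$ implies $s,\alpha\satisfies\phi$. Substituting this equivalence into the characterisation above, $\satisfies^\ast\phi$ becomes: for all models $\ModelM$, all states $s\in W$ and all words $\alpha\in\Words$, if $s\bowtie\alpha$ then $s,\alpha\satisfies\phi$. Since all the quantifiers involved are universal, reordering them gives exactly the right-hand side of the statement, so both directions of the equivalence are obtained simultaneously; no separate argument per direction is needed.

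The only point worth a remark is that the statement quantifies over all words $\alpha\in\Words$ rather than merely over histories, but this is harmless: by Lemma \ref{lemmaBowtieHistory} the hypothesis $s\bowtie\alpha$ already forces $\alpha$ to be a history, so any word that is not a history contributes only the vacuously true instance of the implication. Consequently there is no real obstacle: the corollary is an immediate bookkeeping consequence of the semantic characterisation of $[\beta]\phi$ established just above (ultimately resting on Lemma \ref{LemmaSemanticsDiamondHistory}), and the proof amounts to this substitution and rearrangement of quantifiers.
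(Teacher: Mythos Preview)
Your proposal is correct and matches the paper's intended argument: the corollary is stated without proof precisely because it follows immediately from Definition~\ref{defValidity} together with part~(2) of the preceding corollary, exactly as you spell out. Your remark about words versus histories via Lemma~\ref{lemmaBowtieHistory} is a nice clarification but not strictly needed, since the condition $s\bowtie\alpha$ already filters out the irrelevant cases.
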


\section{Axiomatisation $\AAstar$} \label{sectionAxiomatisation}

In this section we propose an axiomatisation for the set of always-validities AA$^\ast$. Our axiomatisation is not based on reduction axioms but displays a sort of reduction from $\ast$-validities to $\epsilon$-validities.

\paragraph*{Comparison of AA and AA$^\ast$.}

As expected, axiomatisation {\bf AA}$^\ast$ does not extend \textbf{AA} (see \cite{AA}), but should rather be seen as a restriction of {\bf AA}. Indeed the set of $\ast$-validities is included in that of $\epsilon$-validities (see Proposition \ref{propAlwaysValidImpliesValid}). The {\bf AA} axioms (A3): $[\alpha a]\F$ if $|\alpha|_a \geq |\alpha|_!$, and (A7): $[\alpha]K_a \phi \leftrightarrow [\alpha]\F \lor \bigwedge_{\alpha \view \beta} K_a [\beta]\phi$ are not $\ast$-valid.
To show that $\nvDash^\ast[\alpha a]\F$ if $|\alpha|_a \geq |\alpha|_!$, take $\beta= p. q$ and $\alpha =a$. Let $M=(W,\sim,V)$ be a model and $s\in W$ a state such that $s\bowtie \beta$, \emph{i.e.} $s\in V(p)$ and $s,\in V(q)$. Then $s,\beta \nvDash [\alpha  a]\F $ because $s\bowtie p. q. a. a$.

For $[\alpha]K_a \phi \leftrightarrow [\alpha]\F \lor \bigwedge_{\alpha \view \beta} K_a [\beta]\phi$, we give a counterexample in the single-agent case $\A = \lbrace a \rbrace$. Consider a model $M=(W,\sim,V)$ with two states $s,t \in W$ such that $s\sim_a t$. Suppose $p$ is true in $s$ and $t$ but $q$ is only valid in $s$. Let $ \beta= p. p. a$ and $\alpha = q. a$. We consider the formula $\phi := \lnot K_a q$. Note that $s\bowtie \beta \alpha$ so $s,\beta \nvDash [\alpha]\F$.

\begin{figure}[H]
\centering
%
%

\begin{tikzpicture}
\node (00) at (0,0) {$pq$};
\node (01) at (2,0) {$p\nq$};

\draw[<->] (00) --  (01);
\draw [->] (00) edge[loop above] (00);
\draw [->] (01) edge[loop above] (01);
\end{tikzpicture}

\end{figure}

\noindent Here $s,\beta \satisfies [\alpha] K_a  \phi$ because $s,p. p. a. q. a \satisfies K_a \lnot K_a q$. Indeed, \textbf{view}$_a(p . p. a. q. a)= \lbrace p. p. a. a \rbrace$ and $s,p. p. a. a \satisfies \lnot K_a q$ and $t,p. p. a. a \satisfies \lnot K_a q$ since $t,p. p. a. a \satisfies \lnot q$.
However $s,\beta \nvDash \bigwedge_{\alpha \view \gamma} K_a [\gamma]\lnot K_a q$ because $s,\beta \nvDash K_a [q. a]\lnot K_a q$. In fact we have $s,p. p. a \satisfies K_a [q. a]K_a q$. This is the case because \textbf{view}$_a(p. p. a)= \lbrace p. a \rbrace$, $s,p. a \satisfies [q. a]K_a q$ and trivially $t,p. a \satisfies [q. a]K_a q$ (because $t \not\bowtie p. a. q. a$).
Therefore, $ s,\beta \nvDash [\alpha]K_a \phi \rightarrow [\alpha]\F \lor \bigwedge_{\alpha \view \gamma}K_a[\gamma]\phi$.

\paragraph*{Always-validities cannot be eliminated.} Axiomatisation \textbf{AA$^\ast$} is not a reduction system because dynamic modalities cannot be eliminated from $\ast$-validities. Consider the formula $[a]\F \in \lang$. Suppose towards a contradiction that there is a formula $\phi \in \lang_{ml}$ without any dynamic modalities such that $\satisfies^\ast [a]\F \eq \phi$. Since $\satisfies [a]\F$, also $\satisfies \phi$. But now, $\epsilon$-validity and $\ast$-validity coincide for any formula in the language of basic modal logic. Hence $\satisfies^\ast\phi$. However, $\nvDash ^\ast [a]\F$ because obviously $\nvDash [\T][a]\F$. Therefore $\nvDash^\ast [a]\F \eq \phi$. This shows that reduction axioms cannot provide a complete axiomatisation for $\ast$-validities.

However, we can somehow reduce $\ast$-validities to $\epsilon$-validities although in an infinitary way. Indeed, for any formula $\phi \in \lang$, $\satisfies^\ast \phi$ if and only if $\satisfies [\alpha]\phi$ for all words $\alpha$. Then, showing that $\phi$ is $\ast$-valid boils down to showing that for all words $\alpha$, $[\alpha]\phi$ is $\epsilon$-valid. The intuitive idea is to go back from $\alpha$ to the initial empty history $\epsilon$ and analyse the formulas from there. In the following, we show how we can use such an idea to provide a complete axiomatisation for AA$^\ast$.\\

From now on, let $\A$ be a set with at least two agents---that this assumption is necessary will be explained later on. We define the following formula which is meant to express the fact that the current history is empty:
\begin{displaymath}
\emp := \bigwedge_{a\in \A}[a]\F \land \bigwedge_{a,b\in \A} K_a[b]\F.
\end{displaymath}

The first conjunct of this formula says that all agents have received all sent messages---namely none---and the second conjunct that all agents know this. Let us explain why both are necessary to express that the history is empty. In the following, we consider a state $s\in V(p)$ in a given model $M$. The first conjunct characterises that, for every agent $a\in \A$, $|\alpha|_!=|\alpha|_a$: this is not enough to enforce the history to be empty, as given two agents $a$ and $b$, $s,p.a.b\satisfies [a]\bot \land [b]\bot$. By itself, the second conjunct is not sufficient either since, \emph{e.g.}, $s,p\satisfies K_a([a]\bot \land [b]\bot)\land K_b([a]\bot \land [b]\bot)$: whenever she has not received any message herself, an agent always knows---or rather \emph{thinks}---there is nothing to receive so that all agents have received all messages. 
The history is only empty if both conjuncts are true.

\begin{lemma}\label{lemmaPiEpsilonWord}
For any model $(W,\sim,V)$, any state $s\in W$ and for all words $\alpha$ over $\lang\cup \A$, $s,\epsilon \satisfies \langle \alpha\rangle\emp$ if, and only, if $\alpha = \epsilon$.  Consequently $s,\alpha \satisfies \emp$ if and only if $\alpha = \epsilon$.
\end{lemma}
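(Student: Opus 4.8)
The plan is to prove the equivalence by proving each direction separately, with the right-to-left direction being trivial and the left-to-right direction being the substantive one. For the ($\Leftarrow$) direction: if $\alpha = \epsilon$, then $\langle\epsilon\rangle\emp = \emp$ and we must check $s,\epsilon\satisfies\emp$, i.e. $s,\epsilon\satisfies[a]\F$ for every $a$ and $s,\epsilon\satisfies K_a[b]\F$ for every $a,b$. The first holds because $|\epsilon|_a = 0 = |\epsilon|_!$ so $s\not\bowtie\epsilon.a$, hence by Proposition~\ref{PropExecutabilityHistoryNotFalse} (with $\alpha = \epsilon$, $\beta = a$) we get $s,\epsilon\satisfies[a]\F$. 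The second is similar: for any $(t,\gamma)$ with $\epsilon\view\gamma$, the view relation forces $\gamma\proj_{!a} = \epsilon\proj_{!a} = \epsilon$ and $\gamma\proj_! = \gamma\proj_{!a}$, so $\gamma$ contains no formulas and thus (being a history, by Lemma~\ref{lemmaBowtieHistory} applied to $t\bowtie\gamma$) no agent letters either, whence $\gamma = \epsilon$; then $t,\epsilon\satisfies[b]\F$ as before, so $s,\epsilon\satisfies K_a[b]\F$.

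For the ($\Rightarrow$) direction, I would argue by contraposition: suppose $\alpha \neq \epsilon$ and show $s,\epsilon\nvDash\langle\alpha\rangle\emp$. By Corollary after Lemma~\ref{LemmaSemanticsDiamondHistory} (item (1), with the empty word), $s,\epsilon\satisfies\langle\alpha\rangle\emp$ iff $s\bowtie\alpha$ and $s,\alpha\satisfies\emp$. So it suffices to show that whenever $s\bowtie\alpha$ and $\alpha\neq\epsilon$, we have $s,\alpha\nvDash\emp$. I would split on the structure of $\alpha$: since $\alpha$ is a history (Lemma~\ref{lemmaBowtieHistory}) and $\alpha\neq\epsilon$, its first symbol must be a formula $\psi$ (it cannot be an agent letter $a$, since then the prefix $a$ would violate $|a|_a \le |a|_!$). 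Write $\alpha = \psi\alpha'$. Now case on whether $\alpha'$ contains an occurrence of some agent letter:

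\begin{itemize}
\item If some agent $a$ occurs in $\alpha$ (necessarily in $\alpha'$), then $|\alpha|_a \geq 1$. I claim $s,\alpha\nvDash[a]\F$, equivalently $s,\alpha\satisfies\langle a\rangle\T$. This is not immediate because $s\bowtie\alpha.a$ may fail; the point is rather that $a$ has already read at least one formula, so $\textbf{view}_a(\alpha)$ is a nonempty set of histories each ending in a way that is executable, and I should exhibit that $\emp$ fails via the $K_a[b]\F$ conjunct instead. Concretely, pick the agent $a$ with $|\alpha|_a \geq 1$; then for the (unique-up-to-the-view-relation) history $\beta$ with $\alpha\view\beta$ and $t\bowtie\beta$ reachable via $s\sim_a t$ — in particular $\beta$ contains at least one formula — pick any agent $b\neq a$; since $b$ has read $|\beta|_b$ formulas and $\beta$ has $|\beta|_!\geq 1$ formulas, one checks that $s,\alpha\nvDash K_a[b]\F$ by producing a $\beta$ in which $b$ has in fact received a message, i.e. $|\beta|_b \geq 1$, so that $t,\beta\nvDash[b]\F$. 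Existence of such $\beta$ follows from the shape of the view relation and executability, using that $s\sim_a s$ so we may take $t = s$ and build $\beta$ from $\alpha\proj_{!a}$ by inserting $b$-receptions.
\item If no agent letter occurs in $\alpha$, then $\alpha = \psi_1\cdots\psi_k$ with $k\geq 1$, all letters being formulas. Then for any agent $a$ and the history $\beta$ with $\epsilon$... — wait, here $\alpha\view\beta$ requires $\beta\proj_{!a} = \alpha\proj_{!a}$, and $|\alpha|_a = 0$ so $\alpha\proj_{!a} = \epsilon$, forcing $\beta = \epsilon$; so the $K_a$-conjuncts are all satisfied. But now I use the first conjunct: $s,\alpha\nvDash[a]\F$ iff $s\bowtie\alpha.a$, and since $\alpha$ has $k\geq 1$ formula occurrences and $|\alpha|_a = 0 < k = |\alpha|_!$, the side-condition $|\alpha|_a < |\alpha|_!$ holds, so $s\bowtie\alpha$ together with this gives $s\bowtie\alpha.a$, i.e. $s,\alpha\satisfies\langle a\rangle\T$, i.e. $s,\alpha\nvDash[a]\F$. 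Hence $s,\alpha\nvDash\emp$.
\end{itemize}

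\textbf{Main obstacle.} The delicate point is the first case: when some agent has already received a message, one must show $\emp$ fails, and the natural witness is a $K_a[b]\F$-conjunct, which requires constructing a history $\beta$ with $\alpha\view\beta$, $s\sim_a s$ (so $t=s$), $s\bowtie\beta$, and $|\beta|_b\geq 1$ for some $b\neq a$ — and verifying that inserting a $b$-reception after the formulas $a$ has read yields an executable history in $s$. This uses that $b$ receiving a true announcement preserves executability, and that $\beta$ with $|\beta|_b \geq 1$ satisfies $\beta\nvDash[b]\F$. The bookkeeping about which conjunct of $\emp$ to refute in which case, and checking executability of the constructed $\beta$, is where the real work lies; everything else is a direct appeal to Lemma~\ref{lemmaBowtieHistory}, Proposition~\ref{PropExecutabilityHistoryNotFalse}, and the corollaries of Lemma~\ref{LemmaSemanticsDiamondHistory}. (Here the standing assumption that $\A$ has at least two agents is exactly what makes a suitable $b\neq a$ available.)
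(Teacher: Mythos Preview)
Your $(\Leftarrow)$ direction and your Case~2 in the $(\Rightarrow)$ direction are fine and match the paper. The problem is your Case~1.

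First, the criterion you state for refuting $[b]\F$ is wrong. You write that you want a $\beta$ with ``$|\beta|_b \geq 1$, so that $t,\beta\nvDash[b]\F$''. But given $t\bowtie\beta$, the condition $t,\beta\nvDash[b]\F$ is equivalent to $t\bowtie\beta b$, i.e.\ to $|\beta|_b < |\beta|_!$ --- agent $b$ has an \emph{unread} message, not that $b$ has read at least one. So ``inserting $b$-receptions'' moves you in the wrong direction; you want a $\beta$ in which $b$ has \emph{not} read everything.

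Second, and more seriously, even with the correct target condition, your construction ``build $\beta$ from $\alpha\proj_{!a}$ by inserting receptions'' does not come with any guarantee that $s\bowtie\beta$. The formulas in $\alpha\proj_{!a}$ were true at the points of $\alpha$ where they were announced; there is no reason they remain true at the corresponding points of your rebuilt $\beta$, since the intervening history is different. For instance if $\alpha = p.a.b.(K_b p).a$ then $\alpha\proj_{!a} = p.(K_b p)$, but in the candidate $\beta = p.a.(K_b p).a$ the announcement $K_b p$ need not be true at $(s,p.a)$ because $b$ has not yet received $p$ there. So executability of $\beta$ is a genuine obstacle that your sketch does not overcome.

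The paper avoids both issues by a different case split: on the \emph{last} letter of $\alpha$ rather than on whether an agent letter occurs. If $\alpha = \alpha'\phi$, then $|\alpha|_c < |\alpha|_!$ for every $c$, so the first conjunct $[c]\F$ fails --- this subsumes your Case~2. If $\alpha = \alpha' a$, the paper \emph{uses} the assumption $s,\alpha\satisfies\emp$ (rather than trying to refute it outright): from $\bigwedge_c[c]\F$ one gets $|\alpha|_c = |\alpha|_!$ for every $c$, and in particular $|\alpha'|_b = |\alpha'|_!$ for $b\neq a$, which is exactly what is needed to show $\alpha' a \tri_b \alpha'$. Since $\alpha'$ is a \emph{prefix} of $\alpha$, executability $s\bowtie\alpha'$ is immediate from Lemma~\ref{lemmaExecutabilityPrefix}, and $|\alpha'|_a = |\alpha'|_! - 1 < |\alpha'|_!$ gives $s,\alpha'\nvDash[a]\F$. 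Hence $s,\alpha\nvDash K_b[a]\F$. The key trick you are missing is to take $\beta$ to be a prefix of $\alpha$ so that executability comes for free, and to exploit the $\bigwedge_c[c]\F$ part of $\emp$ to force the view relation to hold.
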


\begin{proof}
Let $(W,R,V)$ be a model and $s\in W$ be a state. Let $\alpha$ be a word.

Suppose $s,\epsilon \satisfies \langle \alpha \rangle \emp$. Then $s\bowtie \alpha$ and $s,\alpha \satisfies \emp$. Hence, by Proposition \ref{lemmaBowtieHistory}, $\alpha$  is a history. Now suppose, towards a contradiction, that $\alpha \neq \epsilon$. Then $\alpha = \alpha' \phi$ or $\alpha = \alpha' a$ for some formula $\phi \in \lang$ or some agent $a\in\A$. 
If $\alpha = \alpha'\phi$ then, since $s\bowtie \alpha$, $\alpha'\phi$ and also $\alpha'$ are histories, so in particular, for any $a\in \A$, $|\alpha'|_a \leq |\alpha'|_!$. Hence $|\alpha'\phi|_a < |\alpha'\phi|_!$ for all $a\in\A$, so $s\bowtie\alpha'\phi a$ for all $a\in\A$. Therefore $s,\alpha \nvDash [a]\F$\footnote{We even have $s,\alpha \satisfies \bigwedge_{a\in \A}[a]\T$. Here, $\phi$ can be seen as an unread formula.}. Hence, $ s,\alpha \nvDash \bigwedge_{a\in \A}[a]\F$. Therefore $s,\alpha \nvDash \emp$.
If $\alpha = \alpha'a$, since $s,\alpha \satisfies \emp$, in particular $ s,\alpha' a \satisfies \bigwedge_{c\in \A}[c]\F$ so $s,\alpha' a \satisfies [a]\F$. From this and the fact that $\alpha' a$ is a history we get $|\alpha' a |_a = |\alpha' a |_!$. Hence $|\alpha'|_a = |\alpha' a|_a -1 = |\alpha'|_! -1$ so $s,\alpha' \nvDash [a]\F$. Now, consider another agent $b\in \A, b\neq a$. Note that, by Definition \ref{defViewRelation}, $\alpha'a \tri_b \alpha'$, because $(\alpha' a )\proj_{!b}= \alpha' \proj_! = \alpha' \proj_{!a}$. Moreover, $s\sim_b s$ and $s,\alpha' \nvDash [a]\F$. Also, since $s,\alpha' a \satisfies \emp$, by Proposition \ref{propExecutabilityIfSatisfaction}, $s\bowtie \alpha' a$ and then $s\bowtie \alpha'$, by Proposition \ref{lemmaExecutabilityPrefix}. Therefore $s,\alpha'a \nvDash K_b[a]\F$. So $ s,\alpha \nvDash \bigwedge_{c,d\in\A}K_c[d]\F$. Hence $s,\alpha \nvDash \emp$.
In both cases we get a contradiction. Therefore, $\alpha = \epsilon$.

Conversely, if $\alpha = \epsilon$ then, obviously, $ \ s,\epsilon \satisfies \bigwedge_{a\in \A}[a]\F$. Now, let $a,b\in \A$. Since \textbf{view}$_a(\epsilon) = \lbrace \epsilon \rbrace$, we have $t,\epsilon \satisfies [b]\F$ for all states $t$ such that $s\sim_at$. Therefore $s,\epsilon \satisfies K_a[b]\F$. As agents $a,b$ were arbitrary, we conclude that $ s,\epsilon \satisfies \bigwedge_{a,b\in \A} K_a[b]\F$. Hence $s,\epsilon \satisfies \emp$.
\end{proof}

\begin{corollary}\label{corollaryPiValid}
$\satisfies \emp$.
\end{corollary}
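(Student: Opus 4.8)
The plan is to derive $\satisfies \emp$ directly from Lemma~\ref{lemmaPiEpsilonWord} by instantiating the word $\alpha$ to be the empty word $\epsilon$. Recall that $\satisfies \emp$ means, by Definition~\ref{defValidity}, that for every model $\ModelM$ and every state $s \in W$ we have $s, \epsilon \satisfies \emp$. So let $\ModelM$ be an arbitrary model and $s \in W$ an arbitrary state; it suffices to establish $s, \epsilon \satisfies \emp$.

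Now apply Lemma~\ref{lemmaPiEpsilonWord} with $\alpha := \epsilon$: the lemma states that $s, \epsilon \satisfies \langle \alpha \rangle \emp$ if, and only if, $\alpha = \epsilon$. Since $\alpha = \epsilon$ holds trivially, the right-hand side of the biconditional is satisfied, hence $s, \epsilon \satisfies \langle \epsilon \rangle \emp$. But $\langle \epsilon \rangle \emp$ is by definition just $\emp$ (recall $\langle \epsilon \rangle \phi := \phi$ from the inductive definition of $\langle \alpha \rangle$), so $s, \epsilon \satisfies \emp$. As $\ModelM$ and $s$ were arbitrary, we conclude $\satisfies \emp$.

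There is essentially no obstacle here: the corollary is an immediate specialisation of the preceding lemma, and the only thing to check is the purely syntactic identity $\langle \epsilon \rangle \emp = \emp$, which is immediate from the definition of the word-modality. One could alternatively give a direct one-line argument reproducing just the ``conversely'' direction of the proof of Lemma~\ref{lemmaPiEpsilonWord} (namely that $s, \epsilon \satisfies \bigwedge_{a \in \A}[a]\F$ since $\epsilon$ cannot be extended by a reception, and $s, \epsilon \satisfies K_a[b]\F$ since $\textbf{view}_a(\epsilon) = \lbrace \epsilon \rbrace$), but invoking the lemma is cleaner. I would simply write: ``Immediate from Lemma~\ref{lemmaPiEpsilonWord} by taking $\alpha = \epsilon$, since $\langle \epsilon \rangle \emp = \emp$.''
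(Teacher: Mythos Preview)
Your proof is correct and is exactly the intended argument: the paper states the corollary without proof precisely because it follows immediately from Lemma~\ref{lemmaPiEpsilonWord} by taking $\alpha = \epsilon$ (equivalently, by the ``conversely'' direction of that lemma's proof).
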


From Lemma \ref{lemmaPiEpsilonWord} and the definition of the view relation, it is also easy to show that whenever the history is empty, every agent knows it. This is due to our assumption that agents do not imagine histories with announcements they have not yet received.

\begin{corollary}\label{corollaryKPiValid}
    $\satisfies \emp \imp K_a \emp$.
\end{corollary}

\begin{proof}
    Let us consider a model $(W,\sim,V)$, a state $s\in W$ and a word $\alpha$. Suppose $s,\alpha \satisfies \emp$. By Lemma \ref{lemmaPiEpsilonWord}, this implies $\alpha = \epsilon$. Now consider a state $t$ and a history $\beta$ such that $s\sim_a t$ and $\alpha \view \beta$. By Definition \ref{defViewRelation}, \textbf{view}$_a(\epsilon)=\lbrace \epsilon \rbrace$ so $\beta = \epsilon$. Then, by Lemma \ref{lemmaPiEpsilonWord} again, $t,\beta \satisfies \emp$. Therefore $s,\alpha \satisfies K_a \emp$.
\end{proof}

\noindent With only one agent, Lemma \ref{lemmaPiEpsilonWord} does not hold anymore. Indeed, an agent $a$ always knows\footnote{Here it would be more precise to say that any agent always `believes' she has read all messages but we prefer to stick to the notion of knowledge. See Section \ref{sectionBorK} for motivation.} that she has read all messages: for all words $\alpha$, $[\alpha] K_a[a]\F$ is valid. Indeed: suppose there are $s$ a state and $\alpha$ a word such that $s\bowtie \alpha$ but $s,\alpha \nvDash K_a [a]\F$. That means there are a state $t$ and a history $\beta$ such that $s\sim_a t, \alpha \view \beta, t\bowtie \beta$, and $t,\beta \nvDash [a]\F$. This implies $|\beta|_a < |\beta|_!$. But, by Definition \ref{defViewRelation}, since $\alpha \view \beta$, $|\beta|_a = |\beta|_!$. This yields a contradiction. Therefore, $\satisfies [\alpha]K_a[a]\phi$ for all words $\alpha$. Hence $K_a[a]\F$ is $\ast$-valid. Therefore, for any history $\alpha$ such that $|\alpha|_a = |\alpha|_!$ and any state $s$ such that $s\bowtie\alpha$, $s,\alpha \satisfies [a]\F \land K_a[a]\F$. For instance, if $s\in V(p)$, $s,p.a\satisfies [a]\F \land K_a[a]\F$. The single-agent case will be discussed in Section \ref{sectionAxiomOneAgent}.\\

The following lemma offers a first link between validities and always-validities, showing how formula $\emp$ can be used to reduce $\ast$-validities to $\epsilon$-validities.

\begin{lemma}\label{lemmaRelationPiValidities}
For all formulas $\phi \in \lang$, $\satisfies^\ast \emp \imp \phi$ if and only if $\satisfies \phi$.
\end{lemma}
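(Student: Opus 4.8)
The plan is to prove both directions separately, exploiting the fact that $\emp$ characterises the empty history in the sense of Lemma~\ref{lemmaPiEpsilonWord}, together with the alternative characterisation of $\ast$-validity as "$s,\alpha \satisfies \phi$ for all states $s$ and all words $\alpha$ with $s\bowtie\alpha$" (the last corollary of Section~2).

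For the easy direction ($\Rightarrow$), suppose $\satisfies^\ast \emp \imp \phi$. I want to show $\satisfies\phi$, i.e. $s,\epsilon\satisfies\phi$ for every model $M$ and state $s$. Fix such $M$ and $s$. Since $s\bowtie\epsilon$ always holds, the alternative characterisation of $\ast$-validity gives $s,\epsilon\satisfies\emp\imp\phi$. By Corollary~\ref{corollaryPiValid} (or directly from the "conversely" part of Lemma~\ref{lemmaPiEpsilonWord} with $\alpha=\epsilon$) we have $s,\epsilon\satisfies\emp$, so by the semantics of implication $s,\epsilon\satisfies\phi$. Since $M,s$ were arbitrary, $\satisfies\phi$.

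For the converse direction ($\Leftarrow$), suppose $\satisfies\phi$; I must show $\satisfies^\ast\emp\imp\phi$, i.e. $s,\alpha\satisfies\emp\imp\phi$ for every model $M$, state $s$ and word $\alpha$ with $s\bowtie\alpha$. Fix such $M$, $s$, $\alpha$. If $s,\alpha\nvDash\emp$ then $s,\alpha\satisfies\lnot\emp$ (using $s\bowtie\alpha$), hence $s,\alpha\satisfies\emp\imp\phi$ and we are done. Otherwise $s,\alpha\satisfies\emp$; combined with $s\bowtie\alpha$ this gives $s,\epsilon\satisfies\langle\alpha\rangle\emp$ by Lemma~\ref{LemmaSemanticsDiamondHistory}(1), so Lemma~\ref{lemmaPiEpsilonWord} forces $\alpha=\epsilon$. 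Then $s,\alpha\satisfies\phi$ is just $s,\epsilon\satisfies\phi$, which holds since $\satisfies\phi$, and therefore $s,\alpha\satisfies\emp\imp\phi$.

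The only subtlety to be careful about is the three-valued flavour of the semantics: $s,\alpha\satisfies\emp\imp\phi$ unpacks, via the abbreviations for $\imp$ and $\lor$, to $s,\alpha\satisfies\lnot\emp$ or $s,\alpha\satisfies\phi$, and the clause for $\lnot$ requires $s\bowtie\alpha$ to be in effect — which it is throughout, by the alternative characterisation of $\ast$-validity. So there is no real obstacle here; the statement is essentially a corollary of Lemma~\ref{lemmaPiEpsilonWord} packaged with the executability bookkeeping, and the main thing to get right is quoting the correct earlier lemmas rather than any genuine difficulty.
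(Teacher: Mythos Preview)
Your proof is correct and follows essentially the same approach as the paper's: both directions hinge on Lemma~\ref{lemmaPiEpsilonWord} and Corollary~\ref{corollaryPiValid}, and the converse argues by case split on whether $s,\alpha\satisfies\emp$. The only cosmetic differences are that you work with the alternative characterisation of $\ast$-validity from the end of Section~2 (rather than the defining form $\satisfies[\alpha](\emp\imp\phi)$), and you make explicit the intermediate step $s,\epsilon\satisfies\langle\alpha\rangle\emp$ via Lemma~\ref{LemmaSemanticsDiamondHistory}, which the paper leaves implicit.
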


\begin{proof}
Let $\phi \in \lang$ be a formula.
Suppose $\satisfies^\ast \emp \imp \phi$. This means for all models $\ModelM$, states $s\in W$ and words $\alpha$, $s,\epsilon \satisfies [\alpha](\emp \imp \phi)$. In particular, for all models $\ModelM$ and states $s\in W$, $s,\epsilon \satisfies \emp \imp \phi$, so $s,\epsilon \satisfies \phi$ because, by Corollary \ref{corollaryPiValid}, $s,\epsilon \satisfies \emp$. Hence $\satisfies \phi$.

Conversely, suppose $\satisfies \phi$. Let $\ModelM$ be a model, $s\in W$ a state and $\alpha$ a word. Suppose $s\bowtie \alpha$ and $s,\alpha \satisfies \emp$. Then, by Lemma \ref{lemmaPiEpsilonWord}, $\alpha = \epsilon$. Now $s,\epsilon \satisfies \phi$ by hypothesis so $s,\epsilon \satisfies \emp \imp \phi$. Since $\alpha = \epsilon$ and $[\epsilon]\psi = \psi$ by definition (for all formula $\psi$), $s,\epsilon \satisfies [\alpha](\emp \imp \phi)$. Therefore $\satisfies^\ast \emp \imp \phi$.
\end{proof}

Finally, the following proposition establishes the kind of reduction from $\ast$-validities to $\epsilon$-validities that we need to axiomatise AA$^\ast$.

\begin{proposition}\label{propLinkValiditiesforAxiomatisation}
\begin{align*}
\satisfies^\ast \phi \quad &\Eq \quad \satisfies [\alpha]\phi & &\text{for all words } \alpha \\
\quad &\Eq \quad \satisfies^\ast \emp \imp [\alpha]\phi & &\text{for all words } \alpha
\end{align*}
\end{proposition}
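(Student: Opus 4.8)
The first equivalence is immediate from the definition of $\ast$-validity (Definition~\ref{defValidity}) together with the observation that $\satisfies [\alpha]\phi$ for all words $\alpha$ is literally what it means for $\phi$ to be $\ast$-valid; no work is needed here beyond quoting the definition. The substance of the proposition lies in the second equivalence, $\satisfies [\alpha]\phi \Leftrightarrow\ \satisfies^\ast \emp \imp [\alpha]\phi$, which must hold for each fixed word $\alpha$. The plan is to apply Lemma~\ref{lemmaRelationPiValidities} with the formula $[\alpha]\phi$ substituted for $\phi$: that lemma states precisely $\satisfies^\ast \emp \imp \psi$ iff $\satisfies \psi$, for any $\psi \in \lang_{aa}$, so taking $\psi := [\alpha]\phi$ (which is indeed a formula of $\lang_{aa}$, since $[\alpha]$ abbreviates a string of dynamic modalities) gives the desired equivalence directly.

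So the proof is essentially a two-line chain of citations: first unfold the definition of $\ast$-validity to get the top line, then invoke Lemma~\ref{lemmaRelationPiValidities} (instantiated at $[\alpha]\phi$) for the bottom line. One should state explicitly that the universal quantifier over words $\alpha$ is carried through unchanged on both sides, so that the two ``for all words $\alpha$'' clauses match up and the conjunction of the per-$\alpha$ equivalences yields the full statement.

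There is essentially no obstacle; the only thing to be careful about is that Lemma~\ref{lemmaRelationPiValidities} is stated for an arbitrary formula of $\lang_{aa}$, and one must note that $[\alpha]\phi$ genuinely is such a formula (it is the abbreviation $\lnot\langle\alpha\rangle\lnot\phi$, and $\langle\alpha\rangle$ was defined inductively on $\alpha$ as a composition of the primitive modalities $\langle a\rangle$ and $\langle\psi\rangle$), so the lemma applies. Everything else is a direct substitution, and the proof can be written in a few sentences.
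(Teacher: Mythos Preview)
Your proposal is correct and mirrors the paper's own proof exactly: the first equivalence is unfolded from Definition~\ref{defValidity}, and the second is obtained by instantiating Lemma~\ref{lemmaRelationPiValidities} at $\psi := [\alpha]\phi$ for each word $\alpha$. Your additional remark that $[\alpha]\phi$ is a genuine $\lang_{aa}$-formula is a helpful clarification but not strictly needed.
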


\begin{proof} Let $\phi$ be a formula. By Definition \ref{defValidity}, $\satisfies^\ast \phi$ if and only if $\satisfies[\alpha]\phi$ for all words $\alpha$. Moreover, for all words $\alpha$, $\satisfies [\alpha]\phi $ if and only if $\satisfies ^\ast \emp \imp [\alpha]\phi$, by Lemma \ref{lemmaRelationPiValidities}.
\end{proof}

We have now all the ingredients to present our axiomatisation for AA$^\ast$.


\begin{definition}[Axiomatisation AA$^\ast$]
The axiomatisation \textbf{AA$^\ast$} is composed of the axioms and rules in Table \ref{AAstar}.

\begin{table}[h]
\begin{center}
\begin{tabular}{|ll|}
\hline
          		 & all instances of tautologies \\
(Dist)    		 & $K_a(\phi \imp \psi) \imp (K_a\phi \imp K_a\psi)$ 	 \\
(Dist!)   		 & $[\alpha](\phi \imp \psi) \imp ([\alpha]\phi \imp [\alpha]\psi)$  \\	  
($\emp$K)  		 & $\emp \imp K_a \emp$  	\\								   
($\emp$T)  		 & $\emp \imp (K_a\phi \imp \phi)$  	\\
($4$)   	     & $K_a\phi \imp K_aK_a\phi$ \\   
($5$)  	         & $\Ka\phi \imp K_a\Ka \phi$ 		 \\
(Exec!$_1$)   	 & $\langle \phi \rangle \T \eq \phi$  \\
(Exec!$_2$)    	 & $[\alpha]\langle a \rangle \T$  \hspace{4.9em} if  $|\alpha|_a < |\alpha|_!$  \\
(Exec!$_3$)   	 & $\emp \imp [\alpha][a]\F$ \hspace{1em} if $|\alpha|_a \geq |\alpha|_!$  \\

(Func!)    	 	 & $\langle \alpha \rangle \phi \imp [\alpha]\phi$ \\
(Perm!)     	 & $(p \imp [\alpha]p) \land (\lnot p \imp [\alpha]\lnot p) $  \\
($\emp$!)        &  $\displaystyle \emp \imp \Bigl([\alpha]K_a\phi \eq \Bigl( [\alpha]\F \lor \bigwedge_{\alpha \view \beta} K_a[\beta]\phi \Bigr) \Bigr)$ \\
(MP)     		 & from $\phi$ and $\phi \imp \psi$, infer $\psi$ \\
(NecK)     		 & from $\phi$, infer $K_a\phi$	\\
(Nec!)     		 & from $\phi$, infer $[\alpha]\phi$ \\
(R$^\ast$) 		 &  from $\emp \imp [\alpha]\phi$ for all words $\alpha$, infer $\phi$  \\

\hline
\end{tabular}
\end{center}
\caption{Axiomatisation \textbf{AA$^\ast$}}\label{AAstar}
\end{table}
\label{tabletwo}
\end{definition}


As for epistemic modalities, we have a distribution axiom and a necessitation rule for dynamic modalities. Axiom T ($K_a \phi \imp \phi$) for knowledge factivity is here prefixed by $\emp$ because epistemic modalities display equivalence relations in the initial model---\emph{i.e.} with the empty history---only. Indeed, while axioms 4 and 5 are both $\ast$-valid, axiom T (as well as axiom B ($p \imp K_a \Ka p$) that corresponds to symmetry) is only $\epsilon$-valid.
Axioms (Exec!$_i$) express executability conditions for histories and represent the properties of the agreement relation. In particular, (Exec!$_1$) represents the constraint that a formula can be announced if and only if it is true; and (Exec!$_3$) expresses the fact that any word that is not a history, here such that $|\alpha|_a > |\alpha|_!$, is not executable. (Perm!) corresponds to atomic permanence: announcements do not change the atoms truth value.
Finally, ($\emp$!) is obtained from \textbf{AA} by prefixing axiom (A7): $[\alpha]K_a \phi \leftrightarrow [\alpha]\F \lor \bigwedge_{\alpha \view \beta} K_a [\beta]\phi$ with $\emp$: this is a typical case where $\ast$-validities are reduced to $\epsilon$-validities through the use of formula $\emp$.
As for the rules, the so-called \emph{modus ponens} (MP) and the necessitation rules (NecK) and (Nec!) are as expected, while (R$^\ast$) is an infinitary rule that provides the first necessary step towards reducing $\ast$-validities to $\epsilon$-validities, following Proposition \ref{propLinkValiditiesforAxiomatisation}.

Note that by definition $[\alpha]\phi := \lnot \langle \alpha \rangle \lnot \phi$ so obviously $[\alpha]\phi \eq \lnot \langle \alpha \rangle \lnot \phi \in \AAstar$. Then also $\langle \alpha \rangle \phi \eq \lnot [\alpha]\lnot \phi \in \AAstar$ by necessitation, distribution and basic propositional reasoning, so we do not need to add a specific axiom for duality of the dynamic modalities.

\begin{example}
We show that $[\alpha]\phi \imp ([\alpha]\F \lor \langle \alpha \rangle \phi)$ is a theorem of $\AAstar$:

\begin{align*}
    (1) \quad &\phi \imp (\lnot \phi \imp \F) 								        & &\text{propositional tautology}\\
    (2) \quad &[\alpha](\phi \imp (\lnot \phi \imp \F))    					        & &\text{by Necessitation } (1) \\
    (3) \quad &[\alpha]\phi \imp [\alpha](\lnot \phi \imp \F)  				        & &\text{(Dist!) and Modus Ponens } (2)\\
    (4) \quad &[\alpha]\phi \imp ([\alpha]\lnot \phi \imp [\alpha]\F)		        & &\text{(Dist!), propositional reasoning } (3) \\
    (5) \quad &[\alpha]\phi \imp (\lnot [\alpha]\lnot\phi \lor [\alpha]\F)	        & &\text{propositional reasoning } (4) \\
    (6) \quad &[\alpha]\phi \imp (\langle \alpha \rangle \phi \lor [\alpha]\F)		& &\text{by definition of }[\cdot]
\end{align*}

It is also easy to show that $(\langle \alpha \rangle \phi \lor [\alpha]\F) \imp [\alpha]\phi \in \AAstar$:
\begin{align*}
    (1) \quad &\langle \alpha \rangle \phi \imp [\alpha]\phi                    &
    &\text{(Func!)} \\
    (2) \quad &\F \imp \phi                                                     &
    &\text{proposition tautology}\\
    (3) \quad &[\alpha](\F \imp \phi)                                           &
    &\text{(Nec!) } (2)\\
    (4) \quad &[\alpha]\F \imp [\alpha]\phi                                     &
    &\text{(Dist!) and Modus Ponens }(3)\\
    (5) \quad &([\alpha]\bot \lor \langle \alpha \rangle \phi) \imp[\alpha]\phi &
    &\text{propositional reasoning } (1),(4)
\end{align*}

Therefore, $[\alpha]\phi \eq ([\alpha]\F \lor \langle \alpha \rangle \phi)$ is a theorem of $\AAstar$.
\end{example}

\begin{example}
\label{propDerivabilityKaAllRead}
As an example of how the (R$^\ast$) rule is used, we show that $\AAstar \vdash K_a[a]\F$.

To show that $K_a[a]\F$ is a theorem, we show that $\emp \imp [\alpha]K_a [a]\F$ is derivable, for all words $\alpha$. Let $\alpha$ be a word. We have:
\begin{align*}
(1) \ &\emp \imp [\beta][a]\F & &\text{for all $\alpha \view \beta$, by } (Exec!_3)\text{, since } |\beta|_a = |\beta|_! \\
(2) \ &K_a (\emp \imp [\beta][a]\F) & &\text{for all $\alpha \view \beta$, by } (NecK) (1) \\
(3) \ &K_a \emp \imp K_a[\beta][a]\F & &\text{for all $\alpha \view \beta$, by } (DistK) (2) \\
(4) \ &\emp \imp K_a \emp & &\text{by } (\emp K) \\
(5) \ &\emp \imp K_a [\beta][a]\F & &\text{for all $\alpha \view \beta$ by propositional reasoning } (3,4) \\
(6) \ &\emp \imp \bigwedge_{\alpha \view \beta} K_a [\beta][a]\F & &\text{by propositional reasoning } (5) \\
(7) \ &\emp \imp \Bigl( [\alpha]\F \lor \bigwedge_{\alpha \view \beta} K_a [\beta][a]\F \Bigr)  & &\text{by propositional reasoning } (6) \\
(8) \ &\emp \imp \Bigl( [\alpha]K_a [a]\F \eq \Bigl( [\alpha]\F \lor \! \bigwedge_{\alpha \view \beta} \! K_a [\beta][a]\F \Bigr) \Bigr) & &\text{by } (\emp!) \\
(9) \ &\emp \imp [\alpha]K_a [a]\F & &\text{by propositional reasoning } (7,8)
\end{align*}

\noindent Therefore, for all words $\alpha$, $\emp \imp [\alpha]K_a[a]\F$ is a theorem. By rule (R$^\ast$), then, $K_a[a]\F$ is a theorem.

\end{example}

\hfill \\

We now state the main results about this axiomatisation, namely its soundness and completeness.

\begin{theorem}[Soundness]\label{Soundness}
$\AAstar$ is sound w.r.t. AA$^\ast$ \emph{i.e.} if $\phi \in \textbf{AA}^\ast$ then $\satisfies^\ast \phi$.
\end{theorem}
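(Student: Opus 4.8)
The plan is to prove soundness in the standard way: show that every axiom in Table~\ref{AAstar} is $\ast$-valid and that every inference rule preserves $\ast$-validity, so that by induction on the length of a derivation every theorem of $\AAstar$ is $\ast$-valid. It is convenient throughout to use the alternative characterisation of $\ast$-validity established just before the axiomatisation: $\satisfies^\ast \chi$ iff for all models $\ModelM$, all states $s$ and all words $\alpha$ with $s\bowtie\alpha$, $s,\alpha\satisfies\chi$. So in each case I fix an epistemic model, a state $s$ and a word $\alpha$ with $s\bowtie\alpha$, and check the formula holds at $(s,\alpha)$.

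For the axioms I would group them as follows. The propositional tautologies, (Dist), (4), (5) are handled by the usual modal arguments, using that each $\sim_a$ is an equivalence relation and that the view relation is transitive and Euclidean (Proposition on the view relation) — note that (4) and (5) only need transitivity and the Euclidean property of the combined ``$\sim_a$ and $\view$'' relation, not reflexivity, which is why T is absent. (Dist!) and (Func!) follow from the semantic clause for $\langle\alpha\rangle$ together with Lemma~\ref{LemmaSemanticsDiamondHistory}: $\langle\alpha\rangle$ behaves as a partial-functional diamond, since $s,\alpha\satisfies\langle\alpha\rangle\phi$ iff $s\bowtie\alpha\beta$ and $s,\alpha\beta\satisfies\phi$, and the continuation $(s,\alpha\beta)$ is uniquely determined. (Perm!) is atomic permanence, proved by a routine induction on $\beta$ using the clause for atoms ($s,\alpha\satisfies p$ iff $s\bowtie\alpha$ and $s\in V(p)$) and the fact that executability never changes the underlying state. (Exec!$_1$) is immediate from Corollary~\ref{coSemanticsDiamondHistory}(1) with $\beta=\phi$, combined with the clause for $\langle\phi\rangle$; (Exec!$_2$) and (Exec!$_3$) follow from Lemma~\ref{lemmaPiEpsilonWord}-style reasoning about when $\alpha\beta$ is a history and Proposition~\ref{PropExecutabilityHistoryNotFalse} / Corollary~\ref{coSemanticsDiamondHistory}, distinguishing whether $|\beta|_a<|\beta|_!$ or not. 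The $\emp$-prefixed axioms ($\emp$K), ($\emp$T), (Exec!$_3$), ($\emp$!) all exploit Lemma~\ref{lemmaPiEpsilonWord}: if $s,\alpha\satisfies\emp$ then $\alpha=\epsilon$, so one only needs to verify the consequent at $(s,\epsilon)$; for ($\emp$T) this is the $\epsilon$-validity of T (equivalence relation at the origin), for ($\emp$K) the $\epsilon$-validity of $\emp\imp K_a\emp$ which is exactly the ``conversely'' direction of Lemma~\ref{lemmaPiEpsilonWord} plus the fact that $\epsilon$ is the only history in $\textbf{view}_a(\epsilon)$, and for ($\emp$!) it is the $\epsilon$-validity of the corresponding instance, which is essentially \textbf{AA}'s axiom (A7) and whose soundness one can either re-derive via the semantic clause for $K_a$ and Lemma~\ref{LemmaSemanticsDiamondHistory}, or cite from \cite{AA} after checking the semantics agree (which the paper promises in the Appendix).

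For the rules: (MP) trivially preserves $\ast$-validity; (NecK) and (Nec!) preserve it because if $\chi$ holds at every epistemic state $(t,\beta)$ then in particular it holds at every $\view$-successor and at every $(s,\alpha a)$, $(s,\alpha\phi)$, giving $K_a\chi$ resp.\ $[\alpha]\chi$ at every epistemic state. The genuinely new rule is (R$^\ast$): from $\emp\imp[\alpha]\phi$ being $\ast$-valid for \emph{all} words $\alpha$, infer $\phi$. Its soundness is precisely Proposition~\ref{propLinkValiditiesforAxiomatisation}: if $\satisfies^\ast\emp\imp[\alpha]\phi$ for all $\alpha$, then by that proposition $\satisfies[\alpha]\phi$ for all $\alpha$, which by Definition~\ref{defValidity} means $\satisfies^\ast\phi$. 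So this case is not an obstacle provided Proposition~\ref{propLinkValiditiesforAxiomatisation} is in hand.

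The main obstacle is soundness of ($\emp$!), i.e.\ the $\epsilon$-validity of $[\alpha]K_a\phi \eq \bigl([\alpha]\F \lor \bigwedge_{\alpha\view\beta}K_a[\beta]\phi\bigr)$. One has to carefully unwind: at $(s,\epsilon)$, $s,\epsilon\satisfies[\alpha]K_a\phi$ means ``if $s\bowtie\alpha$ then $s,\alpha\satisfies K_a\phi$'', and $s,\alpha\satisfies K_a\phi$ quantifies over pairs $(t,\gamma)$ with $s\sim_a t$, $\alpha\view\gamma$, $t\bowtie\gamma$; one must match these $\gamma$ against histories $\beta$ with $\alpha\view\beta$ and use that $\view$ composed appropriately relates $\beta$ to the same $\gamma$'s — essentially the transitivity/Euclideanness of $\view$ and the finiteness of $\textbf{view}(\alpha)$ — handling separately the case $s\not\bowtie\alpha$ (where $[\alpha]\F$ holds) via Proposition~\ref{PropExecutabilityHistoryNotFalse}. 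This is the same computation underlying (A7) in \cite{AA}, so I would either reproduce it in a few lines using Lemma~\ref{LemmaSemanticsDiamondHistory} and the view-relation properties, or invoke the Appendix's equivalence of semantics and cite \cite{AA}; everything else is routine.
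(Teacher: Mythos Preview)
Your proposal is correct and follows exactly the approach the paper indicates: verify that all axioms are $\ast$-valid and all rules preserve $\ast$-validity. In fact the paper's own proof is just the single sentence ``It is enough to show that all axioms are $\ast$-valid and that all rules preserve $\ast$-validity. Proof is left to the reader'', so your sketch is considerably more detailed than what the paper provides; the crucial ingredients you identify (Lemma~\ref{lemmaPiEpsilonWord} to reduce the $\emp$-prefixed axioms to the empty history, Proposition~\ref{propLinkValiditiesforAxiomatisation} for rule (R$^\ast$), transitivity/Euclideanness of $\view$ for (4) and (5), and Lemma~\ref{LemmaSemanticsDiamondHistory} for the dynamic modalities) are exactly right.
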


\begin{proof}
It is enough to show that all axioms are $\ast$-valid and that all rules preserve $\ast$-validity.
Corollary \ref{corollaryKPiValid} shows the validity of ($\emp$K) whereas the other axioms are easily dealt with, using the definition of the semantics or the results for $\epsilon$-validities established in \cite{AA} together with Lemma \ref{lemmaRelationPiValidities}. As for the rules, the soundness of (Nec!) results from the definition of always-validities while that of (MP) and (NecK) is standardly established. Finally, Proposition \ref{propLinkValiditiesforAxiomatisation} shows that (R$^\ast$) preserves $\ast$-validity.
\end{proof}

\begin{theorem}[Completeness]\label{Completeness}
$\AAstar$ is complete w.r.t. AA$^\ast$ \emph{i.e.} if $\satisfies ^\ast \phi$ then  $\phi \in \textbf{AA}^\ast$.
\end{theorem}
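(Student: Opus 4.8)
The plan is to prove completeness by contraposition: assuming $\phi \notin \textbf{AA}^\ast$, I will produce a model, a state, and a word witnessing $\nvDash^\ast \phi$. The key reduction is Proposition~\ref{propLinkValiditiesforAxiomatisation} together with rule (R$^\ast$): since $\phi$ is not a theorem, rule (R$^\ast$) cannot have been applicable with conclusion $\phi$, so there is some word $\alpha$ with $\AAstar \nvdash \emp \imp [\alpha]\phi$. By Proposition~\ref{propLinkValiditiesforAxiomatisation} it then suffices to show $\nvDash \emp \imp [\alpha]\phi$, i.e.\ to build a model and state $(M,s)$ with $s,\epsilon \satisfies \emp \land \lnot[\alpha]\phi$, equivalently $s \bowtie \alpha$ and $s,\alpha \nvDash \phi$. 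So the problem is reduced to a \emph{fixed-word} completeness question, where the dynamic modalities $[\alpha], [a]$ act as abbreviations relative to a known history.

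First I would set up a canonical model construction. The natural route is to work with the epistemic fragment: for the fixed $\alpha$, consider the set of subformulas of $[\alpha]\phi$ (suitably closed under the equivalences provable in $\AAstar$, in particular using axiom ($\emp$!) to rewrite $[\alpha']K_a\psi$ formulas and (Exec!$_1$), (Perm!), (Func!) to handle the non-epistemic dynamic parts), and build a canonical model out of maximal $\AAstar$-consistent sets containing $\emp$. Axioms ($\emp$K), ($\emp$T), (4), (5) guarantee that on the $\emp$-cone the relations $\sim_a$ are equivalence relations (S5), so the canonical epistemic model is a genuine epistemic model in the sense of the paper. Then I would prove a Truth Lemma by $\ll$-induction on $(\alpha,\phi)$ simultaneously with an executability lemma: for a maximal consistent set $\Gamma$ (corresponding to state $s$) and a word $\beta$, $s \bowtie \beta$ iff $\lnot[\beta]\F \in \Gamma$ (using Proposition~\ref{PropExecutabilityHistoryNotFalse} on the semantic side and (Exec!$_1$), (Exec!$_2$), (Exec!$_3$) on the syntactic side), and $s,\beta \satisfies \psi$ iff $[\beta]\psi \in \Gamma$ — wait, more precisely iff $\langle\beta\rangle\psi \in \Gamma$ when $s\bowtie\beta$. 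The key inductive case is $K_a$ under a word prefix: here axiom ($\emp$!) is exactly what converts $[\alpha]K_a\psi$ into $[\alpha]\F \lor \bigwedge_{\alpha \view \beta} K_a[\beta]\psi$, matching the semantic clause for $\Ka$ with the view relation, and the finiteness of $\textbf{view}(\alpha)$ keeps this conjunction finite.

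The main obstacle I expect is the interaction between the \emph{infinitary} rule (R$^\ast$) and the canonical model / Lindenbaum construction. Maximal consistent sets need to be well-behaved with respect to (R$^\ast$): a consistent set must be extendable to one that is maximal and still consistent, and one must be careful that the Lindenbaum lemma goes through despite the infinitary rule — this typically requires showing that (R$^\ast$) has an "analytic" or bounded character, or that one only ever needs to reason about derivability of $\emp \imp [\alpha]\phi$ for the single relevant $\alpha$, thereby sidestepping the full strength of (R$^\ast$) in the model construction. Concretely: the clean strategy is never to build a canonical model for all of $\AAstar$, but, for the fixed failing word $\alpha$, to work inside the sublogic generated by the axioms without (R$^\ast$) plus the single hypothesis $\emp$, show $\AAstar_{-R^\ast} \nvdash \emp \imp [\alpha]\phi$ (this is where the absence of (R$^\ast$) in this step must be justified — presumably because (R$^\ast$) is only needed to derive dynamic-modality-free conclusions and does not help derive $\emp \imp [\alpha]\phi$ that wasn't already derivable), extend $\{\emp, \lnot[\alpha]\phi\}$ to a maximal consistent set in that finitary system, and run the canonical construction there.

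Finally, once the Truth Lemma gives $s,\alpha \nvDash \phi$ with $s \bowtie \alpha$ in the constructed model, Proposition~\ref{propLinkValiditiesforAxiomatisation} (via $\nvDash \emp \imp [\alpha]\phi$) yields $\nvDash^\ast \phi$, completing the contrapositive. I would also remark that the two-agent assumption ($|\A| \geq 2$) is used exactly in Lemma~\ref{lemmaPiEpsilonWord} to make $\emp$ pin down the empty history — with one agent $K_a[a]\F$ is $\ast$-valid and $\emp$ fails to do its job, which is why the single-agent case (Section~\ref{sectionAxiomOneAgent}) is left open.
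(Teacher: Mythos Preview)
Your overall strategy matches the paper's: contrapose, use the failure of (R$^\ast$) to obtain a word $\alpha$ with $\emp \imp [\alpha]\phi$ underivable, extend $\{\emp,\lnot[\alpha]\phi\}$ to a maximal consistent theory $\Sigma$, build a canonical model on the maximal consistent theories $\equiv$-reachable from $\Sigma$, and prove a Truth Lemma by $\ll$-induction with the $K_a$-case handled via ($\emp$!). The paper's Truth Lemma is phrased slightly differently---it simultaneously tracks a base world $\Delta \in W_\Sigma$ and a ``shifted'' theory $\Gamma$ with $\Delta \leq_\alpha \Gamma$, proving $\phi \in \Gamma$ iff $\Delta,\alpha \satisfies \phi$ for all (equivalently some) such $\Delta$, together with $\langle\alpha\rangle\T \in \Lambda$ iff $\Lambda \bowtie \alpha$---but your sketch is compatible with this.

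Where you diverge is in handling the infinitary rule in the Lindenbaum step. Your proposed fix---retreat to $\AAstar_{-\text{R}^\ast}$ and build the canonical model there---could in principle be made to work (the Truth Lemma only appeals to the axioms, never to (R$^\ast$)-derived theorems), but the justification you give is wrong: (R$^\ast$) is not restricted to ``dynamic-modality-free conclusions''; its conclusion is an arbitrary $\phi$, and for instance $K_a[a]\F$ is obtained precisely via (R$^\ast$) (Example~\ref{propDerivabilityKaAllRead}). The paper's route is the standard one for infinitary proof systems and avoids the detour entirely: a \emph{theory} is defined as any set containing all of $\AAstar$ and closed under \emph{modus ponens only}---not under (NecK), (Nec!), or (R$^\ast$). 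Since every theory already contains every theorem, including those produced by (R$^\ast$), the infinitary rule never has to fire again during the Lindenbaum extension, and the ordinary countable-enumeration argument goes through unchanged (Lemma~\ref{LindenbaumLemma}). Consequently there is no need for a word-specific or subformula-restricted construction; the canonical model is uniform, and (R$^\ast$) is invoked exactly once, at the very first step, to extract the witnessing $\alpha$.
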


The proof is provided in the following section.

\section{Completeness of $\AAstar$} \label{sectionCompleteness}

In this section, we show that $\AAstar$ is complete w.r.t. the set of always-validities AA$^\ast$. To do so, we prove the contrapositive of Theorem \ref{Completeness}: from a formula $\phi \notin \AAstar$, we identify a word $\alpha$ and construct a model wherein $[\alpha]\phi$ is not true, thereby demonstrating that $\phi$ is not $\ast$-valid. The proof is based on the method of canonical model construction.

In order to demonstrate Theorem \ref{Completeness}, we first define \emph{theories} and prove some of their important properties.


\begin{definition}[Theory]\label{DefTheory}
    A \emph{theory T} is a set of formulas that satisfies the following conditions:\\
    \indent $(i)$ T contains all the formulas derivable in $AA^\ast$, \textit{i.e.} $AA^\ast \subseteq T$\\
    \indent $(ii)$ T is closed under modus ponens, \textit{i.e.} if  $\phi \in T$ and $\phi \imp \psi \in T$ then $\psi \in T$.
\end{definition}

\begin{definition}[Maximal consistent theory]
    A theory $T$ is \emph{consistent} if and only if $\F \notin T$. A consistent theory $T$ is \emph{maximal consistent} if and only if no consistent theory $T'$ strictly contains $T$.
\end{definition}

Note that the only inconsistent theory is the set $\lang$ of all formulas. Hence, whenever there is a formula $\phi$ such that $\phi \notin T$, the theory $T$ is consistent.

\begin{lemma}\label{lemmaTheories}
    Let $T$ be a theory, $\chi \in \lang$ a formula and $a$ an agent. The following sets are also theories:
    \begin{align*}
        (1)\quad &T + \chi = \lbrace \phi \in \lang \ | \ \chi \imp \phi \in T \rbrace \\
        (2)\quad &K_aT = \lbrace \phi \in \lang \ | \ K_a \phi \in T \rbrace \\
        (3)\quad &[\alpha]T = \lbrace \phi \in \lang \ | \ [\alpha]\phi \in T \rbrace
    \end{align*}
\end{lemma}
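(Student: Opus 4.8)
The statement asserts that three set-constructions on a theory $T$ yield theories. Recall from Definition \ref{DefTheory} that a set $S$ is a theory if it contains all $\AAstar$-derivable formulas and is closed under modus ponens. So for each of the three sets, I must verify these two closure conditions. The plan is to handle the three cases in turn, each time first checking condition $(i)$ (containing $\AAstar$) and then condition $(ii)$ (closure under modus ponens), using in an essential way that $T$ itself is a theory and the appropriate distribution axiom plus necessitation rule.

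\textbf{Case (1): $T + \chi$.} For $(i)$: if $\phi \in \AAstar$, then $\phi \imp (\chi \imp \phi)$ is a tautology, hence in $\AAstar \subseteq T$, and $\phi \in T$, so by modus ponens in $T$ we get $\chi \imp \phi \in T$, i.e. $\phi \in T + \chi$. For $(ii)$: suppose $\phi \in T + \chi$ and $\phi \imp \psi \in T + \chi$, i.e. $\chi \imp \phi \in T$ and $\chi \imp (\phi \imp \psi) \in T$. Using the tautology $(\chi \imp (\phi \imp \psi)) \imp ((\chi \imp \phi) \imp (\chi \imp \psi))$ (which lies in $\AAstar \subseteq T$) and modus ponens twice inside $T$, we obtain $\chi \imp \psi \in T$, i.e. $\psi \in T + \chi$.

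\textbf{Case (2): $K_aT$.} For $(i)$: if $\phi \in \AAstar$, then by the necessitation rule (NecK), $K_a\phi \in \AAstar \subseteq T$, so $\phi \in K_aT$. For $(ii)$: suppose $\phi \in K_aT$ and $\phi \imp \psi \in K_aT$, i.e. $K_a\phi \in T$ and $K_a(\phi \imp \psi) \in T$. The distribution axiom (Dist), $K_a(\phi \imp \psi) \imp (K_a\phi \imp K_a\psi)$, is in $\AAstar \subseteq T$; applying modus ponens in $T$ twice gives $K_a\psi \in T$, i.e. $\psi \in K_aT$.

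\textbf{Case (3): $[\alpha]T$.} This is identical in structure to Case (2), now using the rule (Nec!) for $(i)$ and the axiom (Dist!), $[\alpha](\phi \imp \psi) \imp ([\alpha]\phi \imp [\alpha]\psi)$, for $(ii)$. I would note that (Nec!) and (Dist!) are parametrised by the fixed word $\alpha$, so the argument goes through verbatim. There is no real obstacle here; the only thing to be careful about is that closure under modus ponens and the availability of the necessitation rules and distribution axioms are exactly what is needed, and that each verification stays inside $T$ (never appealing to $\AAstar$-derivability beyond what $(i)$ for $T$ guarantees). I would probably present Case (1) in full detail and then remark that Cases (2) and (3) follow the same pattern with (NecK)/(Dist) and (Nec!)/(Dist!) respectively.
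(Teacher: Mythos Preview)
Your proposal is correct and follows essentially the same approach as the paper's proof: verifying conditions $(i)$ and $(ii)$ of Definition~\ref{DefTheory} for each of the three sets, using tautologies for $T+\chi$, (NecK)/(Dist) for $K_aT$, and (Nec!)/(Dist!) for $[\alpha]T$. The structure and the key ingredients are identical.
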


\begin{proof}
    We check items $(i)$ and $(ii)$ of Definition \ref{DefTheory}.\\
    $(i)$ We show that all sets contain \textbf{AA$^\ast$}. For this, let $\phi \in \textbf{AA}^\ast$.
    \begin{itemize}
        \item Since $\phi \imp (\chi \imp \phi)$ is a tautology, $\phi \imp (\chi \imp \phi) \in \AAstar$ so $\chi \imp \phi \in \AAstar$ by modus ponens. Then $\chi \imp \phi \in T$, since $T$ is a theory. Therefore $\phi \in T + \chi$. Hence $\textbf{AA}^\ast \subseteq T + \chi$.
        \item By necessitation, $K_a\phi \in \textbf{AA}^\ast$ so $K_a\phi \in T$, because $T$ is a theory. Hence $\phi \in K_a T$. Therefore $\textbf{AA}^\ast \subseteq K_aT $.
        \item Likewise, by necessitation, $[\alpha]\phi \in \textbf{AA}^\ast \subseteq T$ so $\phi \in [\alpha]T$. Hence $\textbf{AA}^\ast \subseteq [\alpha]T$.
    \end{itemize}
    
    $(ii)$ We show that all sets are closed under modus ponens.
    \begin{itemize}
        \item Suppose $\phi, \phi \imp \psi \in T + \chi$. Then by definition $\chi \imp \phi \in T$ and $\chi \imp (\phi \imp \psi) \in T$. Since $(\chi \imp (\phi \imp \psi)) \imp ((\chi \imp \phi) \imp (\chi \imp \psi))$ is a tautology, $(\chi \imp (\phi \imp \psi)) \imp ((\chi \imp \phi) \imp (\chi \imp \psi)) \in T$. By modus ponens, then, $(\chi \imp \phi) \imp (\chi \imp \psi) \in T$ so $\chi \imp \psi \in T$. Hence $\psi \in T + \chi$.

        \item Suppose $\phi, \phi \imp \psi \in K_a T$. By necessitation, $K_a\phi \in T$ and $K_a(\phi \imp \psi)\in T$. Now, by distributivity and modus ponens $K_a\phi \imp K_a\psi \in T$ so $K_a\psi \in T$. Hence $\psi \in K_aT$.

        \item Suppose $\phi, \phi \imp \psi \in [\alpha]T$. By definition, $[\alpha]\phi \in T$ and $[\alpha](\phi\imp\psi) \in T$ so by distributivity and modus ponens $[\alpha]\phi \imp [\alpha]\psi \in T$ so $[\alpha]\psi \in T$. Hence $\psi \in [\alpha]T$.
    \end{itemize}
    \end{proof}

    \begin{lemma}\label{lemmaConsistencyTheories}
        Let $T$ be a theory and $\chi$ a formula. Then $T \subseteq T + \chi$ and $\chi \in T +\chi$. Moreover, if $\lnot \chi \notin T$ then $T + \chi$ is consistent, and if $\chi \notin T$ then $T + \lnot\chi$ is consistent.
    \end{lemma}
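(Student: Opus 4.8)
The plan is to verify each of the three assertions in turn, using only the axioms, the basic definitions of theory and consistency, and elementary propositional reasoning inside the theory $T$.

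\textbf{Step 1: $T \subseteq T + \chi$ and $\chi \in T + \chi$.} For the first inclusion, take $\phi \in T$. Since $\phi \imp (\chi \imp \phi)$ is a tautology it lies in $\AAstar \subseteq T$, so by closure under modus ponens $\chi \imp \phi \in T$, which by definition means $\phi \in T + \chi$. For $\chi \in T + \chi$, observe that $\chi \imp \chi$ is a tautology, hence in $T$, so by definition $\chi \in T + \chi$. Both are immediate.

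\textbf{Step 2: if $\lnot \chi \notin T$ then $T + \chi$ is consistent.} I would argue by contraposition: suppose $T + \chi$ is inconsistent, i.e. $\F \in T + \chi$. By definition this means $\chi \imp \F \in T$. Since $(\chi \imp \F) \imp \lnot \chi$ is a tautology (recall $\F := \lnot \T$ and the abbreviation for negation/implication), it lies in $T$, so by modus ponens $\lnot \chi \in T$. Contrapositively, if $\lnot \chi \notin T$ then $\F \notin T + \chi$, i.e. $T + \chi$ is consistent. (Here I am using that $T + \chi$ is a theory, established in Lemma \ref{lemmaTheories}, so that ``inconsistent'' is equivalent to ``contains $\F$''.)

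\textbf{Step 3: if $\chi \notin T$ then $T + \lnot \chi$ is consistent.} This is just Step 2 applied with $\lnot \chi$ in place of $\chi$: by Step 2, $T + \lnot\chi$ is consistent provided $\lnot\lnot\chi \notin T$. But $\lnot\lnot\chi \imp \chi$ is a tautology, hence in $T$, so if $\lnot\lnot\chi \in T$ then by modus ponens $\chi \in T$, contradicting $\chi \notin T$. Thus $\lnot\lnot\chi \notin T$, and consistency follows.

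I do not expect any real obstacle here; the only thing to be careful about is that ``consistent'' has been defined as ``does not contain $\F$'', so each sub-claim must be reduced to membership of $\F$, and the reductions rely on the relevant propositional tautologies being available in every theory (which holds because every theory contains $\AAstar$, which contains all tautology instances). The argument is entirely propositional and uses Lemma \ref{lemmaTheories} only to know that $T + \chi$ and $T + \lnot\chi$ are themselves theories.
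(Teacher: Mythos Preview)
Your proof is correct and follows essentially the same approach as the paper: verifying $T \subseteq T+\chi$ and $\chi \in T+\chi$ via the tautologies $\phi \imp (\chi \imp \phi)$ and $\chi \imp \chi$, and reducing consistency of $T+\chi$ (resp.\ $T+\lnot\chi$) to the equivalence between $\lnot\chi$ and $\chi\imp\F$ (resp.\ using $\chi \eq \lnot\lnot\chi$). The only cosmetic difference is that you phrase Step~2 by contraposition whereas the paper argues directly that $\lnot\chi \notin T$ implies $\chi\imp\F \notin T$; the content is identical.
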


\begin{proof}
    If $\phi \in T$ then $\chi \imp \phi \in T$ so $\phi \in T +\chi$. Hence $T \subseteq T + \chi$. Moreover, since $\chi \imp \chi$ is a tautology, $\chi \imp \chi \in T$ so $\chi \in T +\chi$.

    If $\lnot\chi \notin T$ then $\chi \imp \F \notin T$ so $\F \notin T + \chi$. Hence $T + \chi$ is consistent. And if $\chi \notin T$ then $\lnot\lnot \chi \notin T$ (because $\chi \eq \lnot\lnot \chi \in T$ since it is an instance of a tautology) so $\lnot \chi \imp \F \notin T$. Hence $\F \notin T +\lnot\chi$ so $T + \lnot \chi$ is consistent.
\end{proof}

Finally, we can show that

\begin{lemma}\label{lemmaPhiOrNotPhiMCS}
If $\Gamma$ is a maximal consistent theory, then for all formulas $\phi$, either $\phi \in \Gamma$ or $\lnot\phi \in \Gamma$.
\end{lemma}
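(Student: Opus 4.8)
The plan is to argue by contradiction using the two auxiliary constructions already available, namely $T + \chi$ and its basic properties. Suppose $\Gamma$ is a maximal consistent theory and let $\phi$ be a formula with $\phi \notin \Gamma$. I want to conclude $\lnot\phi \in \Gamma$.

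First I would form the set $\Gamma + \lnot\phi = \lbrace \psi \in \lang_{aa} \mid \lnot\phi \imp \psi \in \Gamma \rbrace$. By Lemma \ref{lemmaTheories}(1), this is again a theory. By Lemma \ref{lemmaConsistencyTheories}, since $\phi \notin \Gamma$ we have that $\Gamma + \lnot\phi$ is consistent, and moreover $\Gamma \subseteq \Gamma + \lnot\phi$ and $\lnot\phi \in \Gamma + \lnot\phi$.

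Now I would invoke maximality of $\Gamma$. We have exhibited a consistent theory, $\Gamma + \lnot\phi$, that contains $\Gamma$. By the definition of maximal consistency, no consistent theory \emph{strictly} contains $\Gamma$, so the inclusion $\Gamma \subseteq \Gamma + \lnot\phi$ cannot be strict; hence $\Gamma + \lnot\phi = \Gamma$. Since $\lnot\phi \in \Gamma + \lnot\phi$, this gives $\lnot\phi \in \Gamma$, as desired. (The two alternatives $\phi \in \Gamma$ and $\lnot\phi \in \Gamma$ are not claimed to be exclusive here, although consistency of $\Gamma$ rules out both holding simultaneously, since $\phi \et \lnot\phi \imp \F$ is a tautology and $\Gamma$ is closed under modus ponens.)

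I do not expect a genuine obstacle in this proof; it is a routine consequence of the lemmas on theories. The only points requiring a little care are: making sure $\Gamma + \lnot\phi$ really is a \emph{theory} (so that it is an admissible witness against maximality), which is exactly Lemma \ref{lemmaTheories}(1); making sure it is \emph{consistent}, which is the second clause of Lemma \ref{lemmaConsistencyTheories} applied to $\chi = \phi$; and reading the definition of maximality correctly, i.e.\ that ``no consistent theory strictly contains $\Gamma$'' forces equality rather than mere non-proper inclusion.
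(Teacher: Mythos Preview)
Your proof is correct and follows essentially the same approach as the paper: both use Lemma~\ref{lemmaTheories} and Lemma~\ref{lemmaConsistencyTheories} to form a consistent extension of $\Gamma$ and then invoke maximality. The only cosmetic difference is that the paper assumes $\phi \notin \Gamma$ and $\lnot\phi \notin \Gamma$ and extends by $\phi$ (consistent since $\lnot\phi \notin \Gamma$) to get a strict extension contradicting maximality, whereas you assume $\phi \notin \Gamma$ and extend by $\lnot\phi$ to force equality; these are symmetric variants of the same argument.
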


\begin{proof}
Let $\Gamma$ be a maximal consistent theory and $\phi \in \lang$ a formula such that $\phi \notin \Gamma$ and $\lnot\phi \notin\Gamma$. Then $\Gamma' := \Gamma + \phi$ is consistent and $\Gamma \subsetneq \Gamma'$, which contradicts the fact that $\Gamma$ is maximal consistent. So either $\phi \in \Gamma$ or $\lnot \phi \in \Gamma$.
\end{proof}

\begin{corollary}\label{corollMaxConsTheories}
If $\Gamma$ is a maximal consistent theory, then for all formulas $\phi, \psi$, if $\phi \lor \psi \in \Gamma$ then either $\phi \in \Gamma$ or $\psi \in \Gamma$.
\end{corollary}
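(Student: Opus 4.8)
The plan is to derive this directly from Lemma~\ref{lemmaPhiOrNotPhiMCS}, which tells us a maximal consistent theory decides every formula, together with the fact that a theory is closed under all derivable implications (property $(i)$ plus closure under modus ponens, property $(ii)$). First I would assume $\phi \lor \psi \in \Gamma$ and, towards a contradiction, suppose $\phi \notin \Gamma$ and $\psi \notin \Gamma$. By Lemma~\ref{lemmaPhiOrNotPhiMCS} applied to $\phi$ and to $\psi$, we then have $\lnot\phi \in \Gamma$ and $\lnot\psi \in \Gamma$.

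Next I would invoke the propositional tautology $\lnot\phi \imp (\lnot\psi \imp \lnot(\phi \lor \psi))$, which belongs to $\AAstar$ as an instance of a tautology, hence to $\Gamma$ since $\AAstar \subseteq \Gamma$. Two applications of modus ponens (using closure under modus ponens, i.e. condition $(ii)$ of Definition~\ref{DefTheory}) with $\lnot\phi \in \Gamma$ and $\lnot\psi \in \Gamma$ yield $\lnot(\phi \lor \psi) \in \Gamma$. But $\phi \lor \psi \in \Gamma$ as well, and since $(\phi \lor \psi) \imp (\lnot(\phi \lor \psi) \imp \F)$ is again a tautology in $\Gamma$, two more applications of modus ponens give $\F \in \Gamma$, contradicting the consistency of $\Gamma$. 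Therefore $\phi \in \Gamma$ or $\psi \in \Gamma$.

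There is no real obstacle here: the argument is entirely propositional bookkeeping, and every ingredient — that $\Gamma$ contains all tautologies and is closed under modus ponens, and that $\Gamma$ decides $\phi$ and $\psi$ — is already available from Definition~\ref{DefTheory} and Lemma~\ref{lemmaPhiOrNotPhiMCS}. The only point requiring a moment's care is to make sure each tautology used is genuinely propositional (so that it is an instance of a tautology and hence in $\AAstar$), which the two tautologies above are. Alternatively, one could phrase the whole thing contrapositively: if $\phi \notin \Gamma$ and $\psi \notin \Gamma$ then $\lnot\phi, \lnot\psi \in \Gamma$, so $\lnot(\phi\lor\psi) \in \Gamma$, whence $\phi \lor \psi \notin \Gamma$ by consistency; this is the same proof read backwards and is perhaps cleaner to write down.
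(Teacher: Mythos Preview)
Your proof is correct and follows exactly the approach the paper intends: the corollary is stated immediately after Lemma~\ref{lemmaPhiOrNotPhiMCS} with no separate proof, and your derivation from that lemma together with closure under tautologies and modus ponens is the standard argument that the label ``Corollary'' signals.
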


\begin{lemma}[Lindenbaum's Lemma]\label{LindenbaumLemma}
If $T$ is a consistent theory, then there is a maximal consistent theory $\Sigma$ such that $T \subseteq \Sigma$. 
\end{lemma}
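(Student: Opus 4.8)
The plan is to run the standard enumerate-and-extend construction, but with one twist specific to this axiomatisation: besides closing under the usual "decide each formula" step, I must also make sure the infinitary rule (R$^\ast$) is respected, since a maximal consistent \emph{theory} in the sense of Definition \ref{DefTheory} need not automatically be closed under a rule with infinitely many premises. Since $\lang_{aa}$ is countable, fix an enumeration $\phi_0, \phi_1, \phi_2, \dots$ of all formulas. I would build an increasing chain of consistent theories $T = T_0 \subseteq T_1 \subseteq T_2 \subseteq \cdots$ as follows. Given $T_n$, consider $\phi_n$. By Lemma \ref{lemmaConsistencyTheories}, at least one of $T_n + \phi_n$ and $T_n + \lnot\phi_n$ is consistent (if $\lnot\phi_n \notin T_n$ take the former; otherwise $\phi_n \in T_n$ is impossible by consistency of $T_n$, so $\lnot\phi_n \in T_n$ is not the issue — more carefully: if $\lnot\phi_n \in T_n$ then $T_n + \lnot\phi_n = T_n$ is consistent and we set $T_{n+1} := T_n + \lnot\phi_n$; if $\lnot\phi_n \notin T_n$ we set $T_{n+1} := T_n + \phi_n$, consistent by Lemma \ref{lemmaConsistencyTheories}). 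Note each $T_{n+1}$ is again a theory by Lemma \ref{lemmaTheories}(1).

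Then set $\Sigma := \bigcup_{n} T_n$. First I would check $\Sigma$ is a theory: it contains $\AAstar$ since $T_0$ does, and it is closed under modus ponens because if $\phi, \phi \imp \psi \in \Sigma$ then both lie in some common $T_n$ (the chain is increasing), which is closed under modus ponens, so $\psi \in T_n \subseteq \Sigma$. Next, $\Sigma$ is consistent: if $\F \in \Sigma$ then $\F \in T_n$ for some $n$, contradicting consistency of $T_n$. Finally, $\Sigma$ is maximal: for every formula $\phi = \phi_n$, by construction either $\phi_n \in T_{n+1} \subseteq \Sigma$ or $\lnot\phi_n \in T_{n+1} \subseteq \Sigma$; so by Lemma \ref{lemmaPhiOrNotPhiMCS}'s argument in reverse — or rather directly — any consistent theory $\Sigma'$ strictly containing $\Sigma$ would contain some $\phi_n \notin \Sigma$, hence $\lnot\phi_n \in \Sigma \subseteq \Sigma'$, making $\Sigma'$ inconsistent; so no such $\Sigma'$ exists and $\Sigma$ is maximal consistent, with $T = T_0 \subseteq \Sigma$.

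The one point that deserves care — and what I expect to be the main obstacle, or at least the only non-routine observation — is the status of the infinitary rule (R$^\ast$). A "theory" as defined here is only required to contain $\AAstar$ and be closed under modus ponens; it is \emph{not} required to be closed under (NecK), (Nec!), or (R$^\ast$). So Lindenbaum's Lemma as stated is purely propositional in flavour and the construction above suffices verbatim; closure of the canonical worlds under the modal and infinitary rules will be handled separately, later in the completeness proof, via the truth lemma and the specific clauses for $K_a$ and $[\alpha]$. Thus I would not attempt to build closure under (R$^\ast$) into $\Sigma$ here, and the proof stays entirely at the level of Lemmas \ref{lemmaTheories}, \ref{lemmaConsistencyTheories} and \ref{lemmaPhiOrNotPhiMCS}. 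If instead one wanted maximal consistent theories closed under (R$^\ast$), a transfinite or more delicate interleaving would be needed (at each stage also adding $\phi$ whenever all premises $\emp \imp [\alpha]\phi$ are already present), and preserving consistency through that step would be the real work — but the statement as given does not ask for this.
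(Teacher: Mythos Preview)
Your proof is correct and follows essentially the same construction as the paper: enumerate the formulas, at each step add $\phi_n$ when $\lnot\phi_n \notin T_n$ (and leave $T_n$ unchanged otherwise, which is exactly your $T_n + \lnot\phi_n$ in that case), take the union, and verify it is a maximal consistent theory using Lemmas \ref{lemmaTheories} and \ref{lemmaConsistencyTheories}. Your closing discussion of (R$^\ast$) is a correct and worthwhile observation---since Definition \ref{DefTheory} only requires closure under modus ponens, the infinitary rule plays no role here---though the paper's own proof does not comment on this point.
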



\begin{proof}
Let $T$ be a consistent theory. Let $\lbrace \phi_k \ | \ k \in \Nat \rbrace$ be an enumeration of the formulas in $\lang$. For each $k \in \Nat$ we construct a consistent theory $T_k$ as follows:
\begin{align*}
&T_0 := T \\
&T_{k+1} := \left\{
	\begin{array}{ll}
        T_k + \phi_k & \mbox{if} \ \lnot\phi_k \notin T_k \\
        T_k & \mbox{otherwise}
    \end{array}
    \right.
\end{align*}
Note that by construction, we get from Lemma \ref{lemmaConsistencyTheories}, for all $k \in \Nat, \ T_k \subseteq T_{k+1}$ and, since $T$ is consistent, each $T_k$ is consistent.

Now we define $\Sigma := \bigcup_{k \in \Nat}T_k$. By construction, $T \subseteq \Sigma$. We show that $\Sigma$ is a maximal consistent theory:
\begin{itemize}
\item $\Sigma$ is a theory: $(i)$ $\AAstar \subseteq T \subseteq \Sigma$ and for $(ii)$, suppose $\phi \in \Sigma$ and $\phi \imp \psi \in \Sigma$. Then there is $k,l \in \Nat$  such that $\phi \in T_k$ and $\phi\imp \psi \in T_l$. Let $n := max(k,l)$. Then $T_k, T_l \subseteq T_n$ and $\phi,\phi\imp\psi \in T_n$. Since $T_n$ is a theory, it is closed by modus ponens so $\psi \in T_n \subseteq \Sigma$. Hence $\psi \in \Sigma$.
\item $\Sigma$ is consistent because each $T_k$ is consistent.
\item $\Sigma$ is maximal consistent: suppose there is a theory $\Sigma'$ such that $\Sigma \subsetneq \Sigma'$. Then there is $k\in \Nat$ such that $\phi_k \in \Sigma'$ but $\phi_k \notin \Sigma$. By construction, that implies $\lnot\phi_k \in T_k$ so $\lnot\phi_k \in \Sigma$  and then $\lnot\phi_k \in \Sigma'$. Hence $\Sigma'$ is not consistent. 
\end{itemize} 
\end{proof}

In order to define a canonical model, we define the following relations.

\begin{definition}\label{defCanonicalRelations}
Let $\chi$ be a formula and $a$ an agent. For $\Gamma,\Delta$ maximal consistent theories, we define the following relations:
\begin{align*}
&\Gamma \equiv_a \Delta & &\text{iff} & &K_a\Gamma \subseteq \Delta  \quad \textit{ i.e. for all formulas } \phi, \quad K_a \phi \in \Gamma \Imp \phi \in \Delta \\
&\Gamma \leq_a \Delta   & &\text{iff} & &[a] \Gamma \subseteq \Delta   \quad \textit{ i.e. for all formulas } \phi, \quad  [a]\phi \in \Gamma \Imp \phi \in \Delta \\
&\Gamma \leq_\chi \Delta   & &\text{iff} & &[\chi] \Gamma \subseteq \Delta   \quad \textit{ i.e. for all formulas } \phi, \quad  [\chi]\phi \in \Gamma \Imp \phi \in \Delta
\end{align*}
If $\alpha = \alpha_1\cdots\alpha_n$, for $\alpha_i$ either an agent or a formula, let $\leq_\alpha \ := \ \leq_{\alpha_1} \circ \cdots \circ \leq_{\alpha_n}$. 
We further define $\equiv$ as the reflexive and transitive closure of the union of all $\equiv_a$.
\end{definition}

We need to stress that $\equiv$ is not an equivalence relation because it is not symmetric. The proof will be given later, since it needs the Existence Lemma stated below.

\begin{lemma}\label{lemmaCompositionWordRelation}
    Let $\Gamma,\Delta$ be maximal consistent theories, and $\alpha,\beta$ be words. Then $[\alpha\beta]\Gamma \subseteq \Delta$, if and only if there is a maximal consistent theory $\Lambda$ such that $[\alpha]\Gamma \subseteq \Lambda$ and $[\beta]\Lambda \subseteq \Delta$.
\end{lemma}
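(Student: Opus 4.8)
The plan is to prove the statement by induction on the length of $\alpha$, with the genuine work happening in the base case $|\alpha| = 1$ (i.e.\ $\alpha$ is a single agent $b$ or a single formula $\chi$), which is essentially a ``one-step'' decomposition lemma, and then bootstrapping to arbitrary $\alpha$. The ``if'' direction (right-to-left) is the easy one in every case: given $\Lambda$ with $[\alpha]\Gamma \subseteq \Lambda$ and $[\beta]\Lambda \subseteq \Delta$, take any $\phi$ with $[\alpha\beta]\phi \in \Gamma$; since $[\alpha\beta]\phi = [\alpha][\beta]\phi$ (by definition of the word modality and the fact that $\AAstar \vdash [\alpha\beta]\phi \eq [\alpha][\beta]\phi$, provable from (Dist!), (Nec!) and propositional reasoning), we get $[\beta]\phi \in \Lambda$ hence $\phi \in \Delta$.

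For the ``only if'' direction the idea is the standard canonical-model construction of a witness. Assume $[\alpha\beta]\Gamma \subseteq \Delta$. Consider the set $S := \{\psi : [\alpha]\psi \in \Gamma\} \cup \{[\beta]\phi : \phi \in \Delta'\}$ where... more precisely, one wants a maximal consistent $\Lambda$ containing $[\alpha]\Gamma := \{\psi : [\alpha]\psi \in \Gamma\}$ and such that $[\beta]\Lambda \subseteq \Delta$, equivalently such that $\Lambda \supseteq [\alpha]\Gamma$ and $\Lambda$ is disjoint from $\{\phi : [\beta]\phi \notin \Delta\}$... no: we need $[\beta]\phi \in \Lambda \Rightarrow \phi \in \Delta$. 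So the constraint on $\Lambda$ is: $\Lambda \supseteq ([\alpha]\Gamma)$ and for every $\phi \notin \Delta$ we need $[\beta]\phi \notin \Lambda$. The key step is to show the theory $T$ generated by $[\alpha]\Gamma$ is consistent with $\{\lnot[\beta]\phi : \phi \notin \Delta\}$ — actually it suffices to build $\Lambda$ by a Lindenbaum-style construction that at each stage, when deciding $[\beta]\phi$ versus $\lnot[\beta]\phi$ for $\phi \notin \Delta$, can always avoid putting $[\beta]\phi$ in. Concretely: first show $[\alpha]\Gamma$ (closed under the theory operations, which it is by Lemma~\ref{lemmaTheories} applied iteratively, since $[\alpha]\Gamma = [\alpha_1]\cdots[\alpha_n]\Gamma$) is consistent — this follows because $\Gamma$ is consistent and if $\F \in [\alpha]\Gamma$ then $[\alpha]\F \in \Gamma$, but then $[\alpha\beta]\F \in \Gamma$ (necessitation of $\F \imp [\beta]\F$ plus Dist!), so $\F \in [\alpha\beta]\Gamma \subseteq \Delta$, contradicting consistency of $\Delta$. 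Then run Lindenbaum: enumerate formulas; at step $k$ with current consistent $T_k \supseteq [\alpha]\Gamma$, if $\phi_k$ is such that $T_k + \phi_k$ is consistent \emph{and} adding it does not force some $[\beta]\chi$ with $\chi \notin \Delta$...

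Actually the cleanest route is this: to get $\Lambda$ it suffices that the theory generated by $[\alpha]\Gamma$ together with all $\lnot[\beta]\phi$ for $\phi \notin \Delta$ is consistent, then extend by Lindenbaum (Lemma~\ref{LindenbaumLemma}) and check the extension still has $[\beta]\Lambda \subseteq \Delta$ — which it does, because $\lnot[\beta]\phi \in \Lambda$ for all $\phi \notin \Delta$, and $\Lambda$ maximal consistent, so $[\beta]\phi \in \Lambda$ forces $\phi \in \Delta$ (Lemma~\ref{lemmaPhiOrNotPhiMCS}). So the crux is: for any finite $\psi_1,\dots,\psi_m \in [\alpha]\Gamma$ and finite $\phi_1,\dots,\phi_k \notin \Delta$, the formula $\psi_1\et\cdots\et\psi_m \et \lnot[\beta]\phi_1 \et \cdots \et \lnot[\beta]\phi_k$ is consistent. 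Suppose not; then $\proves (\psi_1 \et \cdots \et \psi_m) \imp ([\beta]\phi_1 \vel \cdots \vel [\beta]\phi_k)$. Using that $\bigvee_i [\beta]\phi_i \imp [\beta](\bigvee_i \phi_i)$ is a theorem (from Dist!, Nec!), we get $\proves (\bigwedge \psi_j) \imp [\beta](\bigvee_i \phi_i)$, hence applying $[\alpha]$ (Nec!, Dist!) and the fact that $[\alpha](\bigwedge\psi_j) \in \Gamma$ (since each $[\alpha]\psi_j \in \Gamma$), we obtain $[\alpha][\beta](\bigvee_i \phi_i) = [\alpha\beta](\bigvee_i\phi_i) \in \Gamma$, so $\bigvee_i \phi_i \in \Delta$; by Corollary~\ref{corollMaxConsTheories} some $\phi_i \in \Delta$, contradiction.

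The main obstacle — or rather the one point needing care — is the purely syntactic lemma $\AAstar \vdash [\alpha\beta]\phi \eq [\alpha][\beta]\phi$ and the distribution-type facts $\proves (\bigvee_i [\beta]\phi_i) \imp [\beta](\bigvee_i \phi_i)$, which must be proved by induction on the word $\beta$ (for the former) using (Dist!), (Nec!), the definitional unfolding $[\alpha]\phi := \lnot\langle\alpha\rangle\lnot\phi$ and the inductive clauses $\langle\alpha a\rangle\phi := \langle\alpha\rangle\langle a\rangle\phi$, $\langle\alpha\psi\rangle\phi := \langle\alpha\rangle\langle\psi\rangle\phi$; and the closure of $[\alpha]\Gamma$ under the theory operations, which is just iterated application of Lemma~\ref{lemmaTheories}(3) along the letters of $\alpha$ (each letter being an agent handled by a $[a]$-version analogous to item (3), or a formula handled by (3) directly). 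Once these routine facts are in place the argument above is a direct transcription of the usual existence/truth-lemma bookkeeping, so I would state the syntactic facts as a small preliminary observation and then give the two directions as above.
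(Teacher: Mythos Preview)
Your argument is correct but takes a different route from the paper's. The paper applies Zorn's Lemma directly to the family $\mathcal{S} = \{\Lambda \text{ a theory} : [\alpha]\Gamma \subseteq \Lambda \text{ and } [\beta]\Lambda \subseteq \Delta\}$: since $[\alpha]\Gamma \in \mathcal{S}$ and chains have upper bounds in $\mathcal{S}$, there is a maximal element $\Lambda$, and the heart of the proof is showing $\Lambda$ is \emph{maximal consistent} --- if $\phi,\lnot\phi \notin \Lambda$ then neither $\Lambda+\phi$ nor $\Lambda+\lnot\phi$ lies in $\mathcal{S}$, which yields $\psi,\chi \notin \Delta$ with $[\beta](\psi\lor\chi)\in\Lambda$, hence $\psi\lor\chi\in\Delta$, a contradiction. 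Your approach instead shows that $[\alpha]\Gamma$ together with $\{\lnot[\beta]\phi : \phi\notin\Delta\}$ is finitely consistent and then invokes Lindenbaum. This is the more familiar Henkin-style construction and avoids Zorn, but in this paper's framework you should make explicit one point: Lemma~\ref{LindenbaumLemma} is stated only for consistent \emph{theories}, so you must first form the theory generated by $[\alpha]\Gamma \cup \{\lnot[\beta]\phi : \phi\notin\Delta\}$ and observe that, because theories here are required to be closed only under modus ponens (the infinitary rule (R$^\ast$) is already absorbed into $\AAstar$), this generated theory is exactly $\{\chi : (\bigwedge_{j\leq k}\lnot[\beta]\phi_{i_j})\imp\chi \in [\alpha]\Gamma \text{ for some finite set}\}$, so your finitary consistency check really does suffice. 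Two minor remarks: your opening plan to induct on $|\alpha|$ is unnecessary and you rightly abandon it, since the argument works uniformly once $[\alpha\beta]\phi \eq [\alpha][\beta]\phi$ is available; and $[\alpha]\Gamma$ is already a theory by Lemma~\ref{lemmaTheories}(3) for arbitrary words, so no iteration along letters is needed.
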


\begin{proof}

Suppose first that $[\alpha\beta]\Gamma \subseteq \Delta$. We need to show that there is a maximal consistent theory $\Lambda$ such that $[\alpha]\Gamma \subseteq \Lambda$ and $[\beta]\Lambda \subseteq \Delta$.

Let $\mathcal{S}= \lbrace \Lambda \text{ theory}\ | \ [\alpha]\Gamma \subseteq \Lambda \text{ and } [\beta]\Lambda \subseteq \Delta \rbrace$.
Obviously, $[\alpha]\Gamma \in \mathcal{S}$.

Note that if $(\Lambda_i)_{i \in I}$ is a chain of elements in $\mathcal{S}$, then $\cup_{i \in I}\Lambda_i \in \mathcal{S}$. Therefore, by Zorn's Lemma, $\mathcal{S}$ has a maximal\footnote{Maximal with respect to inclusion.} element $\Lambda$. Since $\Lambda \in \mathcal{S}$, $\Lambda$ is a theory that contains $[\alpha]\Gamma$ and such that $[\beta]\Lambda \subseteq \Delta$. All that remains to show is that $\Lambda$ is maximal consistent.

We first show that $\Lambda$ is consistent. Suppose $\F \in \Lambda$. Then, $[\beta]\F \in \Lambda$ and, since $[\beta]\Lambda \subseteq \Delta$, $\F \in \Delta$. This contradicts the consistency of $\Delta$. Therefore $\F \notin \Lambda$.

Suppose now that $\Lambda$ is not maximal consistent. Then, there is a formula $\phi$ such that $\phi \notin \Lambda$ and $\lnot \phi \notin \Lambda$. Hence $\Lambda \subsetneq \Lambda + \phi$ and $\Lambda \subsetneq \Lambda + \lnot \phi$. Then $\Lambda + \phi \notin \mathcal{S}$, because $\Lambda$ is a maximal element of $\mathcal{S}$; likewise $\Lambda + \lnot \phi \notin \mathcal{S}$. Since both $\Lambda + \phi$ and $\Lambda + \lnot \phi$ are theories (Lemma \ref{lemmaTheories}) that, obviously, contain $[\alpha]\Gamma$, then $[\beta](\Lambda + \phi) \nsubseteq \Delta$ and $[\beta](\Lambda + \lnot \phi) \nsubseteq \Delta$. Hence, there are formlas $\psi$ and $\chi$ such that
\begin{align*}
	(1)\quad &\psi \in [\beta](\Lambda + \phi) & &\text{ and } & &\chi \in [\beta](\Lambda + \lnot \phi)\\
	(2)\quad &\psi \notin \Delta 			   & &\text{ and } & &\chi \notin \Delta.
\end{align*}
From $(1)$ we conclude that $\phi \imp [\beta]\psi \in \Lambda$ and $\lnot \phi \imp [\beta]\chi \in \Lambda$. Then $[\beta]\psi \lor [\beta]\chi \in \Lambda$ so $[\beta](\psi \lor \chi) \in \Lambda$. Hence $\phi \lor \chi \in \Delta$. Since $\Delta$ is maximal consistent, therefore either $\psi \in \Delta$ or $\chi \in \Delta$ (Corollary \ref{corollMaxConsTheories}). In both cases, we get a contradiction with $(2)$ above. Therefore, $\Lambda$ is maximal consistent. \\

The converse is straightforward.

\end{proof}

\begin{proposition}\label{propCompositionRelation}
Let $\Gamma,\Delta$ be maximal consistent theories, and $\alpha$ a word. Then, $\Gamma \leq_\alpha \Delta$ if and only if $[\alpha]\Gamma \subseteq \Delta$.
\end{proposition}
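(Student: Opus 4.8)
The plan is to prove the equivalence by induction on the number of symbols in $\alpha$, with Lemma~\ref{lemmaCompositionWordRelation} serving as the inductive engine. Two features of Definition~\ref{defCanonicalRelations} will be used throughout: that $\leq_\alpha = \leq_{\alpha_1} \circ \cdots \circ \leq_{\alpha_n}$ is a composition of relations \emph{on maximal consistent theories} (so any intermediate object arising in the composition is automatically a maximal consistent theory), and that for a single symbol $\mu$ — an agent or a formula — the relation $\leq_\mu$ is \emph{by definition} exactly the condition $[\mu]\Gamma \subseteq \Delta$.

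For the base case $\alpha = \epsilon$, note that $[\epsilon]\phi = \lnot \langle \epsilon \rangle \lnot \phi = \lnot\lnot\phi$; since $\phi \eq \lnot\lnot\phi$ is a tautology and hence lies in every theory, closure under modus ponens gives $\phi \in \Gamma$ iff $\lnot\lnot\phi \in \Gamma$, so $[\epsilon]\Gamma = \Gamma$. On the other side, the empty composition $\leq_\epsilon$ is the identity relation on maximal consistent theories. As two maximal consistent theories, one contained in the other, must be equal, both $\Gamma \leq_\epsilon \Delta$ and $[\epsilon]\Gamma \subseteq \Delta$ are equivalent to $\Gamma = \Delta$, which settles the base case.

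For the inductive step, write $\alpha = \alpha' \mu$ with $\mu$ a single symbol, so that $\leq_\alpha = \leq_{\alpha'} \circ \leq_\mu$ by associativity of composition. Then $\Gamma \leq_\alpha \Delta$ holds if, and only if, there is a maximal consistent theory $\Lambda$ with $\Gamma \leq_{\alpha'} \Lambda$ and $\Lambda \leq_\mu \Delta$. By the induction hypothesis the first conjunct is equivalent to $[\alpha']\Gamma \subseteq \Lambda$, and by Definition~\ref{defCanonicalRelations} the second is equivalent to $[\mu]\Lambda \subseteq \Delta$. Hence $\Gamma \leq_\alpha \Delta$ holds if, and only if, there is a maximal consistent theory $\Lambda$ with $[\alpha']\Gamma \subseteq \Lambda$ and $[\mu]\Lambda \subseteq \Delta$, which by Lemma~\ref{lemmaCompositionWordRelation} (applied to the words $\alpha'$ and $\mu$) is precisely $[\alpha'\mu]\Gamma \subseteq \Delta$, \emph{i.e.} $[\alpha]\Gamma \subseteq \Delta$. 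This completes the induction.

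I do not anticipate a genuine obstacle: the only nontrivial content — the existence of a \emph{maximal consistent} intermediate theory, obtained via Zorn's Lemma together with a Lindenbaum-style argument comparing $\Lambda + \phi$ and $\Lambda + \lnot\phi$ — is already packaged in Lemma~\ref{lemmaCompositionWordRelation}. The remaining points need only care, not ingenuity: lining up the ``$\Lambda$ ranges over maximal consistent theories'' side conditions on both sides of the equivalence, and, in the base case, noticing the convention that $\leq_\epsilon$ is the identity relation and discharging the double negation in $[\epsilon]\phi$.
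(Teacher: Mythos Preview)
Your proof is correct and follows essentially the same approach as the paper: induction on the length of $\alpha$, with Lemma~\ref{lemmaCompositionWordRelation} doing the work in the inductive step. The only cosmetic differences are that the paper splits the inductive step into two cases $\alpha'a$ and $\alpha'\chi$ where you treat a generic last symbol $\mu$, and the paper takes single symbols as base cases (handling $\epsilon$ by the stated convention) whereas you start the induction at $\epsilon$ explicitly.
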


By convention, $\leq_\epsilon$ is the identity relation.

\begin{proof}
By induction on $\Vert \alpha \Vert$.
\begin{itemize}
\item Cases $\alpha=a$ and $\alpha=\phi$ are straightforward.
\item Case $\alpha =\alpha' a$. We have the following equivalences:
\begin{align*}
\Gamma \leq_\alpha \Delta
&\Eq \Gamma \leq_{\alpha'} \Delta' \text{ and } \Delta' \leq_a \Delta \text{ for some } \Delta'\\
&\Eq [\alpha']\Gamma \subseteq \Delta' \text{ and } \Delta' \leq_a \Delta \text{ for some } \Delta' & &\text{by (IH)}\\
&\Eq [\alpha']\Gamma \subseteq \Delta' \text{ and } [a]\Delta' \subseteq \Delta \text{ for some } \Delta' & &\text{by Definition \ref{defCanonicalRelations}} \\
&\Eq [\alpha'a]\Gamma \subseteq \Delta & &\text{by Lemma \ref{lemmaCompositionWordRelation}}\\
&\Eq [\alpha]\Gamma \subseteq \Delta &
\end{align*}

\item Case $\alpha =\alpha' \chi$. We have the following equivalences:
\begin{align*}
\Gamma \leq_\alpha \Delta
&\Eq \Gamma \leq_{\alpha'} \Delta' \text{ and } \Delta' \leq_\chi \Delta \text{ for some } \Delta'\\
&\Eq [\alpha']\Gamma \subseteq \Delta' \text{ and } \Delta' \leq_\chi \Delta \text{ for some } \Delta' & &\text{by (IH)}\\
&\Eq [\alpha']\Gamma \subseteq \Delta' \text{ and } [\chi]\Delta' \subseteq \Delta \text{ for some } \Delta' & &\text{by Definition \ref{defCanonicalRelations}} \\
&\Eq [\alpha'\chi]\Gamma \subseteq \Delta & &\text{by Lemma \ref{lemmaCompositionWordRelation}}\\
&\Eq [\alpha]\Gamma \subseteq \Delta &
\end{align*}
\end{itemize}
\end{proof}

Note that if $[a]\F \in \Gamma$ (resp. $[\chi]\F \in \Gamma$) then for any maximal consistent theory $\Delta$, $\Gamma \nleq_a \Delta$ (resp. $\Gamma \nleq_\chi \Delta$). Hence, fo all words $\alpha$, if $[\alpha]\F \in \Gamma$ then $\Gamma \nleq_\alpha \Delta$ for any maximal consistent theory $\Delta$.

\begin{lemma}[Existence Lemma]\label{ExistenceLemma}
Let $\Gamma$ be a maximal consistent theory, $\phi,\chi \in \lang$ be formulas and $a \in \A$ an agent.
\begin{align*}
(i) \quad &\text{If } \Ka \phi \in \Gamma \text{ then there is a maximal consistent theory } \Delta \text{ s.t. } \Gamma \equiv_a \Delta \text{ and } \phi \in\Delta.\\
(ii) \quad &\text{If } \langle a \rangle \phi \in \Gamma \text{ then there is a maximal consistent theory } \Delta \text{ s.t. } \Gamma \leq_a \Delta \text{ and } \phi \in\Delta.\\
(iii) \quad &\text{If } \langle \chi \rangle \phi \in \Gamma  \text{ then there is a maximal consistent theory }\Delta \text{ s.t. } \Gamma \leq_\chi \! \Delta \text{ and } \phi \in\Delta.
\end{align*}
\end{lemma}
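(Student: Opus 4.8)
The plan is to prove all three items by the standard Existence-Lemma construction, which in each case is the same argument applied to a different modality; I would carry out $(i)$ in full and then note the trivial modifications for $(ii)$ and $(iii)$. Note that none of the frame axioms $(4)$, $(5)$, $(\emp$T$)$, $(\emp$K$)$, etc.\ are needed here — only the normality of the modalities and Lindenbaum.

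For $(i)$: assuming $\Ka\phi\in\Gamma$, I would form the set $K_a\Gamma+\phi$, where $K_a\Gamma=\{\psi\mid K_a\psi\in\Gamma\}$. This is obtained from the theory $K_a\Gamma$ (Lemma \ref{lemmaTheories}(2)) by adding $\phi$ (Lemma \ref{lemmaTheories}(1)), hence is itself a theory. The key step is to check that it is consistent. By Lemma \ref{lemmaConsistencyTheories} it is enough to show $\lnot\phi\notin K_a\Gamma$, i.e.\ $K_a\lnot\phi\notin\Gamma$. Since $\Ka\phi\eq\lnot K_a\lnot\phi$ is a theorem of $\AAstar$ — it follows from the abbreviation $K_a\psi:=\lnot\Ka\lnot\psi$ by (NecK), (Dist) and propositional reasoning — the hypothesis $\Ka\phi\in\Gamma$ gives $\lnot K_a\lnot\phi\in\Gamma$, whence $K_a\lnot\phi\notin\Gamma$ by consistency of $\Gamma$. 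So $K_a\Gamma+\phi$ is a consistent theory; Lindenbaum's Lemma (Lemma \ref{LindenbaumLemma}) then extends it to a maximal consistent theory $\Delta$. Since $K_a\Gamma\subseteq\Delta$, we have $\Gamma\equiv_a\Delta$ by Definition \ref{defCanonicalRelations}, and $\phi\in\Delta$, which is exactly what $(i)$ claims.

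For $(ii)$ and $(iii)$ I would run the identical argument on $[a]\Gamma+\phi$ and on $[\chi]\Gamma+\phi$ respectively: these are theories by Lemma \ref{lemmaTheories}, and their consistency follows the same way, using that $\langle a\rangle\phi\eq\lnot[a]\lnot\phi$ and $\langle\chi\rangle\phi\eq\lnot[\chi]\lnot\phi$ are derivable in $\AAstar$ (as already observed for arbitrary words just after Table \ref{AAstar}), together with the hypotheses $\langle a\rangle\phi\in\Gamma$, resp.\ $\langle\chi\rangle\phi\in\Gamma$. Lindenbaum's Lemma then yields a maximal consistent $\Delta$ with $[a]\Gamma\subseteq\Delta$, i.e.\ $\Gamma\leq_a\Delta$ (resp.\ $[\chi]\Gamma\subseteq\Delta$, i.e.\ $\Gamma\leq_\chi\Delta$) and with $\phi\in\Delta$.

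I do not expect any real obstacle: the substance has already been isolated in Lemma \ref{lemmaTheories} (that the quotiented sets $K_a\Gamma$, $[a]\Gamma$, $[\chi]\Gamma$ satisfy the conditions of Definition \ref{DefTheory}) and in Lindenbaum's Lemma. The two small things to be careful about are confirming that $K_a\Gamma+\phi$, $[a]\Gamma+\phi$ and $[\chi]\Gamma+\phi$ are genuinely theories in the technical sense (so that Lindenbaum applies to them), and the routine propositional/necessitation bookkeeping establishing the dual-modality equivalences invoked in the consistency checks.
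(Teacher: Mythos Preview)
Your argument for $(i)$ is exactly the paper's. For $(ii)$ and $(iii)$ your proof is correct but differs slightly from the paper's: rather than forming $[a]\Gamma+\phi$ (resp.\ $[\chi]\Gamma+\phi$) and checking consistency via $[a]\lnot\phi\notin\Gamma$, the paper observes that $\langle a\rangle\phi\in\Gamma$ gives $\langle a\rangle\T\in\Gamma$, hence $[a]\Gamma$ is already consistent, and then uses the functionality axiom (Func!) $\langle a\rangle\phi\imp[a]\phi$ to conclude $[a]\phi\in\Gamma$, so that $\phi\in[a]\Gamma$ outright --- no need to adjoin $\phi$. Your uniform treatment has the virtue of not invoking (Func!) at all (so your claim that only normality and Lindenbaum are needed is justified for \emph{your} proof), while the paper's route makes the role of functionality explicit and avoids the ``$+\phi$'' step for the dynamic modalities. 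Both are standard and equally short.
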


\begin{proof} \hfill
    \begin{itemize}
        \item[$(i)$] Suppose $\Ka \phi \in \Gamma$. Then $\lnot K_a \lnot \phi \in \Gamma$ so $K_a\lnot \phi \notin \Gamma$. Hence $\lnot\phi \notin K_a\Gamma$.  So, by Lemma \ref{lemmaConsistencyTheories}, $K_a\Gamma + \phi$ is consistent. Now, by Lindenbaum's Lemma, we can extend $K_aT +\phi$ to a maximal consistent theory $\Delta$ such that $K_a\Gamma + \phi \subseteq \Delta$. By Lemma \ref{lemmaConsistencyTheories} again, $K_a\Gamma \subseteq K_a\Gamma + \phi$ so $K_a\Gamma \subseteq \Delta$. Hence $\Gamma \equiv_a \Delta$. Moreover  $\phi \in K_a\Gamma + \phi$ so $\phi \in \Delta$.
        
        \item[$(ii)$] Suppose $\langle a \rangle \phi \in \Gamma$. Then $\langle a \rangle \T \in \Gamma$ so $[a]\F \notin \Gamma$. Hence $\F \notin [a]\Gamma$: $[a]\Gamma$ is consistent. By Lindenbaum's Lemma, we can extend $[a]\Gamma$ to a maximal consistent theory $\Delta$. Since $[a]\Gamma \subseteq \Delta$, $\Gamma \leq_a \Delta$. Moreover, since $\langle a \rangle \phi \in \Gamma$ and, by functionality, also $\langle a \rangle \phi \imp [a]\phi \in \Gamma$, by modus ponens $[a]\phi \in \Gamma$, and then $\phi \in [a]\Gamma$. Hence $\phi \in \Delta$.
        
        \item[$(iii)$] Supose $\langle \chi \rangle \phi \in \Gamma$. Then $\langle \chi \rangle \T \in \Gamma$ so $[\chi]\F \notin\Gamma$. Hence $\F \notin [\chi]\Gamma$ so $[\chi]\Gamma$ is consistent. By Lindenbaum's Lemma, there is a maximal consistent theory $\Delta$ such that $[\chi]\Gamma \subseteq \Delta$. Hence $\Gamma \leq_\chi \Delta$. Moreover, by functionality and modus ponens, $[\chi]\phi \in \Gamma$ so $\phi \in [\chi]\Gamma$ and so $\phi \in \Delta$.
    \end{itemize}
\end{proof}

\begin{corollary}\label{corolExistenceLemmaIFF}
Let $\Gamma$ be a maximal consistent theory and $\alpha$ a word. Then $\langle \alpha \rangle \phi \in \Gamma$ if and only if there is a maximal consistent theory $\Delta$ such that $\Gamma \leq_\alpha \Delta$ and $\phi \in \Delta$.
\end{corollary}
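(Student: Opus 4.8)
The plan is to prove the two implications separately, leaning on the two pieces of machinery already in place: the Existence Lemma (Lemma \ref{ExistenceLemma}) for the direction from left to right, and Proposition \ref{propCompositionRelation} for the direction from right to left.

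For the implication ``$\langle\alpha\rangle\phi\in\Gamma$ implies there is a maximal consistent $\Delta$ with $\Gamma\leq_\alpha\Delta$ and $\phi\in\Delta$'', I would proceed by induction on the length of the word $\alpha$, decomposing a non-empty $\alpha$ as $\alpha'\mu$ with $\mu$ a single letter (an agent or a formula), so that $\langle\alpha'\mu\rangle\phi=\langle\alpha'\rangle\langle\mu\rangle\phi$ holds by the very definition of the modality $\langle\cdot\rangle$. The base case $\alpha=\epsilon$ is immediate: $\langle\epsilon\rangle\phi=\phi\in\Gamma$, so $\Delta:=\Gamma$ works because $\leq_\epsilon$ is the identity relation. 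For the inductive step, from $\langle\alpha'\rangle(\langle\mu\rangle\phi)\in\Gamma$ the induction hypothesis applied to $\alpha'$ with the formula $\langle\mu\rangle\phi$ yields a maximal consistent $\Lambda$ with $\Gamma\leq_{\alpha'}\Lambda$ and $\langle\mu\rangle\phi\in\Lambda$; then the appropriate clause of the Existence Lemma (clause $(ii)$ when $\mu$ is an agent, clause $(iii)$ when $\mu$ is a formula), applied to $\Lambda$, gives a maximal consistent $\Delta$ with $\Lambda\leq_\mu\Delta$ and $\phi\in\Delta$. Since $\leq_{\alpha'\mu}$ is by definition the relational composition $\leq_{\alpha'}\circ\leq_\mu$, chaining $\Gamma\leq_{\alpha'}\Lambda$ and $\Lambda\leq_\mu\Delta$ gives $\Gamma\leq_\alpha\Delta$, finishing the step.

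For the converse, suppose $\Gamma\leq_\alpha\Delta$ and $\phi\in\Delta$ for some maximal consistent $\Delta$. By Proposition \ref{propCompositionRelation} this means $[\alpha]\Gamma\subseteq\Delta$. Towards a contradiction, assume $\langle\alpha\rangle\phi\notin\Gamma$; then $\lnot\langle\alpha\rangle\phi\in\Gamma$ by Lemma \ref{lemmaPhiOrNotPhiMCS}, and since $\lnot\langle\alpha\rangle\phi\eq[\alpha]\lnot\phi\in\AAstar\subseteq\Gamma$ and $\Gamma$ is closed under modus ponens, we get $[\alpha]\lnot\phi\in\Gamma$, hence $\lnot\phi\in[\alpha]\Gamma\subseteq\Delta$; together with $\phi\in\Delta$ this contradicts the consistency of $\Delta$. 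Therefore $\langle\alpha\rangle\phi\in\Gamma$.

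I do not expect a genuine obstacle here: the substantial work — the Zorn's-Lemma argument in Lemma \ref{lemmaCompositionWordRelation} and the induction behind Proposition \ref{propCompositionRelation} — has already been carried out, and this statement is essentially the iterated, ``if and only if'' packaging of the Existence Lemma. The only points needing a little care are the bookkeeping between the two descriptions of $\leq_\alpha$ (as the iterated composition of the per-letter relations, and via Proposition \ref{propCompositionRelation} as the inclusion $[\alpha]\Gamma\subseteq\Delta$), and the elementary provable equivalence of $\lnot\langle\alpha\rangle\phi$ with $[\alpha]\lnot\phi$, already recorded just after the definition of the axiomatisation. As an alternative route for the forward direction that bypasses the Existence Lemma, one can note directly that $\langle\alpha\rangle\phi\in\Gamma$ forces $[\alpha]\F\notin\Gamma$ (as $\langle\alpha\rangle\phi\imp\langle\alpha\rangle\T$ is derivable) so $[\alpha]\Gamma$ is consistent, and $\lnot\phi\notin[\alpha]\Gamma$, so $[\alpha]\Gamma+\phi$ is consistent; extending it to a maximal consistent $\Delta$ by Lindenbaum's Lemma and applying Proposition \ref{propCompositionRelation} gives the result. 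I would nonetheless keep the inductive proof as the primary one, since it mirrors how the corollary is used in the rest of the completeness argument.
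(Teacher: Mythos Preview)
Your proof is correct and follows essentially the same approach as the paper's: the forward direction is the same induction on the word (the paper phrases it as induction on $\Vert\alpha\Vert$, but the decomposition $\alpha=\alpha'\mu$ and the appeal to the Existence Lemma at each step are identical), and the converse is the same contradiction argument via Proposition~\ref{propCompositionRelation}. The alternative direct route you sketch at the end is not in the paper but is also sound.
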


\begin{proof}
Let $\Gamma$ be a maximal consistent theory.
\begin{itemize}
\item[($\Imp$)] The proof of the left-to-right direction proceeds by induction on $\Vert \alpha \Vert$.  Let $\alpha$ be a word such that for all words $\alpha'$, if $\Vert \alpha' \Vert < \Vert \alpha \Vert$ then for all formulas $\phi \in \lang$, if $\langle\alpha'\rangle \phi \in \Gamma$ then there is a maximal consistent theory $\Delta$ such that $\Gamma \leq_{\alpha'} \Delta$ and $\phi \in \Delta$. We show that this holds also for $\alpha$. We distinguish the following cases.
\begin{enumerate}
\item[$\bullet$] Case $\alpha = \epsilon$. Take $\Delta = \Gamma$.
\item[$\bullet$] Case $\alpha = \alpha ' a$. Suppose $\langle \alpha' a \rangle \phi \in \Gamma$, \textit{i.e.} $\langle \alpha' \rangle \langle a \rangle \phi \in \Gamma$. Because $\Vert \alpha' \Vert < \Vert \alpha' a \Vert$, by induction hypothesis, there is a maximal consistent theory $\Delta$ such that $\Gamma \leq_\alpha' \Delta$ and $\langle a \rangle \phi \in \Delta$. Then, by the Existence Lemma, there is a maximal consistent theory $\Delta'$ such that $\Delta \leq_a\Delta'$ and $\phi  \in \Delta'$. Hence $\Gamma \leq_{\alpha'a} \Delta'$ and $\phi \in \Delta'$.
\item[$\bullet$] Case $\alpha = \alpha' \psi$. Suppose $\langle \alpha' \psi \rangle \phi \in \Gamma$, \textit{i.e.} $\langle \alpha' \rangle \langle \psi \rangle \phi \in \Gamma$. Because $\Vert \alpha' \Vert < \Vert \alpha' \psi \Vert$, by induction hypothesis, there is a maximal consistent theory $\Delta$ such that $\Gamma \leq_{\alpha'}\Delta$ and $\langle \psi \rangle \phi \in \Delta$. As above, by the Existence Lemma, we conclude there is a maximal consistent theory $\Delta'$ such that $\Gamma \leq_{\alpha'\psi} \Delta'$ and $\phi \in \Delta'$.
\end{enumerate}

\item[($\Leftarrow$)] For the converse, suppose there is a maximal consistent theory $\Delta$ such that $\Gamma \leq_\alpha \Delta$ and $\phi \in \Delta$. Suppose, towards a contradiction, that $\langle \alpha \rangle \phi \notin\Gamma$. Then $\lnot \langle \alpha \rangle \phi \in \Gamma $ so $[\alpha]\lnot\phi \in \Gamma$. Now, by Proposition \ref{propCompositionRelation}, from $[\alpha]\lnot \phi \in \Gamma$ and $\Gamma \leq_\alpha\Delta$, we get $\lnot\phi \in \Delta$. This, together with $\phi \in \Delta$, contradicts the consistency of $\Delta$. Therefore, $\langle \alpha \rangle \phi \in \Gamma$.
\end{itemize}
\end{proof}

\begin{corollary}\label{corolExistenceLemma2}
Let $\Gamma$ be a maximal consistent theory and $\alpha$ a word. Then $\langle \alpha \rangle \T \in \Gamma$ if and only if there is a maximal consistent theory $\Delta$ such that $\Gamma \leq_\alpha \Delta$.
\end{corollary}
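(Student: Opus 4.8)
The plan is to derive Corollary \ref{corolExistenceLemma2} as the special case $\phi = \T$ of the preceding Corollary \ref{corolExistenceLemmaIFF}, plus the observation that $\T$ belongs to every maximal consistent theory. First I would note that $\T \in \Delta$ for every maximal consistent theory $\Delta$: since $\T$ is an instance of a tautology, $\T \in \AAstar \subseteq \Delta$ by the first clause of the definition of a theory. This makes the clause ``$\phi \in \Delta$'' in Corollary \ref{corolExistenceLemmaIFF} vacuous when $\phi = \T$.

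The forward direction then reads: if $\langle \alpha \rangle \T \in \Gamma$, then by Corollary \ref{corolExistenceLemmaIFF} there is a maximal consistent theory $\Delta$ with $\Gamma \leq_\alpha \Delta$ and $\T \in \Delta$; dropping the (automatically satisfied) second conjunct gives exactly what we want. For the converse, if there is a maximal consistent theory $\Delta$ with $\Gamma \leq_\alpha \Delta$, then since $\T \in \Delta$ we have both $\Gamma \leq_\alpha \Delta$ and $\T \in \Delta$, so Corollary \ref{corolExistenceLemmaIFF} yields $\langle \alpha \rangle \T \in \Gamma$. So the whole argument is a two-line instantiation.

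There is essentially no obstacle here; the only thing to be careful about is making explicit that $\T$ is in every maximal consistent theory, which is immediate from $\AAstar \subseteq \Delta$. One could alternatively give a direct self-contained proof mirroring that of Corollary \ref{corolExistenceLemmaIFF} (forward direction by induction on $\Vert \alpha \Vert$ using the Existence Lemma with $\phi = \T$, converse by Proposition \ref{propCompositionRelation}), but invoking the already-proved corollary is cleaner and is what I would write.

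\begin{proof}
Since $\T$ is an instance of a tautology, $\T \in \AAstar$, and hence $\T \in \Delta$ for every maximal consistent theory $\Delta$. The claim now follows immediately from Corollary \ref{corolExistenceLemmaIFF} applied to $\phi := \T$: on the one hand, if $\langle \alpha \rangle \T \in \Gamma$, there is a maximal consistent theory $\Delta$ with $\Gamma \leq_\alpha \Delta$ (and $\T \in \Delta$, which holds automatically); on the other hand, if there is a maximal consistent theory $\Delta$ with $\Gamma \leq_\alpha \Delta$, then $\T \in \Delta$ as well, so $\langle \alpha \rangle \T \in \Gamma$ by Corollary \ref{corolExistenceLemmaIFF}.
\end{proof}
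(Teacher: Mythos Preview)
Your proof is correct and is exactly the intended approach: the paper states this as an immediate corollary of Corollary \ref{corolExistenceLemmaIFF} without proof, and your instantiation with $\phi := \T$ together with the observation that $\T$ belongs to every maximal consistent theory is precisely how it follows.
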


We can now show that $\equiv_a$ is not symmetric, and therefore not an equivalence relation. To show this, we need to find two maximal consistent theories $\Gamma,\Delta$ such that $\Gamma \equiv_a \Delta$ but $\Delta \not\equiv_a\Gamma$. First note that $\lnot [a]\F \land \Ka \T$ is $\ast$-satisfiable. Indeed, take a model $(\lbrace s \rbrace, \sim,V)$ where $p$ is true at $s$, and consider the history only consisting in $p$. Since $p$ is true at $s$, $s\bowtie p$. Obviously $s,p \nvDash [a]\F$ so $s,p \satisfies \lnot [a]\F$, and $s,p \satisfies \Ka \T$ because $s\sim_a s$. Therefore $s,p \satisfies \lnot [a]\F \land \Ka \T$. This shows that $\lnot (\lnot [a]\F \land \Ka \T)$ is not $\ast$-valid. By the Soundness Theorem (Theorem  \ref{Soundness}), then, $\lnot(\lnot [a]\F \land \Ka \T) \notin\AAstar$ so there is a maximal consistent theory $\Gamma$ such that $\lnot [a]\F \land \Ka \T \in \Gamma$. Now, since $\Ka \T \in \Gamma$, by the Corollary \ref{corolExistenceLemma2}, there is a maximal consistent theory $\Delta$ such that $\Gamma \equiv_a \Delta$. We now need to show that $\Delta \not\equiv_a \Gamma$. By Proposition \ref{propDerivabilityKaAllRead}, $K_a[a]\F$ is a theorem, so $K_a[a]\F \in \Delta$. However, $\lnot [a]\F \in \Gamma$ so $[a]\F \notin\Gamma$. Therefore $K_a\Delta \not\subset \Gamma$. Hence $\Delta \not\equiv_a \Gamma$ so $\equiv_a$ is not symmetric.

However, we can restrict the relations $\equiv_a$ to a specific set of maximal consistent theories wherein it would be an equivalence relation. This is what we do in the following definition of a canonical model for asynchronous announcements.
To show completeness of $\AAstar$, it is enough to define a model \emph{associated to} a maximal consistent theory that contains $\emp$.
Such a theory exists because $\lnot\emp \notin \AAstar$: first note that $\lnot\emp$ is not $\ast$-valid because it is not valid (remember that always-validity implies validity, see \ref{propAlwaysValidImpliesValid}). Then, by the soundness of $\AAstar$ (Theorem \ref{Soundness}), $\lnot\emp$ is not derivable: $\lnot\emp \notin\AAstar$. Hence $\AAstar + \emp$ is a consistent theory (by Lemma \ref{lemmaConsistencyTheories}), and we can extend it to a maximal consistent theory, by Lindenbaum's Lemma (\ref{LindenbaumLemma}).

\begin{definition}[Canonical model]\label{defCanonicalModel}
Let $\Sigma$ be a maximal consistent theory such that $\emp \in \Sigma$. The canonical model associated to $\Sigma$ is defined as $M_\Sigma:= (W_\Sigma, (\sim_{\Sigma,a})_{a\in \A}, V_\Sigma)$, where:
\begin{itemize}
\item $W_\Sigma := \lbrace \Gamma \text{ maximal consistent theory} \ | \ \Sigma \equiv \Gamma \rbrace$
\item $\sim_{\Sigma,a}$ is the restriction of $\equiv_a$ to $W_\Sigma$
\item $V_\Sigma(p):= \lbrace \Gamma \in W_\Sigma \ | \ p \in  \Gamma \rbrace$.
\end{itemize}
\end{definition}

First, the following result can be easily proved with axiom ($\emp$K):

\begin{lemma}\label{lemmaPiSigmaToDelta}
Let $\Sigma$ be a maximal consistent theory containing $\emp$, $\Delta \in W_\Sigma$. Then $\emp \in \Delta$.
\end{lemma}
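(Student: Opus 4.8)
The goal is to show that if $\Sigma$ is a maximal consistent theory containing $\emp$ and $\Delta \in W_\Sigma$, then $\emp \in \Delta$. Recall that $W_\Sigma$ consists of all maximal consistent theories $\Gamma$ with $\Sigma \equiv \Gamma$, where $\equiv$ is the reflexive transitive closure of $\bigcup_a \equiv_a$. So the plan is to unfold this definition: there is a finite chain $\Sigma = \Gamma_0 \equiv_{a_1} \Gamma_1 \equiv_{a_2} \cdots \equiv_{a_n} \Gamma_n = \Delta$, and I will show by induction on the length $n$ of this chain that $\emp$ propagates along each $\equiv_a$ step.

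The inductive step is the heart of the matter and uses axiom $(\emp\mathrm{K})$, namely $\emp \imp K_a \emp$. Suppose $\emp \in \Gamma_i$ and $\Gamma_i \equiv_a \Gamma_{i+1}$, i.e.\ $K_a \Gamma_i \subseteq \Gamma_{i+1}$. Since $\Gamma_i$ is a theory it contains all of $\AAstar$, so $\emp \imp K_a \emp \in \Gamma_i$; by closure under modus ponens $K_a \emp \in \Gamma_i$, hence $\emp \in K_a\Gamma_i \subseteq \Gamma_{i+1}$. The base case $n=0$ is immediate since then $\Delta = \Sigma$ and $\emp \in \Sigma$ by hypothesis. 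Chaining these steps gives $\emp \in \Delta$.

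I do not anticipate a genuine obstacle here — the only subtlety is making precise the claim that membership in $W_\Sigma$ unfolds to a finite $\equiv_a$-chain, which is just the definition of the reflexive transitive closure of a union of relations. One should be a little careful that the reflexive closure allows the trivial chain of length $0$ (handled by the base case) and that each step is a single $\equiv_a$ for some agent $a$ (not a composite), which is exactly what $(\emp\mathrm{K})$ is tailored to handle since it is stated for each individual agent $a$. So the write-up is: fix the chain witnessing $\Sigma \equiv \Delta$, induct on its length, base case trivial, inductive step by $(\emp\mathrm{K})$ and modus ponens as above.

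\begin{proof}
Since $\Delta \in W_\Sigma$, we have $\Sigma \equiv \Delta$. By definition, $\equiv$ is the reflexive and transitive closure of $\bigcup_{a\in\A} \equiv_a$, so there is a finite sequence of maximal consistent theories $\Sigma = \Gamma_0, \Gamma_1, \ldots, \Gamma_n = \Delta$ and agents $a_1, \ldots, a_n \in \A$ such that $\Gamma_{i-1} \equiv_{a_i} \Gamma_i$ for each $i \in \lbrace 1, \ldots, n\rbrace$ (the case $n=0$ corresponding to $\Delta = \Sigma$). We show by induction on $i$ that $\emp \in \Gamma_i$ for all $i \in \lbrace 0, \ldots, n\rbrace$.

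For $i = 0$, $\Gamma_0 = \Sigma$ and $\emp \in \Sigma$ by hypothesis. Suppose now that $\emp \in \Gamma_{i-1}$ for some $i \geq 1$. Since $\Gamma_{i-1}$ is a theory, $\AAstar \subseteq \Gamma_{i-1}$, so by axiom $(\emp\mathrm{K})$ we have $\emp \imp K_{a_i}\emp \in \Gamma_{i-1}$. By closure under modus ponens, $K_{a_i}\emp \in \Gamma_{i-1}$, that is, $\emp \in K_{a_i}\Gamma_{i-1}$. Since $\Gamma_{i-1} \equiv_{a_i} \Gamma_i$, we have $K_{a_i}\Gamma_{i-1} \subseteq \Gamma_i$, hence $\emp \in \Gamma_i$.

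In particular, $\emp \in \Gamma_n = \Delta$.
\end{proof}
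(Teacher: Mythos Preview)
Your proof is correct and is exactly the argument the paper has in mind: the paper does not spell out a proof but only remarks that the result ``can be easily proved with axiom ($\emp$K)'', and your induction along a finite $\equiv_a$-chain using $(\emp\mathrm{K})$ and modus ponens is precisely that easy proof.
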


Now, we show that the canonical model is an epistemic model:

\begin{lemma}\label{LemmaRestrictionEquivalenceRelation}
Let $\Sigma$ be a maximal consistent theory such that $\emp \in \Sigma$. Then, for any agent $a\in \A$, $\equiv_a$ restricted to $W_\Sigma$ is an equivalence relation, so all relations $\sim_{\Sigma,a}$ are equivalence relations (for $a\in \A$).
\end{lemma}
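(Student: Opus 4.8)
The plan is to show that, for each agent $a$, the relation $\equiv_a$ restricted to $W_\Sigma$ is reflexive, transitive and symmetric (equivalently, reflexive and Euclidean). Reflexivity and transitivity follow from axioms ($\emp$T) and ($4$) together with Lemma \ref{lemmaPiSigmaToDelta}; the main work, and the reason the restriction to $W_\Sigma$ is needed, is symmetry, which will come from axiom ($5$) combined again with the fact that every theory in $W_\Sigma$ contains $\emp$.

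First I would prove \textbf{reflexivity}: take $\Gamma \in W_\Sigma$. By Lemma \ref{lemmaPiSigmaToDelta}, $\emp \in \Gamma$. For any $\phi$ with $K_a\phi \in \Gamma$, axiom ($\emp$T) gives $\emp \imp (K_a\phi \imp \phi) \in \Gamma$ (it is in $\AAstar \subseteq \Gamma$), so two applications of modus ponens yield $\phi \in \Gamma$; hence $K_a\Gamma \subseteq \Gamma$, i.e. $\Gamma \equiv_a \Gamma$. Next, \textbf{transitivity}: suppose $\Gamma \equiv_a \Delta$ and $\Delta \equiv_a \Lambda$ with all three in $W_\Sigma$. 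If $K_a\phi \in \Gamma$, then by axiom ($4$), $K_aK_a\phi \in \Gamma$, so $K_a\phi \in \Delta$ (since $\Gamma \equiv_a \Delta$), and then $\phi \in \Lambda$ (since $\Delta \equiv_a \Lambda$); hence $\Gamma \equiv_a \Lambda$. For \textbf{symmetry} (or Euclideanness): suppose $\Gamma \equiv_a \Delta$ with $\Gamma, \Delta \in W_\Sigma$, and let $K_a\phi \in \Delta$; I want $\phi \in \Gamma$. Since $\Gamma$ is maximal consistent, either $\phi \in \Gamma$ or $\lnot\phi \in \Gamma$ (Lemma \ref{lemmaPhiOrNotPhiMCS}); suppose for contradiction $\lnot\phi \in \Gamma$, i.e. $\lnot\phi \in \Gamma$ so $\Ka \lnot\phi \notin \Gamma$ would be the wrong direction — instead I argue: from $\lnot\phi \in \Gamma$, since $\emp \in \Gamma$ and ($\emp$T) gives $K_a\lnot\phi \imp \lnot\phi$ is not what I need either. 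The clean route is axiom ($5$): $\Ka\phi \imp K_a\Ka\phi$. From $\lnot\phi\in\Gamma$ we do not immediately get $\Ka\lnot\phi \in \Gamma$, but: suppose $\phi \notin \Gamma$; then $\lnot\phi \in \Gamma$, and I need $\Ka\lnot\phi \in \Gamma$. This holds because, $\Gamma$ being a maximal consistent theory containing $\emp$ and hence (by reflexivity, just proved) satisfying $\Gamma \equiv_a \Gamma$, we cannot directly conclude it; rather, I use that $K_a\lnot\phi \in \Gamma$ or $\lnot K_a \lnot\phi = \Ka\phi \in \Gamma$, and if $\Ka\phi \in \Gamma$ then by ($5$) $K_a\Ka\phi \in \Gamma$, so $\Ka\phi \in \Delta$ (by $\Gamma \equiv_a \Delta$), contradicting $K_a\phi \in \Delta$ and consistency of $\Delta$; hence $K_a\lnot\phi \in \Gamma$, and then by ($\emp$T) and $\emp \in \Gamma$ we get $\lnot\phi \in \Gamma$ — but wait, that is consistent with what we assumed. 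The actual contradiction is obtained differently: from $K_a\lnot\phi \in \Gamma$ and $\Gamma \equiv_a \Delta$ we get $\lnot\phi \in \Delta$, contradicting $K_a\phi \in \Delta$ (which via ($\emp$T) and $\emp \in \Delta$ gives $\phi \in \Delta$). So in all cases $\phi \in \Gamma$, giving $\Delta \equiv_a \Gamma$.

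The main obstacle is getting the symmetry argument exactly right: one must use \emph{both} axiom ($5$) and axiom ($\emp$T), together with $\emp$ being present in every element of $W_\Sigma$ (Lemma \ref{lemmaPiSigmaToDelta}) and the maximality of the theories (Lemma \ref{lemmaPhiOrNotPhiMCS}), since on arbitrary maximal consistent theories $\equiv_a$ is genuinely not symmetric, as shown just before Definition \ref{defCanonicalModel}. Concretely, the argument is: assume $\Gamma \equiv_a \Delta$ in $W_\Sigma$ and $K_a\phi \in \Delta$; using $\emp \in \Delta$ and ($\emp$T), $\phi \in \Delta$; by maximality of $\Gamma$, if $\phi \notin \Gamma$ then $\lnot\phi \in \Gamma$, whence $\Ka\phi \notin \Gamma$... no: from $\lnot\phi\in\Gamma$ and maximality, $K_a\phi\notin\Gamma$, but I need information flowing \emph{from} $\Gamma$ \emph{to} $\Delta$, so I consider whether $\Ka\phi \in \Gamma$; if $\Ka\phi \in \Gamma$ then ($5$) gives $K_a\Ka\phi \in \Gamma$, so $\Ka\phi \in \Delta$, contradicting $K_a\phi \in \Delta$; hence $\Ka\phi \notin \Gamma$, i.e. $K_a\lnot\phi \in \Gamma$; then $\lnot\phi \in \Delta$ (by $\Gamma\equiv_a\Delta$), contradicting $\phi\in\Delta$; so actually $\phi\in\Gamma$ after all. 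Once reflexivity, transitivity and symmetry are established, each $\sim_{\Sigma,a}$ is by definition the restriction of $\equiv_a$ to $W_\Sigma$, so it too is an equivalence relation, completing the proof.
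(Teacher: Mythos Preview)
Your reflexivity and transitivity arguments are fine, and they match the paper's reflexivity argument (the paper does reflexivity and Euclideanness rather than reflexivity, transitivity and symmetry, but that is only a cosmetic difference). The problem is your symmetry argument: it contains a genuine error.

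In the case split on whether $\Ka\phi \in \Gamma$ or $K_a\lnot\phi \in \Gamma$, you handle the second case correctly, but in the first case you claim that pushing $\Ka\phi$ to $\Delta$ via axiom~(5) yields a contradiction with $K_a\phi \in \Delta$. It does not: $\Ka\phi$ (``possibly $\phi$'') and $K_a\phi$ (``necessarily $\phi$'') are perfectly compatible in any maximal consistent theory. So the case split is not closed, and the argument does not establish $\Delta \equiv_a \Gamma$.

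The fix is to drop the case split entirely and instead use the assumption $\lnot\phi \in \Gamma$ together with $\emp \in \Gamma$ and axiom~($\emp$T). From $\emp \imp (K_a\phi \imp \phi)$ one gets by contraposition $\lnot\phi \imp \lnot K_a\phi$, hence $\lnot K_a\phi \in \Gamma$, i.e.\ $\Ka\lnot\phi \in \Gamma$. Now axiom~(5) gives $K_a\Ka\lnot\phi \in \Gamma$, so $\Ka\lnot\phi \in \Delta$ by $\Gamma \equiv_a \Delta$, i.e.\ $\lnot K_a\phi \in \Delta$, contradicting $K_a\phi \in \Delta$. This is essentially the paper's Euclideanness argument specialised to $\Lambda = \Gamma$; the paper avoids ($\emp$T) in that step by using the second hypothesis $\Gamma \equiv_a \Lambda$ to obtain $K_a\phi \notin \Gamma$ directly.
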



\begin{proof}
Let $\Sigma$ be a maximal consistent theory. Suppose $\emp \in \Sigma$. Let $a\in \A$ and $\Gamma, \Delta, \Lambda \in W_\Sigma$. Since $\emp\imp K_a\emp \in \Sigma$, $\emp \in \Gamma$, $\emp \in \Delta$ and $\emp \in \Lambda$. We verify that $\sim_{\Sigma,a}$ is reflexive and Euclidean.
\begin{itemize}
\item We first show that $\sim_{\Sigma,a}$ is reflexive. If $\phi \in K_a\Gamma$ then $K_a\phi \in \Gamma$ so $\phi \in \Gamma$ because $\emp \imp (K_a\phi \imp \phi) \in \Gamma$ and $\emp \in \Gamma$. So $K_a\Gamma \subseteq \Gamma$. Hence $\Gamma \equiv_a \Gamma$.


\item We now show that $\sim_{\Sigma,a}$ is Euclidean. Suppose $\Gamma \equiv_a\Delta$ and $\Gamma \equiv_a \Lambda$. Suppose towards a contradiction that $\Delta \not\equiv_a \Lambda$. Then there is $\phi \in K_a\Delta$ such that $\phi \notin \Lambda$. Hence $\lnot \phi \in \Lambda$. Then, because $K_a \Gamma \subseteq \Lambda$, $K_a \phi \notin \Gamma$, so $\lnot \Ka \lnot \phi \notin \Gamma$.
Hence, $\Ka \lnot \phi \in \Gamma$. Now, from axiom (5), $\Ka \lnot \phi \imp K_a\Ka \lnot \phi \in \Gamma$, so by modus ponens $K_a\Ka\lnot \phi \in \Gamma$. From this and $\Gamma \equiv_a \Delta$, we get $\Ka \lnot \phi \in \Delta$. Hence $\lnot K_a \phi \in \Delta$. But, by hypothesis, $K_a\phi \in \Delta$: this contradicts the consistency of $\Delta$. Therefore, $\Delta \equiv_a \Lambda$.
\end{itemize}
\end{proof}

\begin{corollary}
The relation $\equiv$ is an equivalence relation on $W_\Sigma$.
\end{corollary}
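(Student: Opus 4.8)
The plan is to show that $\equiv$ restricted to $W_\Sigma$ is reflexive, symmetric, and transitive. Recall that $\equiv$ is \emph{defined} as the reflexive and transitive closure of the union $\bigcup_{a \in \A} \equiv_a$, so reflexivity and transitivity are immediate from the definition; the only real content is \emph{symmetry}. Moreover, $W_\Sigma$ is defined precisely as the set of maximal consistent theories $\Gamma$ with $\Sigma \equiv \Gamma$, so $\equiv$ already restricts to an equivalence-like relation on $W_\Sigma$ provided we establish symmetry.

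For symmetry, first observe that by Lemma \ref{lemmaPiSigmaToDelta} every $\Gamma \in W_\Sigma$ contains $\emp$, and by Lemma \ref{LemmaRestrictionEquivalenceRelation} each $\equiv_a$ restricted to $W_\Sigma$ is an equivalence relation — in particular symmetric. Now $\equiv$ on $W_\Sigma$ is the reflexive-transitive closure of $\bigcup_{a\in\A}(\equiv_a \cap W_\Sigma^2)$, i.e. $\Gamma \equiv \Delta$ iff there is a finite chain $\Gamma = \Gamma_0, \Gamma_1, \dots, \Gamma_n = \Delta$ in $W_\Sigma$ with $\Gamma_{i} \equiv_{a_i} \Gamma_{i+1}$ for suitable agents $a_i$. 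Since each step $\equiv_{a_i}$ is symmetric on $W_\Sigma$, the reversed chain $\Delta = \Gamma_n, \dots, \Gamma_0 = \Gamma$ witnesses $\Delta \equiv \Gamma$. Hence $\equiv$ is symmetric on $W_\Sigma$. One small point to check is that all intermediate theories $\Gamma_i$ in such a chain indeed lie in $W_\Sigma$: this holds because $\Sigma \equiv \Gamma_0$ and $\Gamma_0 \equiv \Gamma_i$ (via the initial segment of the chain), so $\Sigma \equiv \Gamma_i$ by transitivity, whence $\Gamma_i \in W_\Sigma$ by definition.

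The only mild obstacle is bookkeeping: one must be careful that ``symmetry of $\equiv_a$ on $W_\Sigma$'' is exactly what Lemma \ref{LemmaRestrictionEquivalenceRelation} (together with Lemma \ref{lemmaPiSigmaToDelta}) delivers, and that closure under the \emph{union} of the $\equiv_a$ is handled step by step in the chain rather than agent by agent. I expect the proof to be no more than a few lines: assert reflexivity and transitivity by definition of $\equiv$ as a reflexive-transitive closure, then prove symmetry by the chain-reversal argument above, invoking Lemma \ref{LemmaRestrictionEquivalenceRelation}. Nothing here requires the Existence Lemma or the truth lemma, so the corollary is genuinely a short consequence of what precedes it.
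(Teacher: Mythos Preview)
Your proposal is correct and follows essentially the same approach as the paper: both invoke Lemma \ref{LemmaRestrictionEquivalenceRelation} to get that each $\equiv_a$ is an equivalence relation on $W_\Sigma$, and then use the general fact that the reflexive-transitive closure of a union of equivalence relations is again an equivalence relation. Your version is in fact more careful than the paper's terse argument, since you explicitly verify that intermediate theories in a chain stay inside $W_\Sigma$ (a point the paper glosses over when asserting that restriction to $W_\Sigma$ commutes with taking the closure).
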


\begin{proof}
By Lemma \ref{LemmaRestrictionEquivalenceRelation} and Definition \ref{defCanonicalModel}, $\sim_{\Sigma,a}$ is an equivalence relation, for all agents $a$. Then, since $\equiv$ is the reflexive and transitive closure of the union of all $\equiv_a$, $\equiv$ restricted to $\Sigma$ is the reflexive and transitive closure of the union of equivalence relations, so it is an equivalence relation itself.
\end{proof}

To show the Truth Lemma (\ref{TruthLemma}), we will use the following lemma.
\begin{lemma}\label{lemmaWordsBoxToDiamond}
Let $\Gamma$ be a maximal consistent theory, $\alpha \in \Words$ a word and $\phi,\psi \in \lang$ be formulas. If $\langle \alpha \rangle \phi \in \Gamma$ and $[\alpha]\psi \in \Gamma$, then $\langle \alpha  \rangle \psi \in \Gamma$.
\end{lemma}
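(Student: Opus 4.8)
The claim is essentially the modal-logic fact that $\langle\alpha\rangle\phi\wedge[\alpha]\psi$ entails $\langle\alpha\rangle\psi$ — but only for a single fixed $\alpha$ and a single $\psi$ (not the full K-axiom's $[\alpha](\phi\to\psi)$). The cleanest route is to reduce everything to propositional reasoning inside $\Gamma$, combined with the distribution axiom (Dist!) and the duality $[\alpha]\cdot\equiv\lnot\langle\alpha\rangle\lnot\cdot$ that the excerpt already notes is a theorem of $\AAstar$. Concretely: the propositional tautology $\phi\to(\psi\to(\phi\wedge\psi))$ is in $\AAstar$, hence in $\Gamma$; applying (Nec!) and (Dist!) gives $[\alpha]\phi\to([\alpha]\psi\to[\alpha](\phi\wedge\psi))\in\Gamma$. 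That alone would let us derive $[\alpha](\phi\wedge\psi)$ if we had $[\alpha]\phi$, but we only have $\langle\alpha\rangle\phi$, so this box-only argument does not immediately close.

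**Main step.** The right move is to work with $\langle\alpha\rangle\phi\wedge[\alpha]\psi$ directly and push the box inside the diamond. From the tautology $(\phi\wedge\psi)\to\psi$ we get, by (Nec!), (Dist!) and the derived equivalence $\langle\alpha\rangle\chi\equiv\lnot[\alpha]\lnot\chi$, the theorem $\langle\alpha\rangle(\phi\wedge\psi)\to\langle\alpha\rangle\psi\in\AAstar\subseteq\Gamma$. So it suffices to show $\langle\alpha\rangle(\phi\wedge\psi)\in\Gamma$. For that I would use the dual form of the "diamond absorbs box" principle: from the tautology $\psi\to(\phi\to(\phi\wedge\psi))$, (Nec!) and (Dist!) give $[\alpha]\psi\to[\alpha](\phi\to(\phi\wedge\psi))\in\Gamma$; with $[\alpha]\psi\in\Gamma$ and modus ponens, $[\alpha](\phi\to(\phi\wedge\psi))\in\Gamma$. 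Now (Dist!) yields $[\alpha](\phi\to(\phi\wedge\psi))\to([\alpha]\phi\to[\alpha](\phi\wedge\psi))$ — still box-only. To convert, rewrite $[\alpha](\phi\to(\phi\wedge\psi))$ as $\lnot\langle\alpha\rangle\lnot(\phi\to(\phi\wedge\psi))$ and use propositional reasoning on the negated instance, i.e. apply (Dist!) in the contrapositive direction to obtain $\langle\alpha\rangle\phi\to\langle\alpha\rangle(\phi\wedge\psi)$; more carefully, since $[\alpha](\phi\to(\phi\wedge\psi))\in\Gamma$, and (Dist!) together with duality gives $[\alpha](\phi\to(\phi\wedge\psi))\to(\langle\alpha\rangle\phi\to\langle\alpha\rangle(\phi\wedge\psi))\in\AAstar$ — this last implication is itself a theorem of any normal modal logic and here follows from (Nec!), (Dist!), and propositional reasoning on $\lnot$. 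Then modus ponens with $\langle\alpha\rangle\phi\in\Gamma$ delivers $\langle\alpha\rangle(\phi\wedge\psi)\in\Gamma$, and a final modus ponens with the theorem $\langle\alpha\rangle(\phi\wedge\psi)\to\langle\alpha\rangle\psi$ gives $\langle\alpha\rangle\psi\in\Gamma$.

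**Where the work is.** The only genuinely non-routine point is establishing the "diamond absorbs box" schema $[\alpha](\phi\to\chi)\to(\langle\alpha\rangle\phi\to\langle\alpha\rangle\chi)\in\AAstar$, which I would prove once as a small lemma: contraposing it, it is $[\alpha](\phi\to\chi)\wedge[\alpha]\lnot\chi\to[\alpha]\lnot\phi$, and since $(\phi\to\chi)\to(\lnot\chi\to\lnot\phi)$ is a tautology, two applications of (Nec!)/(Dist!) plus propositional reasoning close it. Everything else is bookkeeping: instances of tautologies are in $\AAstar$ by the first axiom, $\AAstar\subseteq\Gamma$ since $\Gamma$ is a theory (in the sense of Definition \ref{DefTheory}), and $\Gamma$ is closed under modus ponens. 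I expect no obstacle beyond being careful with the direction of the implications when moving between $[\alpha]$ and $\langle\alpha\rangle$ via the definitional equivalence $[\alpha]\phi:=\lnot\langle\alpha\rangle\lnot\phi$.
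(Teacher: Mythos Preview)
Your argument is correct, but it takes a genuinely different route from the paper's. The paper proves the lemma \emph{semantically} via the canonical relations already set up: from $\langle\alpha\rangle\phi\in\Gamma$ it invokes Corollary~\ref{corolExistenceLemmaIFF} to obtain a maximal consistent theory $\Delta$ with $\Gamma\leq_\alpha\Delta$; then $[\alpha]\psi\in\Gamma$ forces $\psi\in\Delta$, and assuming $\langle\alpha\rangle\psi\notin\Gamma$ yields $[\alpha]\lnot\psi\in\Gamma$, hence $\lnot\psi\in\Delta$, contradicting consistency. Your approach is instead \emph{purely syntactic}: you derive the normal-modal schema $[\alpha](\phi\to\chi)\to(\langle\alpha\rangle\phi\to\langle\alpha\rangle\chi)$ inside $\AAstar$ from (Nec!), (Dist!) and duality, and then instantiate it. This is perfectly valid and has the advantage of not relying on the Existence Lemma machinery at all; the paper's route, on the other hand, is shorter given that the canonical tools are already in hand at that point in the development. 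One small remark: your detour through $\phi\wedge\psi$ is unnecessary --- from the tautology $\psi\to(\phi\to\psi)$ you get $[\alpha]\psi\to[\alpha](\phi\to\psi)$ directly, and then a single application of your diamond-absorption schema yields $\langle\alpha\rangle\phi\to\langle\alpha\rangle\psi\in\Gamma$.
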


\begin{proof}
Let $\Gamma$ be a maximal consistent theory, $\alpha \in \Words$ a word and $\phi,\psi \in \lang$ be formulas. Suppose $\langle \alpha \rangle \phi \in \Gamma$ and $[\alpha]\psi \in \Gamma$. Since $\langle \alpha \rangle \phi \in \Gamma$, by Corollary \ref{corolExistenceLemmaIFF}, there is a maximal consistent theory $\Delta$ such that $\Gamma \leq_{\alpha} \Delta$ (and $\phi \in \Delta$). Since $[\alpha]\psi \in \Gamma$ then $\psi \in \Delta$. Now, suppose $\langle \alpha \rangle \psi \notin\Gamma$. Then $\lnot\langle \alpha \rangle \psi \in \Gamma$ \emph{i.e.} $[\alpha]\lnot\psi \in \Gamma$ so $\lnot \psi \in \Delta$ which, together with $\psi \in \Delta$, contradicts the consistency of $\Delta$. Therefore, $\langle \alpha \rangle \psi \in \Sigma$.
\end{proof}

\begin{corollary}\label{coWordsBoxtoDiamond}
Let $\Gamma$ be a maximal consistent theory, $\alpha \in \Words$ a word and $\phi \in \lang$ a formula. If $\langle \alpha \rangle \T \in \Gamma$ and $[\alpha]\phi \in \Gamma$ then $\langle \alpha \rangle \phi \in \Gamma$.
\end{corollary}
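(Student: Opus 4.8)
The plan is simply to instantiate the preceding Lemma~\ref{lemmaWordsBoxToDiamond}, which states that for a maximal consistent theory $\Gamma$, a word $\alpha$, and formulas $\phi,\psi$, if $\langle \alpha \rangle \phi \in \Gamma$ and $[\alpha]\psi \in \Gamma$ then $\langle \alpha \rangle \psi \in \Gamma$. I would apply this with the lemma's $\phi$ taken to be $\T$ and the lemma's $\psi$ taken to be the corollary's $\phi$. Under the hypotheses of the corollary we have $\langle \alpha \rangle \T \in \Gamma$ (playing the role of the lemma's $\langle \alpha \rangle \phi \in \Gamma$) and $[\alpha]\phi \in \Gamma$ (playing the role of the lemma's $[\alpha]\psi \in \Gamma$), so the lemma immediately yields $\langle \alpha \rangle \phi \in \Gamma$, which is exactly the desired conclusion.

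The only thing to be careful about is the renaming of bound variables between the lemma statement and the corollary statement, which is purely cosmetic; there is no genuine mathematical obstacle. If one prefers a self-contained argument that does not invoke the lemma by name, it reproduces the lemma's proof verbatim: from $\langle \alpha \rangle \T \in \Gamma$, Corollary~\ref{corolExistenceLemma2} gives a maximal consistent theory $\Delta$ with $\Gamma \leq_\alpha \Delta$; then $[\alpha]\phi \in \Gamma$ together with Proposition~\ref{propCompositionRelation} gives $\phi \in \Delta$; and if $\langle \alpha \rangle \phi \notin \Gamma$ we would have $[\alpha]\lnot\phi \in \Gamma$, hence $\lnot\phi \in \Delta$, contradicting consistency of $\Delta$. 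Either way the argument is one line of instantiation, so I expect no difficulty and would present it as a two-sentence corollary proof citing Lemma~\ref{lemmaWordsBoxToDiamond} with $\T$ in place of the quantified formula.
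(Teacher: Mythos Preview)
Your proposal is correct and matches the paper's intent exactly: the paper states Corollary~\ref{coWordsBoxtoDiamond} immediately after Lemma~\ref{lemmaWordsBoxToDiamond} without proof, precisely because it is the instantiation of that lemma with $\phi := \T$ and $\psi := \phi$. Your optional self-contained version is also fine, since Corollary~\ref{corolExistenceLemma2} and Proposition~\ref{propCompositionRelation} have already been established at this point in the paper.
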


We can now show the main lemma to prove completeness.

\begin{lemma}[Truth Lemma]\label{TruthLemma}
Let $\Sigma$ be a maximal consistent theory containing $\emp$. Let $\phi$ be a formula and $\alpha$ a word. In the following, we consider $M_\Sigma$, the canonical model associated to $\Sigma$, defined in Definition \ref{defCanonicalModel}.
\begin{itemize}
\item For all $\Lambda \in W_\Sigma$ the following conditions are equivalent:
\begin{enumerate}
\item[(1)] $\langle \alpha \rangle \T \in \Lambda$
\item[(2)] $\Lambda \bowtie \alpha$.
\end{enumerate}
\item For all maximal consistent theories $\Gamma$ such that $\Sigma \ \equiv \circ \leq_\alpha \ \Gamma$, the following conditions are equivalent:
\begin{enumerate}
\item[(i)] $\phi \in \Gamma$
\item[(ii)] For all $\Delta \in W_\Sigma$, if $\Delta \leq_\alpha \Gamma$ then $\Delta,\alpha \satisfies  \phi$
\item[(iii)] There is $\Delta \in W_\Sigma$ such that $\Delta \leq_\alpha \Gamma$ and $\Delta,\alpha \satisfies  \phi$.
\end{enumerate}
\end{itemize} 
\end{lemma}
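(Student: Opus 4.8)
# Proof Plan for the Truth Lemma

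The plan is to prove both parts simultaneously by a single induction, using the well-founded order $\ll$ on pairs $(\alpha,\phi)$, exactly the order that underpins the semantics in Definition~\ref{defSemantics}. The first bullet (the executability equivalence between $\langle\alpha\rangle\T \in \Lambda$ and $\Lambda\bowtie\alpha$) is essentially a sub-case that must be handled first, since the clauses for $p$, $\T$, $\lnot\phi$ and $K_a\phi$ in the semantics all carry the side condition $s\bowtie\alpha$; so I would either fold it into the main induction or prove it as a preliminary induction on $\Vert\alpha\Vert$. For that preliminary part: $\langle\epsilon\rangle\T = \T \in\Lambda$ always and $\Lambda\bowtie\epsilon$ always; for $\alpha = \alpha'a$ use (Exec!$_2$)/(Exec!$_3$) together with the induction hypothesis on $\alpha'$, distinguishing $|\alpha'|_a < |\alpha'|_!$ from $|\alpha'|_a \geq |\alpha'|_!$, noting that $\emp\in\Lambda$ by Lemma~\ref{lemmaPiSigmaToDelta} so (Exec!$_3$) is available; for $\alpha = \alpha'\chi$ use (Exec!$_1$) ($\langle\chi\rangle\T \eq \chi$) together with the induction hypothesis applied to $\chi$ at word $\alpha'$, which is legitimate because $(\alpha',\chi) \ll (\alpha',\langle\chi\rangle\T)$ in the order. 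The decomposition $\langle\alpha'\chi\rangle\T = \langle\alpha'\rangle\langle\chi\rangle\T$ and Corollary~\ref{coWordsBoxtoDiamond} let me push $\langle\cdot\rangle\T$ through.

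For the second bullet, fix $\Gamma$ with $\Sigma \equiv\circ\leq_\alpha \Gamma$ and argue $(i)\Rightarrow(ii)\Rightarrow(iii)\Rightarrow(i)$. The implication $(ii)\Rightarrow(iii)$ is immediate once I know the set of witnesses $\Delta$ is nonempty: from $\Sigma\equiv\circ\leq_\alpha\Gamma$ there is some $\Delta\in W_\Sigma$ with $\Sigma\equiv\Delta$ and $\Delta\leq_\alpha\Gamma$, and by Proposition~\ref{propCompositionRelation} this gives $[\alpha]\Gamma\ldots$ rather the other direction: $\Delta\leq_\alpha\Gamma$ means $[\alpha]\Delta\subseteq\Gamma$, so $\langle\alpha\rangle\T\in\Delta$ by Corollary~\ref{corolExistenceLemma2}, whence $\Delta\bowtie\alpha$ by the first bullet. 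The heart is the induction on $\phi$ (at fixed $\alpha$, with $\ll$ controlling recursive calls):
\begin{itemize}
\item Atoms $p$: use (Perm!) to show $p\in\Gamma \Leftrightarrow p\in\Delta$ for any $\Delta\leq_\alpha\Gamma$, which pins down $V_\Sigma$ and gives $\Delta,\alpha\satisfies p$ via the first bullet. For $\T$, use the first bullet directly.
\item Negation and disjunction: use maximal consistency (Lemma~\ref{lemmaPhiOrNotPhiMCS}, Corollary~\ref{corollMaxConsTheories}) plus the fact that the semantic clauses for $\lnot$ and $\lor$ reduce to $\ll$-smaller pairs; here the executability side condition in $s,\alpha\satisfies\lnot\phi$ is exactly why the first bullet is needed.
\item Knowledge $K_a\phi$: this is the main obstacle. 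One must show $K_a\phi\in\Gamma$ iff $\Delta,\alpha\satisfies K_a\phi$, and the semantic clause ranges over pairs $(t,\beta)$ with $s\sim_a t$, $\alpha\view\beta$, $t\bowtie\beta$. The axiom ($\emp$!) — $\emp\imp([\alpha]K_a\phi \eq [\alpha]\F \lor \bigwedge_{\alpha\view\beta}K_a[\beta]\phi)$ — is precisely engineered to translate $K_a\phi$ at word $\alpha$ into a conjunction of $K_a[\beta]\phi$ for $\beta$ with $\alpha\view\beta$. So the strategy is: pull $K_a\phi$ back through $\leq_\alpha$ to a statement about $\Sigma$-level theories, apply ($\emp$!) (valid since $\emp\in$ every theory in $W_\Sigma$), then use the Existence Lemma (part $(i)$) for the $\Ka$-direction and (Dist)/($4$)/($5$) for the forward direction, and invoke the induction hypothesis on $(\beta,\phi)$, which is $\ll$-smaller than $(\alpha,K_a\phi)$ by exactly the computation in the proof that $\satisfies$ is well-defined. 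One must be careful that the relevant witness theories lie in $W_\Sigma$ — this is where Lemma~\ref{LemmaRestrictionEquivalenceRelation} and the definition $\Sigma\equiv\circ\leq_\alpha\Gamma$ are used.
\item Dynamic modalities $\langle a\rangle\phi$ and $\langle\chi\rangle\phi$: rewrite $\langle a\rangle\phi$ and $\langle\chi\rangle\phi$ at word $\alpha$ as $\phi$ at word $\alpha a$, resp.\ $\alpha\chi$, matching the semantic clauses $s,\alpha\satisfies\langle a\rangle\phi \Leftrightarrow |\alpha|_a<|\alpha|_! \text{ and } s,\alpha a\satisfies\phi$ and $s,\alpha\satisfies\langle\chi\rangle\phi \Leftrightarrow s,\alpha\satisfies\chi \text{ and } s,\alpha\chi\satisfies\phi$. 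Use Lemma~\ref{lemmaCompositionWordRelation} to factor $\leq_{\alpha a}$ as $\leq_\alpha\circ\leq_a$ (and similarly for $\chi$), the first bullet to handle the executability preconditions, Corollary~\ref{corolExistenceLemmaIFF} to move $\langle\cdot\rangle$ in and out of theories, and the induction hypothesis at the longer word. For $\langle\chi\rangle\phi$ note that both $(\alpha,\chi) \ll (\alpha,\langle\chi\rangle\phi)$ and $(\alpha\chi,\phi)\ll(\alpha,\langle\chi\rangle\phi)$, so both recursive calls are legitimate.
\end{itemize}

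The main difficulty, as flagged, is the $K_a$ case: one has to orchestrate ($\emp$!), the Existence Lemma, the $S5$-type axioms, the membership of witnesses in $W_\Sigma$, and the $\ll$-decrease all at once, and keep straight the three-way direction $\alpha\view\beta$ (which is neither reflexive nor symmetric, per the Proposition after Definition~\ref{defViewRelation}). A secondary subtlety is that the lemma is stated for $\Gamma$ reachable from $\Sigma$ via $\equiv\circ\leq_\alpha$ rather than for arbitrary maximal consistent theories, so at each inductive step I must check the relevant new theory (obtained from the Existence Lemma or from factoring $\leq$) is still of that form — this follows because $\equiv$ is transitive and $\leq_\alpha\circ\leq_\beta = \leq_{\alpha\beta}$, but it needs to be said explicitly. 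Once the Truth Lemma is in hand, completeness follows in the standard way: given $\phi\notin\AAstar$, by rule (R$^\ast$) there is a word $\alpha$ with $\emp\imp[\alpha]\phi\notin\AAstar$, hence $\emp\land\langle\alpha\rangle\lnot\phi$ is consistent; extend to a maximal consistent $\Sigma\ni\emp$, build $M_\Sigma$, and the Truth Lemma gives $\Sigma,\epsilon\nvDash[\alpha]\phi$, so $\nvDash^\ast\phi$.
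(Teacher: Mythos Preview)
Your plan is essentially correct and matches the paper's approach: a simultaneous $\ll$-induction on $(\alpha,\phi)$ handling both bullets together, with (Perm!) for atoms, (Exec!$_1$)--(Exec!$_3$) for executability, ($\emp$!) plus the Existence Lemma for $\Ka$, and factoring $\leq_{\alpha a}$, $\leq_{\alpha\chi}$ for the dynamic modalities. One caution: your alternative of proving the first bullet as a \emph{preliminary} induction on $\Vert\alpha\Vert$ does not work on its own, because the case $\alpha=\alpha'\chi$ requires $\Lambda,\alpha'\satisfies\chi$, which is exactly the second bullet at the smaller pair $(\alpha',\chi)$; the two items must be interleaved in a single $\ll$-induction (and the relevant decrease is $(\alpha',\chi)\ll(\alpha'\chi,\phi)$, not $(\alpha',\chi)\ll(\alpha',\langle\chi\rangle\T)$ as you wrote). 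Also, axioms ($4$) and ($5$) are not invoked directly in the $\Ka$ case of the Truth Lemma---they were already consumed in Lemma~\ref{LemmaRestrictionEquivalenceRelation} to make $\sim_{\Sigma,a}$ an equivalence relation; the $\Ka$ case runs purely on ($\emp$!), the Existence Lemma, and the induction hypothesis.
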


\begin{proof}

The proof proceeds by $\ll$-induction on $(\alpha,\phi)$. Here, we first show the proof for the first item: $(1) \Eq (2)$. Let $\Lambda \in W_\Sigma$.
\begin{itemize}
\item Case $(\epsilon,\phi)$. By definition, $\T \in \Lambda$ and $\Lambda \bowtie \epsilon$.

\item Case $(\alpha a,\phi)$. Suppose $\langle \alpha a \rangle \T \in \Lambda$, \textit{i.e.} $\langle \alpha \rangle \langle a \rangle \T  \in \Lambda$. Then $\langle \alpha \rangle \T \in \Lambda$. Because $(\alpha,\phi) \ll (\alpha a,\phi)$, by induction hypothesis, $\Lambda \bowtie \alpha$. We now need to show $|\alpha|_a < |\alpha|_!$. Suppose $|\alpha|_a \geq |\alpha|_!$. Then, $\emp \imp [\alpha][a]\F\in \Lambda$ (axiom (Exc!$_3$)). But since $\Lambda \in W_\Sigma$, $\emp \in \Lambda$. So $[\alpha][a]\F \in \Lambda$ by modus ponens. Hence $\lnot\langle \alpha \rangle \langle a\rangle \T \in \Lambda$, which contradicts the fact that $\Lambda$ is consistent. Therefore, $|\alpha|_a < |\alpha|_!$ and since $\Lambda \bowtie \alpha$ we get $\Lambda \bowtie \alpha a$.
Conversely, suppose $\Lambda \bowtie \alpha a$. Then $\Lambda \bowtie \alpha$ and $|\alpha|_! < |\alpha|_a$. Because $(\alpha,\phi) \ll (\alpha a,\phi)$, by induction hypothesis, $\langle \alpha \rangle \T \in \Lambda$. Now, from $|\alpha|_a < |\alpha|_!$ and axiom (Exec$_2$) we get $[\alpha]\langle a \rangle \T \in \Lambda$. Since also $\langle \alpha \rangle \T \in \Lambda$, by Lemma \ref{lemmaWordsBoxToDiamond}, $\langle \alpha \rangle \langle a \rangle \T \in \Lambda$. Hence $\langle \alpha a \rangle \T \in \Lambda$.

\item Case $(\alpha \psi,\phi)$. Suppose $\langle \alpha \psi \rangle \T \in \Lambda$, \textit{i.e.} $\langle \alpha \rangle \langle \psi \rangle \T \in \Lambda$. Then $\langle \alpha \rangle \T \in \Lambda$ so, because $(\alpha,\psi)\ll(\alpha \psi,\phi)$, by induction hypothesis $\Lambda \bowtie \alpha$. Now, since $\langle \alpha \rangle \langle \psi \rangle \T \in \Lambda$, by the Existence Lemma there is $\Lambda'$ a maximal consistent theory such that $\Lambda \leq_\alpha \Lambda'$ and $\langle \psi \rangle \T \in \Lambda'$. Also, from axiom (Exec$_1$), $\psi \eq \langle \psi \rangle \T \in \Lambda'$. Hence, by modus ponens, $\psi \in \Lambda'$. Moreover, since $\Sigma \equiv \Lambda \leq_\alpha \Lambda'$ and $(\alpha,\psi)\ll (\alpha \psi,\phi)$, by induction hypothesis $\Lambda,\alpha \satisfies \psi$. So $\Lambda \bowtie \alpha \psi$.
Conversely, suppose $\Lambda \bowtie \alpha \psi$. So $\Lambda \bowtie \alpha$ and $\Lambda,\alpha \satisfies \psi$. Because  $(\alpha,\psi)\ll (\alpha \psi,\phi)$, by induction hypothesis, $\langle \alpha  \rangle \T \in \Lambda$. So there is a maximal consistent theory $\Lambda'$ such that $\Lambda \leq_\alpha \Lambda'$. Since $\Sigma \equiv \Lambda \leq_\alpha \Lambda'$ and $\Lambda,\alpha \satisfies \psi$, by induction hypothesis again, $\psi \in \Lambda'$. Hence $\langle \alpha \rangle \psi \in \Lambda$. Therefore $\langle \alpha \rangle \langle \psi \rangle \T \in \Lambda$, \textit{i.e.} $\langle \alpha \psi \rangle \T \in \Lambda$.

\end{itemize}

We now show the proof for the second item: $(i)\Eq(ii)\Eq(iii)$. Note that $(ii)$ implies $(iii)$ because $\Sigma \equiv \circ \leq_\alpha \Gamma$ implies that there is $\Delta \in W_\Sigma$ such that $\Sigma \equiv \Delta \leq_\alpha \Gamma$. Hence, we only need to show $(i)\Imp(ii)$ and $(iii)\Imp(i)$.

\begin{itemize}
 
\item Case $(\alpha,p)$.
\begin{enumerate}
    \item[-] $(i)\Imp(ii)$. Suppose $p \in \Gamma$. Let $\Delta \in W_\Sigma$ be such that $\Delta \leq_\alpha \Gamma$ so $\Delta \bowtie \alpha$ by Corollary \ref{corolExistenceLemma2} and the first item of the Truth Lemma. Suppose $\Delta,\alpha \nvDash p$. Then $p \notin \Delta$. Hence $\lnot p \in \Delta$. Since $\lnot p \imp [\alpha]\lnot p \in \Delta$, by modus ponens $[\alpha]\lnot p \in \Delta$. So $\lnot p \in \Gamma$, which together with $p \in \Gamma$ contradicts the consistency of $\Gamma$. Therefore $\Delta,\alpha \satisfies p$.
    \item [-] $(iii) \Imp (i)$. Suppose there is $\Delta \in W_\Sigma$ such that $\Delta \leq_\alpha \Gamma$ and $\Delta,\alpha \satisfies p$. Then $p\in \Delta$. Moreover, $p \imp [\alpha]p \in \Delta$ so by modus ponens $[\alpha]p\in \Delta$. Therefore $p\in\Gamma$.
\end{enumerate}

\item Case $(\alpha,\T)$. Obviously $\T \in \Gamma$ so $(i)$ holds. Suppose $\Delta \in W_\Sigma$ is such that $\Delta \leq_\alpha \Gamma$. Then, by Corollary \ref{corolExistenceLemma2}, $\langle \alpha \rangle \T \in \Delta$ so $\Delta \bowtie \alpha$ by the first item. Hence $\Delta,\alpha \satisfies \T$: $(ii)$ holds and so $(iii)$ does.

\item Case $(\alpha,\lnot\phi)$. To apply the induction hypothesis we use $(\alpha,\phi)\ll (\alpha,\lnot\phi)$.
\begin{enumerate}
    \item[-] $(i)\Imp(ii)$. Suppose $\lnot \phi \in \Gamma$. Let $\Delta \in W_\Sigma$ be such that $\Delta \leq_\alpha \Gamma$. Then $\Delta \bowtie \alpha$. Suppose $\Delta,\alpha \nvDash \lnot\phi$. Then, because $\Delta \bowtie \alpha$ and $\Delta,\alpha \nvDash \lnot \phi$, $\Delta,\alpha \satisfies \phi$. Hence, by induction hypothesis, $\phi \in \Gamma$, which together with $\lnot \phi \in \Gamma$ contradicts the consistency of $\Gamma$. Therefore $\Delta,\alpha \satisfies \lnot \phi$.
    \item [-] $(iii) \Imp (i)$. Suppose there is $\Delta \in W_\Sigma$ such that $\Delta \leq_\alpha \Gamma$ and $\Delta,\alpha \satisfies \lnot \phi$. Then, $\Delta,\alpha \nvDash \phi$ and, by induction hypothesis, $\phi \notin \Gamma$. Hence $\lnot\phi \in \Gamma$.
\end{enumerate}

\item Case $(\alpha,\phi_1 \lor \phi_2)$. We apply the induction hypothesis with $(\alpha,\phi_i)\ll (\alpha,\phi_1 \lor \phi_2)$ for $i \in \lbrace 1,2 \rbrace$.
\begin{enumerate}
    \item[-] $(i)\Imp(ii)$. Suppose $\phi_1 \lor \phi_2 \in \Gamma $. Then $\phi_i \in \Gamma$ for $i=1$ or $i=2$. Let $\Delta \in W_\Sigma$ be such that $\Delta \leq_\alpha \Gamma$. Then $\Delta \bowtie \alpha$. Because, by induction hypothesis, $\Delta,\alpha \satisfies \phi_i$. So $\Delta,\alpha \satisfies \phi_1\lor \phi_2$.
    \item [-] $(iii) \Imp (i)$. Suppose there is $\Delta \in W_\Sigma$ such that $\Delta \leq_\alpha \Gamma$ and $\Delta,\alpha \satisfies \phi_1 \lor \phi_2$. Then $\Delta,\alpha \satisfies \phi_i$ for $i=1$ or $i=2$. Then, by induction hypothesis, $\phi_i \in \Gamma$. Therefore $\phi_1 \lor \phi_2 \in \Gamma$.
\end{enumerate}

\item Case $(\alpha,\Ka \phi)$. To apply the induction hypothesis we use $(\beta,\phi)\ll (\alpha,\Ka\phi)$ and $(\beta, \phi) \ll (\alpha,\Ka \phi)$ for any $\beta$ such that $\alpha \view \beta$. Also note that the dual version of axiom ($\emp$!) is $\emp \imp \Bigl( \langle \alpha \rangle \Ka\phi \eq \Bigl( \langle \alpha \rangle \T \land \bigvee_{\alpha \view \beta} \Ka \langle \beta \rangle \phi  \Bigr) \Bigr)$.

\begin{enumerate}
    \item[-] $(i)\Imp(ii)$. Suppose $\Ka \phi \in \Gamma$. Let $\Delta \in W_\Sigma$ be such that $\Delta \leq_\alpha \Gamma$. Then $\langle \alpha \rangle \T \in \Delta$ and $\Delta \bowtie \alpha$. Moreover, since  $\Delta \leq_\alpha \Gamma$ and $\Ka\phi \in \Gamma$, by Corollary \ref{corolExistenceLemmaIFF}, $\langle \alpha \rangle \Ka \phi \in \Delta$. Now, since $\Delta \in W_\Sigma$, by Lemma \ref{lemmaPiSigmaToDelta}, $\emp \in \Delta$. Moreover, by axiom ($\emp$!), $ \emp \imp \left( \langle \alpha \rangle \Ka \phi \imp \bigvee_{\alpha \view \beta} \Ka \langle \beta \rangle \phi \right) \in \Delta$. So by modus ponens $ \langle \alpha \rangle \Ka \phi \imp \bigvee_{\alpha \view \beta} \Ka \langle \beta \rangle \phi \in \Delta$. Because $\langle \alpha \rangle \Ka \phi \in \Delta$, by modus ponens again, $ \bigvee_{\alpha \view \beta} \Ka \langle \beta \rangle \phi \in \Delta$. Therefore, there is a history $\beta$ such that $\alpha \view \beta$ and $\Ka \langle \beta \rangle \phi \in \Delta$. By the Existence Lemma, there is a maximal consistent theory $\Delta'$ such that $\Delta \equiv_a \Delta'$ and $\langle \beta \rangle \phi \in \Delta'$. Then $\langle \beta \rangle \T \in \Delta'$ and, by induction hypothesis, $\Delta' \bowtie \beta$. Furthermore, by Corollary \ref{corolExistenceLemmaIFF} there is a maximal consistent theory $\Gamma'$ such that $\Delta' \leq_\beta \Gamma'$ and $\phi \in \Gamma'$. Now, $\Sigma \equiv \Delta' \leq_\beta \Gamma'$ so by induction hypothesis $\Delta',\beta \satisfies \phi$. Therefore, from $\Delta \equiv_a \Delta'$, $\alpha \view \beta$, $\Delta' \bowtie \beta$ and $\Delta',\beta \satisfies \phi$, we conclude $\Delta,\alpha \satisfies \Ka \phi$.

    \item [-] $(iii) \Imp (i)$. Suppose there is $\Delta \in W_\Sigma$ such that $\Delta \leq_\alpha \Gamma$ and $\Delta,\alpha \satisfies \Ka \phi$. Then there is a maximal consistent theory $\Delta'$ and a word $\beta$ such that $\Delta \equiv_a \Delta'$, $\alpha \view \beta$, $\Delta' \bowtie \beta$ and $\Delta',\beta \satisfies \phi$. Now, suppose, towards a contradiction, that $\Ka \notin \Gamma$. Then $K_a \lnot \phi \in \Gamma$ so $[\alpha] K_a \lnot \phi \in \Delta$. Moreover, by Lemma \ref{lemmaPiSigmaToDelta}, $\emp \in \Delta$. By axiom ($\emp$!), also $ \emp \imp \left( [\alpha]K_a\lnot \phi \imp [\alpha]\F \lor \bigwedge_{\alpha \view \gamma} K_a [\gamma]\lnot \phi\right) \in \Delta$. By modus ponens, then $[\alpha]\F \lor \bigwedge_{\alpha \view \gamma} K_a[\gamma]\lnot \phi$. Since $\Delta \leq_\alpha \Gamma$, $[\alpha]\T \in \Delta$ so $[\alpha]\F \notin \Delta$. Hence $ \bigwedge_{\alpha \view \gamma}K_a[\gamma]\lnot \phi \in \Delta$. In particular $K_a[\beta]\lnot\phi \in \Delta$. So $[\beta]\lnot\phi \in \Delta'$ because $\Delta \equiv_a \Delta'$. But since $\Delta' \bowtie \beta$, by induction hypothesis, $\langle \beta \rangle \T \in \Delta'$. So by Corollary \ref{corolExistenceLemma2} there is a maximal consistent theory $\Gamma'$ such that $\Delta' \leq_\beta \Gamma'$. Since $[\beta]\lnot\phi \in \Delta'$, $\lnot \phi \in \Gamma'$ so $\phi \notin \Gamma'$. Moreover, $\Sigma \equiv \Delta' \leq_\beta \Gamma'$. Hence, by induction hypothesis, $\Delta',\beta \nvDash \phi$. This contradicts the fact that $\Delta',\beta \satisfies \phi$. Therefore $\Ka \phi \in \Gamma$.
\end{enumerate}

\item Case $(\alpha,\langle a \rangle \phi)$. To apply the induction hypothesis we use $(\alpha a,\phi) \ll (\alpha, \langle a \rangle \phi)$.
\begin{enumerate}
    \item[-] $(i)\Imp(ii)$. Suppose $\langle a \rangle \phi \in \Gamma$. Let $\Delta \in W_\Sigma$ be such that $\Delta \leq_\alpha \Gamma$. Then $\Delta \bowtie \alpha$. Since $\langle a \rangle \phi \in \Gamma$, $\langle a \rangle \T \in \Gamma$ so $[\alpha]\langle a \rangle \T \in \Delta$. Hence $|\alpha|_a < |\alpha|_!$ because otherwise we would have $\emp \imp [\alpha][a]\F \in \Delta$ so $[\alpha][a]\F \in \Delta$ and so $[a]\F \in \Gamma$. Furthermore, since $\langle a \rangle \phi \in \Gamma$ by the Existence Lemma, there is a maximal consistent theory $\Gamma'$ such that $\Gamma \leq_a \Gamma'$ and $\phi \in \Gamma'$. Then $\Sigma \equiv \Delta \leq_{\alpha} \Gamma \leq_a \Gamma'$ so $\Sigma \equiv \Delta \leq_{\alpha a}\Gamma'$. By induction hypothesis, $\Delta,\alpha a\satisfies \phi$. From this and $|\alpha|_a <|\alpha |_!$ we conclude $\Delta,\alpha \satisfies \langle a \rangle \phi$.

    \item [-] $(iii) \Imp (i)$. Suppose there is $\Delta \in W_\Sigma$ such that $\Delta \leq_\alpha \Gamma$ and $\Delta,\alpha \satisfies \langle a \rangle \phi$. Then, $|\alpha|_a < |\alpha|_!$ and $\Delta,\alpha a \satisfies \phi$. Now, since $|\alpha|_a < |\alpha|_!$, by axiom (Exec$_1$), $[\alpha]\langle a \rangle \T \in \Delta$. Hence, since $\Delta \leq_\alpha \Gamma$, $\langle a \rangle \T \in \Gamma$. Then, there is a maximal consistent theory $\Gamma'$ such that $\Gamma \leq_a \Gamma'$, so $\Sigma \equiv \Delta \leq_{\alpha a}\Gamma'$. Because also $\Delta,\alpha a \satisfies \phi$, by induction hypothesis we get $\phi \in \Gamma'$. Hence $[a]\phi \in \Gamma$ and since $\langle a \rangle \T \in \Gamma$, by Lemma \ref{lemmaWordsBoxToDiamond}, $\langle a \rangle \phi \in \Gamma$.
\end{enumerate}

\item Case $(\alpha,\langle \psi \rangle \phi)$. Here we use $(\alpha,\psi)\ll (\alpha, \langle \psi \rangle \phi)$ and $(\alpha \psi,\phi) \ll (\alpha,\langle \psi \rangle \phi)$.
\begin{enumerate}
    \item[-] $(i)\Imp(ii)$. Suppose $\langle \psi \rangle \phi \in \Gamma$. Let $\Delta \in W_\Sigma$ be such that $\Delta \leq_\alpha \Gamma$ so $\Delta \bowtie \alpha$. Since $\langle \psi \rangle \phi \in \Gamma$, $\langle \psi \rangle \T \in \Gamma$ so by axiom (Exec$_1$) $\psi \in \Gamma$. Now, by induction hypothesis, $\Delta,\alpha \satisfies \psi$. Furthermore, since $\langle \psi \rangle \phi \in \Gamma$, by the Existence Lemma, there is a maximal consistent theory $\Gamma'$ such that $\Gamma \leq_\psi \Gamma'$ and $\phi \in \Gamma'$. Then $\Sigma \equiv \Delta \leq_{\alpha \psi} \Gamma'$. By induction hypothesis, $\Delta,\alpha \psi \satisfies \phi$. Because also $\Delta,\alpha \satisfies \psi$, we conclude $\Delta,\alpha \satisfies \langle \psi \rangle \phi$. 
    \item [-] $(iii) \Imp (i)$. Suppose there is $\Delta \in W_\Sigma$ such that $\Delta \leq_\alpha \Gamma$ and $\Delta,\alpha \satisfies \langle \psi \rangle \phi$. Then $\Delta, \alpha \satisfies \psi$ and $\Delta,\alpha \psi \satisfies \phi$. By Proposition \ref{PropExecutabilityHistoryNotFalse} then $\Delta \bowtie \alpha \phi$. By induction hypothesis, $\langle \alpha \psi \rangle \T \in \Delta$ \emph{i.e.} $\langle \alpha \rangle \langle \psi \rangle \T \in \Delta$. Hence $[\alpha]\langle \psi \rangle \T \in \Delta$ so $\langle \psi \rangle \T \in \Gamma$. Then, by the Existence Lemma, there is a maximal consistent theory $\Gamma'$ such that $\Gamma \leq_\psi \Gamma'$. Now, $\Sigma \equiv \Delta \leq_{\alpha \psi} \Gamma'$ with $\Delta,\alpha \psi \satisfies \phi$. So by induction hypothesis $\phi \in \Gamma'$. Therefore $[\psi]\phi \in  \Gamma$ and by Lemma \ref{lemmaWordsBoxToDiamond} $\langle \psi \rangle \phi \in \Gamma$.
\end{enumerate}

\end{itemize}
\end{proof}

We can now prove Theorem \ref{Completeness}.

\begin{proof}[Completeness of \textbf{AA$^\ast$}]
Let $\phi \in \lang$. We need to show that, for all formulas $\phi$, if $\satisfies^\ast \phi$, then $\phi \in \textbf{AA}^\ast$. By contraposition, it is enough to prove that if $\phi \notin$ \textbf{AA$^\ast$} then $\nvDash^\ast \phi$.

Suppose that $\phi \notin$ \textbf{AA$^\ast$}. Then, by rule (R$^\ast$), there is a word $\alpha$ such that $\emp \imp [\alpha]\phi \notin$ \textbf{AA$^\ast$}. So \textbf{AA$^\ast$}$+ \lnot(\emp \imp [\alpha]\phi)$ is consistent. By Lidenbaum's Lemma, we can extend it to a maximal consistent theory $\Sigma$ such that $\lnot(\emp \imp [\alpha]\phi) \in \Sigma$. Then $\emp \in \Sigma$ but $[\alpha]\phi \notin\Sigma$ so $\langle \alpha \rangle \lnot \phi \in \Sigma$. By Corollary \ref{corolExistenceLemmaIFF} there is a maximal consistent theory $\Gamma$ such that $\Sigma \leq_\alpha \Gamma$ and $\lnot \phi \in \Gamma$. We now consider the canonical model $M_\Sigma$ associated to $\Sigma$. From  $\Sigma \equiv \Sigma \leq_\alpha \Gamma$ and $\lnot\phi \in \Gamma$, we conclude by the Truth Lemma that $\Sigma, \alpha \satisfies \lnot \phi$. So $\Sigma\bowtie\alpha$ and $\Sigma,\alpha \nvDash \phi$. Hence $\Sigma,\epsilon \nvDash [\alpha]\phi$. Therefore $\nvDash^\ast\phi$.
\end{proof}

\section{Single-agent case} \label{sectionAxiomOneAgent}

In the preceding sections, we proposed an axiomatisation for always-validities when the set of agents contains at least two distinct agents, by reducing $\ast$-validites to $\epsilon$-validities. Now, we shall explain why this method does not apply in the single-agent case. In the following, we consider the language with only one agent $\lang(\lbrace a \rbrace)$.

Contrary to the multi-agent language (see Lemma \ref{lemmaPiEpsilonWord}), in the single-agent case, there is no formula $\emp'$ that characterises the initial model, \emph{i.e.} the fact that the current history is empty. To demonstrate this, we first show two lemmas.

\begin{lemma}\label{lemmaViewwithTrueA}
Let $\alpha,\beta$ be two words. Then $\alpha \view \beta$ if and only if $\T a \alpha \view \T a \beta$.
\end{lemma}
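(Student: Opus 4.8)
The plan is to reduce both directions to three elementary facts about prepending the two-letter word $\T.a$ to an arbitrary word $\gamma$ over $\{a\}\cup\lang_{aa}$: (a) $(\T a\gamma)\proj_! = \T.(\gamma\proj_!)$; (b) $(\T a\gamma)\proj_{!a} = \T.(\gamma\proj_{!a})$; and (c) $\T a\gamma\in\Hist$ if and only if $\gamma\in\Hist$. Fact (a) is immediate, since $\proj_!$ merely deletes agent-letters and $\T$ is a formula. For (c) I would check the condition of Definition \ref{defHistory} directly: the prefixes of $\T a\gamma$ are $\epsilon$, $\T$, $\T.a$, and the words $\T a\delta$ for $\delta$ a non-empty prefix of $\gamma$; the first three satisfy $|\cdot|_a\le|\cdot|_!$ trivially, and for $\T a\delta$ one has $|\T a\delta|_a = 1+|\delta|_a$ and $|\T a\delta|_! = 1+|\delta|_!$, so the history condition for $\T a\gamma$ is equivalent to that for $\gamma$. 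For (b): once agent $a$ has received the single formula $\T$ --- which happens at the second letter of $\T a\gamma$ --- its message queue is empty, exactly as at the start of an independent run, so along the remaining suffix $\gamma$ agent $a$ receives precisely the formulas it receives when $\gamma$ is processed on its own; concretely $|\T a\gamma|_{!a} = 1+|\gamma|_{!a}$, proved by a straightforward induction on $|\gamma|$ from the inductive definition of $|\cdot|_{!a}$. Combining this with (a) and with the definition of $\proj_{!a}$ as the restriction of $\proj_!$ to the first $|\cdot|_{!a}$ formula-occurrences (the leading $\T$ plus the first $|\gamma|_{!a}$ formulas of $\gamma\proj_!$) yields (b).

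Granting (a)--(c), the lemma follows by unfolding Definition \ref{defViewRelation}. If $\beta\notin\Hist$ then, by (c) and since $\view$ is a relation on $\Words\times\Hist$, both $\alpha\view\beta$ and $\T a\alpha\view\T a\beta$ fail, so the equivalence holds vacuously; assume therefore $\beta\in\Hist$, hence also $\T a\beta\in\Hist$. For the forward direction, suppose $\alpha\view\beta$, i.e. $\beta\proj_! = \beta\proj_{!a} = \alpha\proj_{!a}$; prepending $\T$ to these three equal words and rewriting with (a) and (b) gives $(\T a\beta)\proj_! = (\T a\beta)\proj_{!a} = (\T a\alpha)\proj_{!a}$, which is exactly $\T a\alpha\view\T a\beta$. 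Conversely, suppose $\T a\alpha\view\T a\beta$, i.e. $(\T a\beta)\proj_! = (\T a\beta)\proj_{!a} = (\T a\alpha)\proj_{!a}$; by (a) and (b) these words are $\T.(\beta\proj_!)$, $\T.(\beta\proj_{!a})$ and $\T.(\alpha\proj_{!a})$ respectively, and since prepending a letter is injective on words we may strip the leading $\T$ to obtain $\beta\proj_! = \beta\proj_{!a} = \alpha\proj_{!a}$, i.e. $\alpha\view\beta$.

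The one step that needs a genuine argument is (b): that prepending $\T.a$ adds exactly the single formula $\T$ to the prefix of formulas agent $a$ has read, without shifting or duplicating any later reception. This is where the single-agent assumption makes things transparent --- the queue is emptied immediately after the $\T.a$ block, so the run of $\gamma$ behaves as a fresh run --- and it is settled by the routine induction noted above. The remaining steps are pure bookkeeping with the definitions of prefix, history, and the projections $\proj_!$ and $\proj_{!a}$.
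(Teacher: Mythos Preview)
Your proof is correct and follows essentially the same approach as the paper's: both arguments boil down to the observation that prepending $\T.a$ to a word prepends $\T$ to each of $\proj_!$ and $\proj_{!a}$, so the three-way equality defining $\view$ is preserved and reflected. The paper dispatches this in one line with an ``obviously''; you unpack the same reasoning into the facts (a), (b), (c), and you additionally handle the history condition on the codomain of $\view$ explicitly via (c), which the paper leaves implicit. Nothing is genuinely different---your version is simply the fully expanded form of the paper's two-sentence proof.
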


\begin{proof}
Suppose $\alpha \view \beta$. Then $\alpha \proj_{!a} = \beta\proj_{!a} = \beta\proj_!$ so obviously $(\T a\alpha) \proj_{!a} = (\T a\beta)\proj_{!a} = (\T a\beta)\proj_!$. Hence $\T a \alpha \view \T a \beta$.
Conversely, suppose $\T a\alpha \view \T a\beta$. Then $(\T a\alpha) \proj_{!a} = (\T a\beta)\proj_{!a} = (\T a\beta)\proj_!$ and thus $\alpha \proj_{!a} = \beta\proj_{!a} = \beta\proj_!$. Hence $\alpha \view \beta$.
\end{proof}

\begin{lemma}
For all models $\ModelM$, all states $s\in W$, all words $\alpha$ and all formulas $\phi$, the following equivalences hold:
\begin{align*}
(i) \quad &s\bowtie \alpha	& &\text{if and only if} & &s\bowtie \T a \alpha\\
(ii) \quad &s,\epsilon \satisfies \langle \alpha \rangle \phi & &\text{if and only if} & &s,\T a \satisfies \langle \alpha \rangle \phi.
\end{align*}
\end{lemma}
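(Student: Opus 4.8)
The plan is to prove the two equivalences $(i)$ and $(ii)$ together by mutual induction, since the dynamic modalities in $(ii)$ will need the executability facts of $(i)$ and vice versa. More precisely, I would prove $(i)$ first as a standalone statement by induction on $\Vert \alpha \Vert$, and then prove $(ii)$ by induction on $\Vert \alpha \Vert$ using $(i)$ freely. The key observation underlying everything is that $\T$ is announceable in any executable state (by the semantics of $\langle \psi \rangle \phi$ and the clause $s,\alpha \satisfies \T \Leftrightarrow s \bowtie \alpha$, combined with $\langle \T \rangle \T \equiv \T$ in spirit), and that after announcing $\T$, agent $a$ can immediately receive it because $|\alpha \T|_a = 0 < 1 = |\alpha \T|_!$ when $\alpha = \epsilon$; so $\T a$ is always executable in any state, i.e. $s \bowtie \T a$ always holds. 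This is the base fact that makes the prefix $\T a$ "harmless".

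\textbf{Proof of $(i)$.} I would argue: by Corollary~\ref{coSemanticsDiamondHistory}, $s \bowtie \T a \alpha$ iff $s,\epsilon \satisfies \langle \T a \alpha \rangle \T$, and $s \bowtie \alpha$ iff $s,\epsilon \satisfies \langle \alpha \rangle \T$ (taking the empty word as the "current" word, using Corollary after Lemma~\ref{LemmaSemanticsDiamondHistory}). Now $\langle \T a \alpha \rangle \T = \langle \T \rangle \langle a \rangle \langle \alpha \rangle \T$, and unfolding the semantics: $s,\epsilon \satisfies \langle \T \rangle \langle a \rangle \langle \alpha \rangle \T$ iff $s,\epsilon \satisfies \T$ and $s, \T \satisfies \langle a \rangle \langle \alpha \rangle \T$; the first conjunct is trivially true, and since $|\T|_a = 0 < 1 = |\T|_!$, the second is equivalent to $s, \T a \satisfies \langle \alpha \rangle \T$, which by Lemma~\ref{LemmaSemanticsDiamondHistory}(1) is equivalent to $s \bowtie \T a \alpha$... so actually the cleaner route is: $s \bowtie \T a \alpha$ iff ($s \bowtie \T a$ and $s, \T a \satisfies \langle \alpha \rangle \T$) iff ($s, \T a \satisfies \langle \alpha \rangle \T$), using that $s \bowtie \T a$ always. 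Then I separately show by a short induction that $s, \T a \satisfies \langle \alpha \rangle \T$ iff $s, \epsilon \satisfies \langle \alpha \rangle \T$ iff $s \bowtie \alpha$ — but this last equivalence is exactly a special case of $(ii)$ with $\phi = \T$, so the honest structure is genuinely the simultaneous induction.

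\textbf{Proof of $(ii)$ and the main obstacle.} For $(ii)$ I would induct on $\Vert \alpha \Vert$. Base case $\alpha = \epsilon$: $s,\epsilon \satisfies \phi$ iff $s, \T a \satisfies \phi$; this itself needs a sub-induction on $\phi$ (or rather on $(\T a, \phi)$ in the $\ll$-order), where the only nontrivial case is $\phi = \Ka \psi$ — here one uses Lemma~\ref{lemmaViewwithTrueA}: the histories $\beta$ with $\epsilon \view \beta$ are exactly those with $\T a \beta' \view$-related... wait, more carefully, $\mathbf{view}_a(\epsilon) = \{\epsilon\}$ whereas $\mathbf{view}_a(\T a) = \{ \T a \}$ in the single-agent case, and $s,\epsilon \satisfies \Ka \psi$ quantifies over $t,\beta$ with $\epsilon \view \beta$, i.e. $\beta = \epsilon$, while $s, \T a \satisfies \Ka \psi$ quantifies over $\beta = \T a$; one then needs $t, \epsilon \satisfies \psi$ iff $t, \T a \satisfies \psi$, which is the induction hypothesis (the pair $(\T a, \psi)$ is $\ll$-smaller than $(\T a, \Ka \psi)$ by the degree argument already used in the well-definedness proof). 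The inductive cases $\alpha = a\alpha'$, $\alpha = \psi\alpha'$ follow by pushing the modality inward: $\langle a \alpha' \rangle \phi = \langle a \rangle \langle \alpha' \rangle \phi$, and one checks $s,\epsilon \satisfies \langle a \rangle \chi$ iff $s, \T a \satisfies \langle a \rangle \chi$ by reducing (in both cases $a$ is receivable: on the left $|\epsilon|_a = 0 < 1$ needs $|\epsilon|_! \geq 1$ — hmm, actually $|\epsilon|_! = 0$, so $\langle a \rangle \chi$ is \emph{false} at $(s,\epsilon)$; and at $(s, \T a)$ we have $|\T a|_a = 1 = |\T a|_!$ so $\langle a \rangle \chi$ is also false) — so both sides collapse to $\F$ consistently, and similarly for $\langle \psi \rangle \chi$ one uses the IH at the smaller pairs plus the just-proven $(i)$. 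The main obstacle I anticipate is bookkeeping the $\ll$-order carefully enough to justify every appeal to the induction hypothesis — in particular making sure that in the $\Ka$ case the relevant pairs $(\beta, \psi)$ with $\beta \in \mathbf{view}_a(\T a \alpha)$ are strictly $\ll$-below $(\T a \alpha, \Ka\psi)$, which should follow verbatim from the computation in the well-definedness proof of $\satisfies$ since $\mathbf{view}_a(\T a \alpha)$ consists of prefixes-up-to-received-formulas and $deg$ only decreases.
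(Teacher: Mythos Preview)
Your plan has a structural gap in the inductive step for $(ii)$ when you decompose $\alpha$ from the left as $\alpha = \psi\alpha'$. You write that this case ``follows by pushing the modality inward'' and appealing to the induction hypothesis, but unfold what that actually requires: $s,\epsilon \satisfies \langle \psi \rangle\langle \alpha'\rangle\phi$ iff $s,\epsilon \satisfies \psi$ and $s,\psi \satisfies \langle \alpha'\rangle\phi$, and likewise at $\T a$. The first conjunct is your base case, fine; but the second conjunct asks for $s,\psi \satisfies \langle\alpha'\rangle\phi \Leftrightarrow s,\T a\psi \satisfies \langle\alpha'\rangle\phi$. That is \emph{not} an instance of your induction hypothesis, which compares evaluation at $(s,\epsilon)$ and $(s,\T a)$ only. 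Rewriting via Lemma~\ref{LemmaSemanticsDiamondHistory} sends you back to $s,\epsilon\satisfies\langle\psi\alpha'\rangle\phi \Leftrightarrow s,\T a\satisfies\langle\psi\alpha'\rangle\phi$, i.e.\ the very statement for $\alpha$ you are trying to prove --- circular. Your induction on $\Vert\alpha\Vert$ alone cannot accommodate this because the word you need to reason about does not shrink.

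The fix, and what the paper does, is a single $\ll$-induction on the pair $(\alpha,\phi)$ where for part $(ii)$ the case distinction is on the shape of $\phi$, not on a left-decomposition of $\alpha$. In the case $\phi = \langle\psi\rangle\phi'$ one reduces to the pairs $(\alpha,\psi)$ and $(\alpha\psi,\phi')$, both strictly $\ll$-below $(\alpha,\langle\psi\rangle\phi')$; crucially $\alpha$ is allowed to \emph{grow} on the right as $\phi$ is unwound. Similarly the $\Ka$ case must be handled for arbitrary $\alpha$, not just in your base case $\alpha=\epsilon$: there one needs Lemma~\ref{lemmaViewwithTrueA} to identify $\mathbf{view}_a(\alpha)$ with $\mathbf{view}_a(\T a\alpha)$ via $\beta \mapsto \T a\beta$, and then the induction hypothesis at $(\beta,\phi') \ll (\alpha,\Ka\phi')$. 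Your intuitions about $(i)$ and about the $\langle a\rangle$-first case are correct, but the overall scaffolding (outer induction on $\Vert\alpha\Vert$, inner sub-induction on $\phi$ only at $\alpha=\epsilon$) cannot carry the announcement case.
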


\begin{proof}
The proof is simultaneously done by $\ll$-induction on $(\alpha,\phi)$. Let $(\alpha,\phi)$ be such that for all $(\alpha',\phi')$, if $(\alpha',\phi')\ll(\alpha,\phi)$ then $(i)$ and $(ii)$ hold. We shall prove that the equivalences also hold for $(\alpha,\phi)$.
We first show $(i)$ by distinguishing the following cases:
\begin{itemize}

\item Case $(\epsilon,\phi)$. Obviously $s\bowtie \epsilon$, $s\bowtie \T a$, $s,\epsilon \satisfies \T$ and $s,\T a\satisfies \T$ so $(i)$ holds:

\item Case $(\alpha a,\phi)$. We have the following equivalences, where the induction hypothesis applies because $(\alpha,\T)\ll (\alpha a,\T)$.
\begin{align*}
s\bowtie \alpha a &\Eq s\bowtie \alpha \text{ and } |\alpha|_a < |\alpha|_! &\\
&\Eq s\bowtie \T a \alpha \text{ and } |\alpha|_a < |\alpha|_! & &\text{ by induction hypothesis}\\
&\Eq s\bowtie \T a \alpha \text{ and } |\T a\alpha|_a < |\T a\alpha|_! &\\
&\Eq s\bowtie \T a \alpha a &
\end{align*}

\item Case $(\alpha \psi,\phi)$. We have the following equivalences, where the induction hypothesis applies because $(\alpha,\psi)\ll(\alpha \psi,\phi)$:
\begin{align*}
s \bowtie \alpha \psi &\Eq s\bowtie \alpha \text{ and } s,\alpha \satisfies\psi &\\
&\Eq s\bowtie \alpha \text{ and } (s\bowtie \alpha \text{ and } s,\alpha \satisfies \psi) &\\
&\Eq s\bowtie \T a \alpha \text{ and } (s\bowtie \alpha \text{ and } s,\alpha \satisfies \psi) & &\text{ by induction hypothesis}\\
&\Eq s\bowtie \T a \alpha \text{ and } s,\epsilon \satisfies \langle \alpha \rangle \psi &\\
&\Eq s\bowtie \T a \alpha \text{ and } s,\T a \satisfies \langle \alpha \rangle \psi  & &\text{ by induction hypothesis}\\
&\Eq s\bowtie \T a \alpha \text{ and } (s\bowtie \T a \alpha \text{ and } s,\T a \alpha \satisfies \psi) &\\
&\Eq s\bowtie \T a\alpha \text{ and } s,\T a\alpha \satisfies \psi &\\
&\Eq s\bowtie \T a \alpha \psi &
\end{align*}
\end{itemize}

We now show $(ii)$.
\begin{itemize}
\item Case $(\alpha,p)$. From $(i)$, obviously, $s,\alpha \satisfies p \Eq s\bowtie \alpha $ and $s \in V(p) \Eq s\bowtie \T a \alpha$ and $ s \in V(p) \Eq s,\T a \satisfies p$.
\item Case $(\alpha, \T)$. Since $s,\epsilon \satisfies \langle \alpha \rangle \T$ if and only if $s\bowtie \alpha$, which is equivalent to $s\bowtie \T a \alpha$ from $(i)$, it is easy to show that this is equivalent to $s,\T a \satisfies \langle \alpha \rangle \T$.


\item Cases $(\alpha, \lnot \phi)$ and $(\alpha,\phi_1 \lor \phi_2)$ are straightforward.



\item Case $(\alpha, \Ka\phi)$. Suppose that $s,\epsilon \satisfies \langle \alpha \rangle \Ka \phi$. That means $s\bowtie \alpha$ and $s,\alpha \satisfies \Ka \phi$ so there is a pair $(t,\beta)$ such that $s\sim_a t$, $\alpha \view \beta$, $t\bowtie\beta$ and $t,\beta \satisfies \phi$. From $s\bowtie \alpha$, $(\alpha,\phi)\ll (\alpha,\Ka\phi)$, we get by induction hypothesis $s\bowtie \T a \alpha$. And, because $(\beta,\phi) \ll (\alpha,\Ka \phi)$, from $t,\beta \satisfies \phi$, we get $t,\T a \beta \satisfies \phi$ by induction hypothesis. Hence, $s\bowtie \T a \alpha$ and $t,\T a \beta \satisfies \phi$ for some $(t,\beta)$ such that $s\sim_a t$, $\alpha \view \beta$ and $t\bowtie \beta$. But $\alpha \view \beta$ is equivalent to $\T a \alpha \view \T a \beta$, by Lemma \ref{lemmaViewwithTrueA}, and, by $(i)$, $t\bowtie \beta$ is equivalent to $t\bowtie \T a \beta$. Hence, we obtain $s,\T a \alpha \satisfies \Ka \phi$. Therefore, $s,\T a \satisfies \langle \alpha \rangle \Ka \phi$. Similarly for the converse.


\item Case $(\alpha,\langle a \rangle \phi)$.  To apply the induction hypothesis, we use $(\alpha a,\phi) \ll (\alpha, \langle a\rangle \phi)$:
\begin{align*}
s,\epsilon \satisfies \langle \alpha \rangle \langle a \rangle \phi 
&\Eq s\bowtie \alpha \text{ and } s,\alpha \satisfies \langle a \rangle \phi\\
&\Eq s\bowtie \alpha \text{ and } s \bowtie \alpha a \text{ and } s,\alpha a\satisfies \phi\\
&\Eq s\bowtie \T a \alpha \text{ and } s \bowtie \T a \alpha a \text{ and } s,\T a \alpha a\satisfies \phi \text{ by (IH)}\\
&\Eq s\bowtie \T a \alpha \text{ and } s,\T a \alpha \satisfies \langle a \rangle \phi \\
&\Eq s,\T a \satisfies \langle \alpha \rangle \langle a \rangle \phi
\end{align*}

\item Case $(\alpha,\langle \phi_1 \rangle \phi_2)$. To apply the induction hypothesis, we use $(\alpha,\phi_1) \ll (\alpha, \langle \phi_1 \rangle \phi_2)$ and $(\alpha\phi_1,\phi_2) \ll (\alpha, \langle \phi_1 \rangle \phi_2)$:
\begin{align*}
s,\epsilon \satisfies \langle \alpha \rangle \langle \phi_1 \rangle \phi_2 
&\Eq s\bowtie \alpha \text{ and } s,\alpha \satisfies \langle \phi_1 \rangle \phi_2\\
&\Eq s\bowtie \alpha \text{ and } s, \alpha \satisfies \phi_1 \text{ and } s,\alpha \phi_1 \satisfies \phi_ 2\\
&\Eq s\bowtie \T a \alpha \text{ and } s, \T a \alpha \satisfies \phi_1 \text{ and } s,\T a \alpha \phi_1\satisfies \phi_ 2 \text{ by IH}\\
&\Eq s\bowtie \T a \alpha \text{ and } s,\T a \alpha \satisfies \langle \phi_1 \rangle \phi_2 \\
&\Eq s,\T a \satisfies \langle \alpha \rangle \langle \phi_1 \rangle \phi_2
\end{align*}
\end{itemize}
\end{proof}

\begin{corollary}
For all formulas $\phi \in \lang(\lbrace a \rbrace)$, $\satisfies \phi \eq \langle \T a \rangle \phi$.
\end{corollary}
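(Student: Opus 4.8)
The plan is to unfold the definition of $\epsilon$-validity and then reduce everything to the Lemma immediately preceding the statement, instantiated at the empty word. By Definition~\ref{defValidity}, $\satisfies \phi \eq \langle \T a \rangle \phi$ means that $s,\epsilon \satisfies \phi \eq \langle \T a \rangle \phi$ for every model $\ModelM$ and every state $s \in W$. Since $s \bowtie \epsilon$ holds unconditionally, the satisfaction relation is classical at the empty word --- in particular $s,\epsilon \satisfies \lnot\chi$ if and only if $s,\epsilon \nvDash \chi$ --- so $s,\epsilon \satisfies \phi \eq \langle \T a \rangle \phi$ amounts to the genuine biconditional: $s,\epsilon \satisfies \phi$ if and only if $s,\epsilon \satisfies \langle \T a \rangle \phi$.

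First I would evaluate the right-hand side. By Lemma~\ref{LemmaSemanticsDiamondHistory} with $\alpha = \epsilon$ and $\beta = \T a$, we have $s,\epsilon \satisfies \langle \T a \rangle \phi$ if and only if $s \bowtie \T a$ and $s,\T a \satisfies \phi$. Next I would observe that $s \bowtie \T a$ holds for every $s$: indeed $s \bowtie \epsilon$, hence $s,\epsilon \satisfies \T$, hence $s \bowtie \T$; and $|\T|_a = 0 < 1 = |\T|_!$, so $s \bowtie \T a$ by the clause for $\bowtie$ on words of the form $\alpha a$. Consequently $s,\epsilon \satisfies \langle \T a \rangle \phi$ if and only if $s,\T a \satisfies \phi$.

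Finally, I would apply item $(ii)$ of the preceding Lemma with $\alpha = \epsilon$: since $\langle \epsilon \rangle \phi := \phi$, it gives $s,\epsilon \satisfies \phi$ if and only if $s,\T a \satisfies \phi$. Chaining the two equivalences yields $s,\epsilon \satisfies \phi$ if and only if $s,\epsilon \satisfies \langle \T a \rangle \phi$, which is exactly what is required; as $\ModelM$ and $s$ were arbitrary, $\satisfies \phi \eq \langle \T a \rangle \phi$.

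There is no serious obstacle here, the substantive work being done by the Lemma that precedes the statement. The only points requiring a little attention are the two places where the mildly three-valued semantics could bite: one must note that executability is automatic at the empty word, so that $\eq$ behaves as an honest biconditional there, and that $s \bowtie \T a$ is likewise automatic, so that $\langle \T a \rangle \phi$ does not fail for executability reasons independent of $\phi$.
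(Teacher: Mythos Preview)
Your proof is correct and is exactly the argument the paper intends (the paper states the corollary without proof, as an immediate consequence of the preceding lemma). One small remark: to obtain $s \bowtie \T a$ you verify it by hand, but you could equally invoke item $(i)$ of the preceding lemma with $\alpha = \epsilon$, which gives $s \bowtie \epsilon \Leftrightarrow s \bowtie \T a$ directly.
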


We can now prove that in the language with only one agent, there is no formula that expresses the fact that the current history is the empty word $\epsilon$.

\begin{proposition}\label{propNotExistPiSingleAgentCase}
There is no formula $\emp '$ in $\lang(\lbrace a \rbrace)$ such that for all models $\ModelM$, all states $s\in W$ and all words $\alpha \in \Words$, $s,\epsilon \satisfies \langle \alpha \rangle \emp '$ if and only if $\alpha = \epsilon$.
\end{proposition}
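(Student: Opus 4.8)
The plan is to assume that such a formula $\emp'$ exists and derive a contradiction, using as the key tool the Corollary $\satisfies \phi \eq \langle \T a \rangle \phi$ just established for the single-agent language.

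First I would instantiate the defining property of $\emp'$ with the empty word $\alpha = \epsilon$. Since $\langle \epsilon \rangle \emp' = \emp'$ by definition, this gives that $s,\epsilon \satisfies \emp'$ holds for every model $\ModelM$ and every state $s \in W$; in other words, $\emp'$ is $\epsilon$-valid. Next, I would apply the Corollary with $\phi := \emp'$: the equivalence $\emp' \eq \langle \T a \rangle \emp'$ is $\epsilon$-valid, hence (since $s \bowtie \epsilon$ always holds, so $\eq$ behaves classically at the empty word) for every model and state $s$ we have $s,\epsilon \satisfies \emp'$ if, and only if, $s,\epsilon \satisfies \langle \T a \rangle \emp'$. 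Combining this with the $\epsilon$-validity of $\emp'$ obtained above, we conclude that $s,\epsilon \satisfies \langle \T a \rangle \emp'$ for every model $\ModelM$ and every state $s$.

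To finish, observe that $\T.a$ is a (nonempty) word over $\A \cup \lang_{aa}$ --- indeed it is a history, and $\T.a \neq \epsilon$. Applying the defining property of $\emp'$ to $\alpha := \T.a$ would then require $s,\epsilon \nvDash \langle \T a \rangle \emp'$ for all models and states $s$, contradicting the conclusion of the previous paragraph. Hence no such $\emp'$ can exist.

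I do not expect any real obstacle here: the technical content is entirely carried by the two preceding lemmas and the Corollary, which show that prefixing by $\T.a$ is semantically transparent for a single agent; once that machinery is available, the proof is a short diagonal argument. The only point needing a moment's attention is the choice of the witness word: it must be nonempty yet "invisible" in the sense that prepending it preserves $\epsilon$-satisfaction of every formula, and $\T.a$ is precisely such a word in the single-agent setting (whereas in the multi-agent case $\emp$ can distinguish it, which is exactly why Lemma \ref{lemmaPiEpsilonWord} holds there).
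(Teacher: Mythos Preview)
Your proposal is correct and follows essentially the same argument as the paper's proof: assume such an $\emp'$ exists, use $\alpha=\epsilon$ to get $\epsilon$-validity of $\emp'$, apply the Corollary $\satisfies \phi \eq \langle \T a \rangle \emp'$ to obtain $s,\epsilon \satisfies \langle \T a \rangle \emp'$, and then derive a contradiction from $\T.a \neq \epsilon$. The paper's version is slightly terser but the structure and the key witness word $\T.a$ are identical.
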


\begin{proof}
Suppose that there is a formula $\emp' \in \lang(\lbrace a\rbrace)$ such that for all models $\ModelM$, states $s\in W$ and words $\alpha$, $s,\epsilon \satisfies \langle \alpha \rangle \emp'$ if and only if $\alpha = \epsilon$. Let $\ModelM$ be a model and $s\in W$ a state. By hypothesis $s,\epsilon \satisfies \emp'$. But since $\satisfies \phi \eq \langle \T a \rangle \phi$ for all formulas $\phi$, we get $s,\epsilon \satisfies \langle \T a \rangle \emp'$. But $\T a \neq \epsilon$: this contradicts the hypothesis. Hence, there is no such $\emp'$.
\end{proof}

Such a result relates to Lemma \ref{lemmaPiEpsilonWord} and Proposition \ref{propLinkValiditiesforAxiomatisation} that are crucial to implement the method we used to axiomatise AA$^\ast$. Indeed, $\AAstar$ exploits the possibility to go from any history $\alpha$ back to the empty history through formula $\emp$, precisely because $\langle \alpha \rangle \emp$ is satisfied if and only if $\alpha = \epsilon$. This way, we somehow reduced $\ast$-validites to $\epsilon$-validities. However, Proposition \ref{propNotExistPiSingleAgentCase} states that there is no formula in the language with only one agent that can play the same role as $\emp$ in the multi-agent case. Therefore, in the single-agent case, it is not possible to express the fact that the current history is empty. This is why we cannot directly adapt our axiomatisation to the single-agent case. Some other method would need to be employed here and the axiomatisation is left for future work.
We expect that it is sufficient to replace $\emp$ by $[a]\F$ in that case.

\section{Comparison to other works} \label{sectionComparison}


In this section, we relate our semantics to three-valued logics. We also discuss the notion of cut in distributed computing and relate it to that of history and the notion of cut defined in \cite{knight2019reasoning}. Finally, we discuss choices of epistemic modalities.

\paragraph*{Three-valued logic?}

Although $s,\alpha \satisfies \lnot \phi$ implies $s,\alpha \nvDash \phi$,  the converse need not hold: whereas $s,\alpha \nvDash \F$ always holds, $s,\alpha \satisfies \lnot\F$ only if $s\bowtie \alpha$. Indeed, for a state $s$ wherein $p$ is false, we have $s,\alpha \nvDash \F$ and $s,\alpha \nvDash \lnot \F$  because $s\not\bowtie \alpha$. Our semantics therefore has a flavour of a three-valued logic, with values true, false and undefined, as in Kleene three-valued logics \cite{kleene1952introduction,bochvar1981three,bonzio2025modal}. This appears more clearly if we consider $\bowtie$ as a definability relation between pairs $(s,\alpha)$ and formulas $\phi$ such that $(s,\alpha) \bowtie \phi$ if and only if $s\bowtie \alpha$. However, it is a strange kind of three-valued logic. This is why, curiously, our logic is both weak and strong Kleene three-valued logic. Let us explain. Let the three values be $\mathsf{t}$ (true), $\mathsf{f}$ (false) and $\mathsf{u}$ (unknown). Given that $(s, \alpha) \bowtie \phi$ for all $\phi$ if $s \bowtie \alpha$, any formula is either undefined (has value $\mathsf{u}$) or is defined (has value $\mathsf{t}$ or $\mathsf{f}$). We therefore cannot have that $\phi$ is defined and $\psi$ is undefined so that we would have to choose whether their conjunction, or disjunction, or implication, is undefined (in weak Kleene) or defined (in strong Kleene). The semantics is too rough to distinguish weak from strong. Similarly, $K_a \phi$ is defined if and only if $\hat{K}_a \phi$ is defined. The issue of definability in our semantics rather pertains to asynchronous states $(s,\alpha)$ than to formulas, as in Kleene logics. Therefore, our logic still has validities, contrary to Kleene logics. We do not know if there is a three-valued modal logic that has exactly the features of ours.

\paragraph*{Consistent and inconsistent cuts.}

The notion of history is related to that of a \emph{cut} in a distributed system, where agents are processors communicating with each other. A cut represents the global state of a distributed computation at a particular instant, like a snapshot of a running process. It specifies sequences of events among sending or receiving of messages. For each agent we can distinguish the events before and up to the cut from those that come after the cut \cite{lamport1978ordering}. A cut is \emph{inconsistent} if a message has been received before the cut but was only sent after the cut, and otherwise it is consistent  \cite{panangaden1992concurrent,kshemkalyani2011distributed}. 
In our setting, consistent cuts induce histories: a word $\alpha$ is a history if agents are able to receive only messages that have already been sent. But our agreement/executability relation $\bowtie$ provides a more fine-grained notion of consistency than that of cut. For instance, histories $\alpha := p. \lnot K_a p. a. a$ and $\beta:= p. a. \lnot K_a p. a$ result of  consistent cuts, but $\alpha$ is executable, namely in states wherein $p$ is true, whereas $\beta$ cannot be executed: if $s\bowtie . pa$ then $s,p. a\satisfies K_ap$ so $s,p. a \nvDash \lnot K_a p$ and thus $s\not\bowtie p. a. \lnot K_ap$.

In \cite{knight2019reasoning}, asynchronous communication in a distributed system is modeled through sequences of announcements and notions also called cuts that more directly correspond to the cuts in distributed computing. A cut specifies the number of announcements that each agent has received so far: a state is a triple $(s,\sigma,c)$ where $s$ is a point in a Kripke model, $\sigma$ is a sequence of announcements, and \emph{cut} $c$ lists for each agent the announcements in $\sigma$ that she has received so far. When interpreting epistemic formulas, not all triples $(s,\sigma,c)$ are considered but only those corresponding to {\em consistent states}: consistency is a relation between states $s$ and pairs $(\sigma,c)$, which is similar to our executability relation. Roughly, a state is consistent if the announcements it contains were true when they were made. Their notion also does not exactly correspond to the notion of consistent cut in distributed computing. The semantics in \cite{knight2019reasoning} allows inconsistent states to satisfy formulas, as in \cite{AA} but unlike in our semantics.

\paragraph*{Belief or knowledge?}\label{sectionBorK}
In our semantics an agent knows something if it holds for all histories she considers possible on the assumption that she has received all information. The agent does not consider it possible that other messages have been sent that she has not yet received but that other agents may have received. Therefore our notion of asynchronous knowledge is not an {\bf S5} notion of knowledge, in particular it does not satisfy the truth axiom $K_a \phi \imp \phi$. However it satisfies $\emp \imp (K_a\phi \imp \phi)$ (the axiom ($\emp${\bf{T}})). We recall that $\emp$ is defined as $\bigwedge_{a\in \A}[a]\F \land \bigwedge_{a,b\in \A} K_a[b]\F$. When all agents know that they all have received all announcements, knowledge is indeed factual. Without that assumption, our epistemic modality is more like one of consistent belief, which is why in prior publications with a similar semantics such as \cite{AA} the notation $B_a \phi$ was used for that, not $K_a \phi$. But given $\emp \imp (K_a\phi \imp \phi)$ we still believe we proposed an acceptable notion of asynchronous knowledge. In \cite{knight2019reasoning}, a notion of asynchronous knowledge is proposed that satisfies $K_a \phi \imp \phi$ as the accessibility relation not only takes received announcements into account but also possible future announcements---in their setting assuming commonly known protocols that is then a more natural assumption. This has the benefit that knowledge is again standard {\bf S5} knowledge, interpreted with an equivalence relation. However, it comes at a price, both theoretically and conceptually. No axiomatisations are proposed in their work for such asynchronous knowledge. Also, given a semantics where messages are public announcements in the logical language wherein positive information grows (of atoms, knowledge of atoms, etc.) but where ignorance may linger forever unless resolved, {\em an agent can never know that another agent is ignorant}: $K_a\lnot(K_b p \lor K_b \lnot p)$ is unsatisfiable in the \cite{knight2019reasoning} semantics. Obtaining an {\bf S5} notion of knowledge by satisfaction and executability (or agreement, consistency, \dots) relations requires imposing structural restrictions to avoid circularity. The authors of \cite{knight2019reasoning} provide two well-founded solutions to this circularity issue either by imposing conditions on the structure of the model and the model transformations, or by restricting the language.

\section{Conclusion and future research}

We presented a multi-agent epistemic logic of asynchronous announcements with epistemic modalities for asynchronous knowledge, modalities for messages sending that correspond to broadcasting announcements, and modalities for individual reception of sent messages by the agents. Formulas are evaluated with respect to a state of an epistemic model and a prior history of such announcements and receptions. We provided an infinitary axiomatisation {\bf AA}$^\ast$ for the always-validies, the formulas that are true in any model after any history of prior events of sending and receiving messages. 

In future research we wish to axiomatise the single-agent case, to which our method cannot be applied, to add group epistemic modalities such as distributed knowledge, and consider non-public communication.
We also envisage introducing other dynamic modalities to represent arbitrary reception or forgetting information, and investigating variations of our semantics, \emph{e.g.}, where agents receive all unread announcements all at once---as when we check our e-mails---or where messages can be broadcast on different channels that might not be accessible to all agents---which could be an asynchronous adaptation of the logics for data exchange and communication of \cite{baltag2024logics}.
Finally, we wish to propose a semantics wherein agents consider possible that announcements have been sent that they have not received yet, so that knowledge is factual.


\nocite{*}
\printbibliography

\newpage

\section*{Appendix: our validities and those of \cite{AA} are the same}



Let $\bowtie_+$ and $\satisfies_+$ denote the agreement and satisfaction relations defined in Section \ref{sectionSemantics} and $\bowtie_-$ and $\satisfies_-$ those defined in \cite{AA}. In this work, $\bowtie_-$ and $\satisfies_-$ are simultaneously defined for states $s$, \emph{histories} $\alpha$ and formulas $\phi$ as follows:

\begin{tabular}{l}
$
\begin{aligned}
&s \bowtie_- \epsilon       & &\text{always} \\
&s \bowtie_- \alpha a      & &\text{iff} & &s \bowtie_- \alpha \text{ and  } |\alpha|_a < |\alpha|_!\\
&s \bowtie_- \alpha\phi    & &\text{iff} & &s\bowtie_- \alpha \text{ and } s,\alpha\satisfies_-\phi  \\ \\
&s,\alpha \satisfies_- p                         & &\text{iff} & &s \in V(p) \\
&s,\alpha \satisfies_- \T                        & &\text{always}  \\
&s,\alpha \satisfies_- \lnot \phi                & &\text{iff} & &s,\alpha \nvDash_- \phi \\
&s,\alpha \satisfies_- \phi \lor \psi               & &\text{iff} & &s,\alpha \satisfies_- \phi \text{ or } s,\alpha \satisfies_- \psi \\
&s, \alpha \satisfies_- \Ka \phi                     & &\text{iff}  & &t,\beta \satisfies_- \phi \text{ for some } (t, \beta) \in W \times\Hist \text{ s.t. } s\sim_at, \alpha \view \beta, t\bowtie_- \beta \\
&s,\alpha\satisfies_-\langle a\rangle\phi            & &\text{iff} & &|\alpha|_a < |\alpha|_! \text{ and } s,\alpha a\satisfies_-\phi\\
&s,\alpha\satisfies_- \langle\phi\rangle\psi          & &\text{iff} & &s,\alpha\satisfies_- \phi \text{ and } s,\alpha\phi\satisfies_- \psi \\
\end{aligned} $ \\
\end{tabular} \\

Since our semantics is defined not only for histories but for \emph{words} in general, it might appear to be more general than that in \cite{AA}. However, it is is fact more constrained, as Proposition \ref{propExecutabilityIfSatisfaction} shows: the satisfying relation $\satisfies$ is restricted to tuples $((s,\alpha),\phi)$ such that $s\bowtie\alpha$.
Nevertheless, both semantics lead to the very same sets of $\epsilon$- and $\ast$-validities.

\begin{proposition}\label{propEquivalenceTwoSemantics}
For all models $\ModelM$, states $s\in W$, all histories $\alpha\in \Words$ and all formulas $\phi \in \lang$, the following equivalences hold:
\begin{align*}
(1) \qquad &s\bowtie_+ \alpha & &\text{iff} & &\alpha s\bowtie_-\alpha \\
(2) \qquad &s,\epsilon\satisfies_+ \phi & &\text{iff} & &s,\epsilon\satisfies_- \phi\\
(3) \qquad &s,\alpha\satisfies_+ \phi & &\text{iff} & &s\bowtie_+\alpha \text{ and } s,\alpha\satisfies_- \phi
\end{align*}
\end{proposition}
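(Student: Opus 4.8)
The plan is to prove the three equivalences simultaneously by $\ll$-induction on the pair $(\alpha,\phi)$, exactly as the semantics itself is defined. The reason all three must be bundled together is that they are mutually dependent: item $(1)$ for $\alpha a$ and $\alpha\phi$ unfolds to conditions involving $\satisfies$, so it needs $(3)$ at smaller pairs; item $(3)$ for the knowledge case needs $(1)$ (to translate $t\bowtie_+\beta$ into $t\bowtie_-\beta$) and $(3)$ at smaller pairs; and item $(2)$ will follow from $(3)$ together with Proposition~\ref{propExecutabilityIfSatisfaction} (since $s,\epsilon\satisfies_+\phi$ already forces $s\bowtie_+\epsilon$, which always holds, so $(2)$ is just the $\alpha=\epsilon$ instance of $(3)$). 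So in fact I would reduce the proposition to proving $(1)$ and $(3)$, and note $(2)$ as the immediate corollary $\alpha=\epsilon$.

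First I would set up the induction: fix a model $\ModelM$, and assume $(1)$ and $(3)$ hold for all pairs $\ll$-below $(\alpha,\phi)$. For $(1)$, I case on the shape of $\alpha$: if $\alpha=\epsilon$ both sides hold trivially; if $\alpha=\alpha'a$, then $s\bowtie_+\alpha'a$ iff $s\bowtie_+\alpha'$ and $|\alpha'|_a<|\alpha'|_!$, and by IH on $(1)$ (using $\Vert\alpha'\Vert<\Vert\alpha'a\Vert$, hence the pair is $\ll$-below) this is iff $s\bowtie_-\alpha'$ and $|\alpha'|_a<|\alpha'|_!$, i.e. $s\bowtie_-\alpha'a$; the case $\alpha=\alpha'\phi$ is analogous but invokes $(3)$ at $(\alpha',\phi)$ to replace ``$s,\alpha'\satisfies_+\phi$'' by ``$s\bowtie_+\alpha'$ and $s,\alpha'\satisfies_-\phi$'', and one checks this matches the clause $s\bowtie_-\alpha'\phi$ iff $s\bowtie_-\alpha'$ and $s,\alpha'\satisfies_-\phi$ (the extra conjunct $s\bowtie_+\alpha'$ is absorbed since $s\bowtie_-\alpha'$ follows from it by IH). For $(3)$ I case on $\phi$. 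The propositional cases ($p$, $\T$, $\lnot\psi$, $\psi_1\vel\psi_2$) are routine: the left-to-right direction uses Proposition~\ref{propExecutabilityIfSatisfaction} to extract $s\bowtie_+\alpha$, then the definitions differ only by the $s\bowtie_+\alpha$ guard which on the right-hand side we have explicitly; for $\lnot$ one uses that $s\bowtie_+\alpha$ makes $s,\alpha\satisfies_+\lnot\psi$ equivalent to $s,\alpha\nvDash_+\psi$. The dynamic cases $\langle a\rangle\psi$ and $\langle\psi_1\rangle\psi_2$ unfold both semantics by their clauses and apply IH at the pairs $(\alpha a,\psi)$, $(\alpha,\psi_1)$, $(\alpha\psi_1,\psi_2)$ which are all $\ll$-below; one again has to be careful to insert/remove the executability guards consistently, using Lemma~\ref{lemmaExecutabilityPrefix} and Proposition~\ref{propExecutabilityIfSatisfaction} to see that e.g. $s\bowtie_+\alpha a$ implies $s\bowtie_+\alpha$.

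The main obstacle will be the case $\phi=\Ka\psi$. Here $s,\alpha\satisfies_+\Ka\psi$ means $s\bowtie_+\alpha$ and there is $(t,\beta)$ with $s\sim_a t$, $\alpha\view\beta$, $t\bowtie_+\beta$, and $t,\beta\satisfies_+\psi$; I need to match this against $s,\alpha\satisfies_-\Ka\psi$, which asks for $(t,\beta)$ with $s\sim_a t$, $\alpha\view\beta$, $t\bowtie_-\beta$, $t,\beta\satisfies_-\psi$. By the proof of Proposition~\ref{propExecutabilityIfSatisfaction} in the excerpt (the $\Ka$ clause there), $(\beta,\psi)\ll(\alpha,\Ka\psi)$ for every $\beta$ with $\alpha\view\beta$, so the IH applies: $t\bowtie_+\beta$ iff $t\bowtie_-\beta$ by $(1)$, and $t,\beta\satisfies_+\psi$ iff ($t\bowtie_+\beta$ and $t,\beta\satisfies_-\psi$) by $(3)$. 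The only subtlety is that $\satisfies_-$ does not carry the $t\bowtie_-\beta$ guard on its own, but the $\Ka$ clause of $\satisfies_-$ explicitly requires $t\bowtie_-\beta$ as a separate conjunct, so the witness $(t,\beta)$ transfers cleanly in both directions; and the outer guard $s\bowtie_+\alpha$ on the $\satisfies_+$ side is exactly the extra conjunct the right-hand side of $(3)$ demands. Finally, from $(3)$ I obtain $(2)$ by taking $\alpha=\epsilon$ and noting $s\bowtie_+\epsilon$ always holds, and I conclude that the $\epsilon$-validities and, via the $[\alpha]\phi$ characterisation of $\ast$-validity in Definition~\ref{defValidity} together with $(3)$, the $\ast$-validities of the two semantics coincide.
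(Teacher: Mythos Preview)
Your proposal is correct and follows essentially the same route as the paper: a simultaneous $\ll$-induction on $(\alpha,\phi)$, with item~(1) by cases on the shape of $\alpha$, item~(3) by cases on the shape of $\phi$ (the $\Ka$ case handled exactly as you describe, using $(\beta,\psi)\ll(\alpha,\Ka\psi)$ for $\alpha\view\beta$), and item~(2) obtained as the $\alpha=\epsilon$ instance of~(3). Your closing remark about validities coinciding is the content of the subsequent Corollary rather than of the Proposition itself, but that does not affect the proof.
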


\begin{proof}
The proof proceeds by $\ll$-induction on $(\alpha,\phi)$. Let $\ModelM$ be a model, $s\in W$ be a state. We consider a pair $(\alpha,\phi)$ such that for all pairs $(\alpha',\phi')$, if $(\alpha',\phi')\ll(\alpha,\phi)$, then item (1), (2) and (3) hold for $(\alpha,\phi')$. We need to show they also hold for $(\alpha,\phi)$.

We first show (1) by distinguishing cases on $\alpha$.
\begin{itemize}
\item If $\alpha = \epsilon$, by definition, $s\bowtie_+ \epsilon$ and $s\bowtie_- \epsilon$.
\item If $\alpha = \alpha' a$, we have the following equivalences, where the induction hypothesis applies because $(\alpha',\phi)\ll (\alpha' a,\phi)$.
\begin{align*}
s\bowtie_+ \alpha' a &\Eq s\bowtie_+\alpha \text{ and } |\alpha'|_a < |\alpha'|_!\\
					&\Eq s\bowtie_-\alpha' \text{ and } |\alpha'|_a < |\alpha'|_! \text{ by (IH)} \\
					&\Eq s\bowtie_- \alpha' a
\end{align*}
\item If $\alpha = \alpha'\psi$, we have the following equivalences, where the induction hypothesis applies because $(\alpha',\phi)\ll (\alpha' \psi,\phi)$.
\begin{align*}
s\bowtie_+ \alpha' \psi &\Eq s\bowtie_+\alpha' \text{ and } s,\alpha' \satisfies_+ \psi\\
					   &\Eq s\bowtie_-\alpha' \text{ and } s,\alpha'\satisfies_- \psi \text{ by (IH)} \\
					   &\Eq s\bowtie_- \alpha' \psi
\end{align*}
\end{itemize}
Now, we write $\bowtie$ for $\bowtie_-$ and $\bowtie_+$. We omit the proof of the second item as it is elementary. Concerning the third, we distinguish cases on $\phi$, where obvious inductive cases are omitted.

\begin{itemize}
\item If $\phi = p$, obviously, $s,\alpha\satisfies_+ p \Eq s\bowtie\alpha \text{ and } s\in V(p) \Eq s\bowtie\alpha \text{ and } s,\alpha \satisfies_-p$.

\item If $\phi = \T$, by definition $s,\alpha\satisfies_+\T \Eq s\bowtie\alpha \Eq s\bowtie\alpha \text{ and } s,\alpha\satisfies_-\T$.

%
%

\item If $\phi = \M_a\phi'$, the induction hypothesis applies because $(\beta,\phi')\ll(\alpha,\Ka\phi')$ for all $\beta$ such that $\alpha \view \beta$. We then get:
\begin{align*}
	&s,\alpha \satisfies_+ \M_a\phi' \\
\Eq \ &s\bowtie\alpha \text{ and } t,\beta \satisfies_+ \phi' \text{ for some } t,\beta \text{ s.t. } s \sim_a t, \alpha \view \beta \text{ and } t\bowtie\beta \\
\Eq \ &s\bowtie\alpha \text{ and } t\bowtie\beta \text{ and } t,\beta \satisfies_- \phi' \text{ for some } t,\beta \text{ s.t. } s \sim_a t, \alpha \view \beta \text{ and } t\bowtie\beta \text{ by (IH)}\\
\Eq \ &s\bowtie\alpha \text{ and } t\bowtie\beta \text{ and } t,\beta \satisfies_- \phi' \text{ for some } t,\beta \text{ s.t. } s \sim_a t \text{ and } \alpha \view \beta \\
\Eq \ &s\bowtie\alpha \text{ and } t,\beta \satisfies_- \phi' \text{ for some } t,\beta \text{ s.t. } s \sim_a t \text{ and }\alpha \view \beta \\
\Eq \ &s\bowtie\alpha \text{ and } s,\alpha \satisfies_- \M_a\phi'
\end{align*}

\item If $\phi = \dia{a}\phi'$, the induction hypothesis applies because $(\alpha a,\phi')\ll (\alpha,\langle a \rangle \phi')$. This is straightforward.


\item If $\phi = \dia{\phi'}\phi''$, the induction hypothesis applies because $(\alpha,\phi')\ll(\alpha,\langle \phi' \rangle\phi'')$ and $(\alpha \phi',\phi'')\ll (\alpha,\langle \phi' \rangle \phi'')$. We also use the fact that $s\bowtie\alpha\phi'$ if and only if $s\bowtie\alpha$ and $s,\alpha\satisfies_-\phi'$. We then get the following equivalences:
\begin{align*}
&s,\alpha \satisfies_+ \dia{\phi'}\phi''\\
\Eq\ &s,\alpha \satisfies_+ \phi' \text{ and } s,\alpha\phi' \satisfies_+ \phi'' \\
\Eq\ &(s\bowtie\alpha \text{ and } s,\alpha \satisfies_- \phi') \text{ and } (s\bowtie\alpha\phi' \text{ and } s,\alpha\phi' \satisfies_- \phi'') & &\text{by (IH)}\\
\Eq\ &(s\bowtie\alpha \text{ and } s,\alpha \satisfies_- \phi') \text{ and } ((s\bowtie\alpha \text{ and } s,\alpha \satisfies_- \phi') \text{ and } s,\alpha\phi' \satisfies_- \phi'')\\
\Eq\ &s\bowtie\alpha \text{ and } (s,\alpha \satisfies_- \phi' \text{ and } s,\alpha\phi' \satisfies_- \phi'') \\
\Eq\ &s\bowtie\alpha \text{ and } s,\alpha \satisfies_- \dia{\phi'}\phi''
\end{align*}
\end{itemize}
\end{proof}

\begin{corollary}\label{corollSameValidities}
For all formulas $\phi \in \lang$:
\begin{align*}
(1) \quad &\satisfies_+ \phi \quad \text{ if and only if } \quad \satisfies_-\phi\\
(2) \quad &\satisfies^\ast_+ \phi \quad \text{ if and only if } \quad \satisfies^\ast_-\phi.
\end{align*}
\end{corollary}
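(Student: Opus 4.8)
The plan is to obtain both equivalences immediately from Proposition~\ref{propEquivalenceTwoSemantics}, which already establishes the translation between the two semantics at the level of individual state/word pairs; the corollary then only requires unfolding the definitions of $\epsilon$- and $\ast$-validity and quantifying over models and states.

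For item~(1): by definition $\satisfies_+ \phi$ holds iff $s,\epsilon \satisfies_+ \phi$ for every model $\ModelM$ and every state $s \in W$, and symmetrically $\satisfies_- \phi$ holds iff $s,\epsilon \satisfies_- \phi$ for all such $M$, $s$. Proposition~\ref{propEquivalenceTwoSemantics}(2) states precisely that $s,\epsilon \satisfies_+ \phi$ iff $s,\epsilon \satisfies_- \phi$, pointwise; so the two quantified statements are literally the same, and $\satisfies_+\phi$ iff $\satisfies_-\phi$.

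For item~(2): I would first invoke the alternative characterisation of $\ast$-validity (the corollary following Proposition~\ref{propLinkValiditiesforAxiomatisation}), which gives $\satisfies^\ast_+ \phi$ iff $s,\alpha \satisfies_+ \phi$ for every model $M$, state $s$ and word $\alpha$ with $s\bowtie_+ \alpha$; unfolding the \cite{AA} definition the same way yields $\satisfies^\ast_- \phi$ iff $s,\alpha \satisfies_- \phi$ for every $M$, $s$ and word $\alpha$ with $s\bowtie_- \alpha$. By Proposition~\ref{propEquivalenceTwoSemantics}(1) the conditions $s\bowtie_+\alpha$ and $s\bowtie_-\alpha$ are equivalent (and, by Lemma~\ref{lemmaBowtieHistory}, only histories satisfy either, so Proposition~\ref{propEquivalenceTwoSemantics} indeed applies to all the pairs in play); and for any such pair Proposition~\ref{propEquivalenceTwoSemantics}(3) gives, under the standing assumption $s\bowtie_+\alpha$, that $s,\alpha\satisfies_+\phi$ iff $s,\alpha\satisfies_-\phi$. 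Hence the two quantified statements coincide and $\satisfies^\ast_+\phi$ iff $\satisfies^\ast_-\phi$.

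The argument is essentially bookkeeping, so no real obstacle is expected; the one point to be careful about is matching the $\ast$-validity notion of \cite{AA} with the present paper's (given via $[\alpha]\phi$ being $\epsilon$-valid for all words $\alpha$) and checking that it reduces to executability-restricted truth with respect to $\bowtie_-$ — once that correspondence is pinned down, Proposition~\ref{propEquivalenceTwoSemantics} delivers the result directly.
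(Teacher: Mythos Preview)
Your proposal is correct. For item~(1) you do exactly what the paper does. For item~(2) the approaches diverge slightly: the paper observes that when $\alpha$ is not a history $[\alpha]\phi$ is trivially $\epsilon$-valid (since $s\not\bowtie\alpha$), so $\satisfies_+^\ast\phi$ reduces to $\satisfies_+[\alpha]\phi$ for all \emph{histories} $\alpha$, and then simply applies item~(1) to the compound formula $[\alpha]\phi$. You instead work pointwise, passing through the executability-restricted characterisation of $\ast$-validity on both sides and invoking Proposition~\ref{propEquivalenceTwoSemantics}(1) and~(3) for each pair $(s,\alpha)$. The paper's route is marginally slicker because it only needs item~(2) of Proposition~\ref{propEquivalenceTwoSemantics} and the already-established item~(1) of the corollary; your route is more direct at the level of semantic clauses but requires you to check (as you note in your caveat) that the analogue of the executability-restricted characterisation holds for $\satisfies_-$ in the \cite{AA} semantics, which it does.
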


\begin{proof}
$(1)$ is directly obtained from Proposition \ref{propEquivalenceTwoSemantics}. Concerning $(2)$, note that, if $\alpha$ is not a history, then $[\alpha]\phi$ is trivially valid, for all formulas $\phi$. Indeed, for any model $\ModelM$ and any state $s\in W$, if $\alpha$ is not a history, $s\not\bowtie \alpha$, by Proposition \ref{lemmaBowtieHistory} and hence $s, \epsilon
\satisfies [\alpha] \phi$, by definition. Therefore, $\satisfies_+^\ast \phi$ if and only if $\satisfies_+^\ast [\alpha]\phi$ for all \emph{histories} $\alpha$. Now we can conclude from $(1)$ that $\satisfies^\ast_+ \phi$ if and only if $\satisfies^\ast_- \phi$.
\end{proof}

This shows that both semantics define the same sets of validities AA and AA$^\ast$.

\end{document}